\begin{document}
\title{A Fully Abstract Semantics for Value-passing CCS for Trees}
\author{Shichao Liu\inst{1,2}\and Thomas Ehrhard\inst{3}\and Ying Jiang\inst{1}}
\institute{State Key Laboratory of Computer Science, Institute of Software,\\
 Chinese Academy of Sciences, Beijing, China
\and University of Chinese Academy of Sciences,
Beijing, China \\ \texttt{\{liusc,jy\}@ios.ac.cn}
\and
CNRS, IRIF, UMR 8243, Univ Paris Diderot,\\
Sorbonne Paris Cit\'{e}, F-75205 Paris, France\\
\texttt{thomas.ehrhard@pps.univ-paris-diderot.fr}
}
\authorrunning{S. Liu, T. Ehrhard and Y. Jiang}
\titlerunning{A Fully Abstract Semantics for Value-passing CCS for Trees}
\maketitle

\begin{abstract}
This paper provides a fully abstract semantics for value-passing CCS for trees (VCCTS). The operational semantics is given both in terms of a reduction semantics and in terms of a labelled transition semantics. The labelled transition semantics is non-sequential, allowing more than one action occurring simultaneously. We develop the theory of behavioral equivalence by introducing both weak barbed congruence and weak bisimilarity. In particular, we show that weak barbed congruence coincides with weak bisimilarity on image-finite processes.
This is the first such result for a concurrent model with tree structures. Distributed systems can be naturally modeled by means of this graph-based system, and some examples are given to illustrate this.
\end{abstract}

\section{Introduction}\label{Intro}
Nowadays, reactive systems are widely used and are increasingly important in daily life.
Reactive systems continuously act and react in response to stimuli from their environments.
Process algebra, e.g. Hoare's CSP \cite{Hoare1985communicating}, Milner's CCS \cite{Milner1989} and the $\pi$-calculus \cite{milner1992picalculus}, is a branch of theoretical compute science that has been developed to apply to reactive systems. In the past three decades, we have seen a series of efforts in the field of process algebra to find suited primitives and tools for studying various properties of reactive systems. On the one hand, different extensions, by enriching the labelled transition systems (LTSs), have been proposed to describe concurrency and distribution, e.g. \cite{boudol1988non,boudol1994theory,castellani2001process,darondeau1990causal,degano1990partial,kiehn1994comparing}.
On the other hand, concurrency is usually obtained by mapping process algebras to models equipped with concepts of causality and concurrency, e.g. Petri Nets \cite{reisig1985petri} and Event Structures \cite{winskel1989introduction}. However, the latter approaches make it harder to use (or even prevent it from using) the standard tools, e.g. bisimilarity, of process algebras based on LTSs.

The widespread use of distributed system, e.g. web services and wireless networks, has raised new challenges to process algebras. Such systems are physically or logically distributed and highly dynamic because the connections between subcomponents of a system can vary. And the topology of connections can no longer be a simple one, e.g. in CCS all parallel subprocesses always connecting with each other and the communication is global. In wireless systems, the communication is local, i.e. a transmission only spans a limited area.
Graph-rewriting systems are important efforts for this, e.g. \cite{ferrari2006synchronised,konig2001observational}, however, they have the same problems as Petri Nets and Event Structures as explained above. This area is attracting the attention of many researchers, and process algebra is still in an intensely exploratory phase.

Recently, a new theory of CCS for trees (CCTS) \cite{Ehrhard2013ccts,Ehrhard2015dpc} has been proposed. One of the motivations of CCTS is to give a uniform extension of both CCS and top-down tree automata like CCS as a natural extension of finite automata with interactions.
However, more attention is paid to the definition of parallel composition in CCTS. In CCS, each channel (called symbol in this paper) is unary, while in CCTS a symbol is $n$-ary. In CCTS, a prefixed process $f\cdot(P_1,\ldots,P_n)$ can reduce to $n$ processes $P_i$ running concurrently without interactions between each other, by performing $f$. It can also communicate with another dually prefixed process $\overline{f}\cdot(Q_1,\ldots,Q_n)$, reducing to processes $P_i$ and $Q_i$, $i\in\{1,\ldots,n\}$.
To accommodate the recognition of top-down trees \cite{tata2007}, $P_i$ can communicate with $Q_i$, but cannot communicate with $Q_j$ if $i\neq j$, with $i,j\in\{1,\ldots,n\}$. $P_i$ can also communicate with other processes that can communicate with $f\cdot(P_1,\ldots,P_n)$ in a larger system.
And the case for $Q_j$ is similar.
To characterize communicating capacities, graphs are used to define parallel compositions. Subprocesses are located on the vertices of the graph, and two subprocesses can communicate through dual symbols if there exists an edge between them. Therefore, the topology of connections in systems can be captured by this graph-based model.

%

Process equivalence is a central idea in process algebras. Weak barbed congruence \cite{Milner1992barbed} is a natural way of saying that two processes have the same behaviours in all contexts. However, it is hard to prove congruence directly because one has to consider all possible contexts. Instead, more tractable techniques have been used to establish it and a typical tool is {\it bisimilarity}. In CCTS, the authors defined both internal reductions (used to derive weak barbed congruence) and an LTS (used to derive weak bisimilarity). They proved that weak bisimilarity in CCTS implied weak barbed congruence, i.e. {\it soundness}. 
However, the converse direction, i.e. {\it completeness}, was not established.

In \cite{liu2016vccts}, the authors extended the syntax of CCTS to value-passing CCTS (VCCTS), whose symbols can receive and send data values. Just like CCS, adding explicit value passing to CCTS does not increase the expressiveness but improves readability. VCCTS \cite{liu2016vccts} has the same communication constraints as CCTS. The operational semantics is non-sequential and allows more than one action happening in the same transition labelled by multisets, while in CCTS only one action can occur in each transition. As in CCTS, only soundness was proved in \cite{liu2016vccts}, and the completeness was not established either. Soundness alone does not tell us whether bisimilarity are applicable to many processes. For instance, the identity relation is included in weak barbed congruence and it is sound. But it does not show any interesting proof techniques.

In this paper, we propose a fully abstract semantics for VCCTS focusing on canonical processes (cf. Section \ref{canonical_pro}) like CCTS. To obtain the completeness of weak bisimilarity, we provide a less restrictive semantics for VCCTS.
Compared to CCTS, the main difference is that we relax the constraints of the communicating capacities after communications and this is the key to prove completeness. In this paper, an input process $f(x)\cdot(P_1,\ldots,P_n)$ and an output process $\overline{f}(v)\cdot(Q_1,\ldots,Q_n)$ can communicate and reduce to $P_i\{v/x\}$ (obtained by substituting $v$ for $x$ in $P_i$) and $Q_i$, where we use $x$ for data variable and $v$ for data value, and $i\in\{1,\ldots,n\}$.
Different from CCTS, $P_i\{v/x\}$ can communicate with each $Q_j$ for $i,j\in\{1,\ldots,n\}$.
The communicating capacities for $Q_j$ are similar (cf. Section \ref{Semantics}). Therefore, the semantics in this paper is different from the ones in CCTS and VCCTS \cite{liu2016vccts}.
The operational semantics is given both in terms of a reduction semantics and in terms of a labelled transition semantics.
And the latter is non-sequential.
We also introduce weak barbed congruence and weak bisimilarity, and prove the main result of this paper that the two relations coincide on image-finite systems.

For completeness, we need to prove that for all image-finite canonical processes $P$ and $Q$, if $P$ and $Q$ are not weakly bisimilar, then $P$ and $Q$ are not weakly barbed congruent.
The proof usually needs to define a stratification of weak bisimilarity used by induction on number of steps of the weak bisimilarity \cite{Sangiorgi:2011}. 
There are examples in CCTS to show that there are canonical processes $P$ and $Q$ being not weakly bisimilar and they are not weakly barbed congruent either.
However, the completeness of CCTS is very hard (if not impossible) to prove using the method in Theorem \ref{completeness} (cf. Section \ref{Soundness}), because of the communication constraints of CCTS.

This new variation of CCTS and VCCTS \cite{liu2016vccts} retains the advantages of the original CCTS, e.g. tree structures and local connections, etc.
We show that if a tree can be recognized by a top-down tree automaton then the processes corresponding to the tree and the automaton, running concurrently, can reduce to an idle process (Section \ref{top_down_automata}). But the converse direction, which was valid in CCTS, is no longer valid.
Sangiorgi was the first to prove that barbed congruence and bisimilarity coincide in a weak version for both CCS and the $\pi$-calculus in his PhD thesis \cite{sangiorgi1993expressing}.
We embed value-passing CCS (VCCS) \cite{Milner1989} in VCCTS. Thus, a location version of VCCS is obtained similar to \cite{boudol1994theory,castellani2001process} for location versions of CCS, and a fully abstract non-sequential semantics is given to VCCS.
Some examples are also investigated in VCCTS.


The rest of this paper is organized as follows. In Section \ref{Syntax}, we introduce the syntax of VCCTS. In Section \ref{Semantics}, we develop two kinds of operational semantics for VCCTS, and weak barbed congruence and weak bisimilarity are defined. In Section \ref{Soundness}, we prove that the two relations coincide. Section \ref{Discussion} investigates top-down tree automata and VCCS in VCCTS. In Section \ref{Exmaples}, we apply VCCTS to the Alternating Bit Protocol. Related work and conclusions are discussed in Section \ref{Related_work} and Section \ref{Conclusions}, respectively.

For lack of space, all the proofs are omitted, but they can be found in Appendices.
\section{Syntax of Value-passing CCTS}\label{Syntax}
\paragraph{Expression.}
We assume that ${\bf Var}$ is a set of data variables ranged over by $x,y$, etc., and that ${\bf Val}$ is a set of data values ranged over by $v,v_1$, etc. Suppose that ${\bf Exp}$ is a set of arithmetic expressions and at least includes ${\bf Var}$ and ${\bf Val}$, ranged over by $e,e_1$, etc. We also assume that ${\bf BExp}$ is a set of boolean expressions and includes $\{\it false,true\}$, ranged over by $b,b_1$, etc.
Let ${\sf fv}(e)$ and ${\sf fv}(b)$ be the sets of (free) data variables in $e$ and $b$, respectively.  If ${\sf fv}(e) = \emptyset$, we say that $e$ is closed, and similarly for $b$.
The substitution of a value $v$ for a variable $x$ is defined as usual, denoted by $e\{v/x\}$ and $b\{v/x\}$. We use ${\vec x}$ and ${\vec v}$ to range over vectors of variables and values, respectively.

\paragraph{Symbols.}
Let ${\bf N}$ be the set of natural numbers. Let $\Sigma = (\Sigma_n)_{n\in {\bf N}}$ be a signature, i.e. a set of ranked symbols. For each symbol $f\in\Sigma_n(n\ge 1)$, there is a co-symbol $\overline{f}$ such that $\overline{f}\neq f$.
Particularly, we assume that there is only one symbol of arity $0$, denoted by $\ast$. Let $\overline{\ast}= \ast$ for simplicity.
Let $\overline{\Sigma}_n = \Sigma_n\cup \{\overline{f}\mid f\in\Sigma_n\}$, $\overline{\Sigma}=(\overline{\Sigma}_n)_{n\in{\bf N}}$, and $\overline{\overline{f}}=f$ by convention.
We use $I,J$ to stand for subsets of $\overline{\Sigma}$, and use $\overline{I}$ for the set of complements of elements in $I$.

\paragraph{Graphs.}
Let {\sf Loc} be a countable set of locations ranged over by $p,q$, etc.
A finite graph $G=(|G|,\frown_G)$ consists of a finite set of locations $|G|\subseteq {\sf Loc}$ and a set of edges $\frown_G$ which is a symmetric and irreflexive binary relation on $|G|$.
Let $E$ and $F$ be disjoint sets of locations with $p\in E$, and we define $E[F/p]=(E\setminus \{p\})\cup F$.
Let $G$ and $H$ be graphs with $|G|\cap |H|=\emptyset$ and let $p\in |G|$.
We define a graph $G[H/p]= (|G[H/p]|,\frown_{G[H/p]})$ with $|G[H/p]|=|G|[|H|/p]$ and $q\frown_{G[H/p]}r$ if $q\frown_G r$, or $q\frown_H r$, or $q\frown_G p$ and $r\in|H|$, or $r\frown_G p$ and $q\in |H|$.

Let $\mathcal{V}$ be a countable set of process variables ranged over by $X,Y$, etc. $\mathcal{K}$ is a set of constants, ranged over by $A,B$, etc. For each $A\in \mathcal{K}$, we assume that there is an assigned arity, a non-negative integer representing the number of parameters that $A$ takes.
${\bf Pr}$ is the set of processes, defined as follows:
\begin{center}
\begin{tabular}{crl}
  $P,Q$ & $::=$ & $\ast\mid{\bf 0}\mid X\mid f(x)\cdot (P_1,\ldots,P_n)\mid\overline{f}(e)\cdot (P_1,\ldots,P_n)\mid G\langle\Phi\rangle  $ \\
   &  & $ \mid P+Q\mid P\backslash I\mid {\bf if}~b~{\bf then}~P_1~{\bf else}~P_2\mid A({\vec v})$ \\
\end{tabular}
\end{center}
where $X\in \mathcal{V}$, $x\in{\bf Var}$, $e\in{\bf Exp}$, $b\in {\bf BExp}$, $f\in \Sigma_n (n\ge 1)$, $P_1,\ldots,P_n \in {\bf Pr}$, $G$ is a finite graph, $\Phi$ is a function from $|G|$ to ${\bf Pr}$, $I$ is a finite subset of $\Sigma$ and ${\vec v}$ is consistent with the assigned arity of $A$.

The symbol $\ast$ does not pass any value, and
$\ast()\cdot()$ is simply written $\ast$ if there is no confusion.
$\ast$ is an idle process different from the empty sum ${\bf 0}$.
Sum operator $+$ and symbol restriction $\backslash$ have ordinary meanings, and (co-)symbols mentioned in $I\cup\overline{I}$ are bound in $P\backslash I$.
Data variable $x$ is bound in the input process $f(x)\cdot (P_1,\ldots,P_n)$;
data variables appearing in $e$ are free in the output process $\overline{g}(e)\cdot(Q_1,\ldots,Q_m)$.
$G\langle\Phi\rangle$ is the parallel composition of the processes $\Phi(p)$ with $p\in |G|$, and for $p,q\in|G|$, $\Phi(p)$ and $\Phi(q)$ can communicate through dual symbols if $p\frown_G q$. Processes $\Phi(p)$ are called components of $G\langle\Phi\rangle$. A process ${\bf if}~b~{\bf then}~P~{\bf else}~Q$ behaves as $P$ if the value of $b$ is $true$, and as $Q$ otherwise. $A({\vec v})$ denotes a process defined by a possibly recursive definition of the form $A({\vec x})\stackrel{\rm def}{=}P$. Process variable $X$ will be used in the definition of context later.

For substitutions, let ${\sf fv}(P)$ represent the free data variables in $P$, and $P$ is {\it data closed} if ${\sf fv}(P)=\emptyset$.
$P\{v/x\}$ is the result of substituting $v$ for every free occurrence of $x$ in $P$, and similarly for $P\{{\vec v}/{\vec x}\}$.
$Q[P/X]$ represents the substitution of $P$ for every free occurrence of $X$ in $Q$. 
Given $P=G\langle\Phi\rangle$ and $Q=H\langle\Psi\rangle$ with $p\in |G|$ and $|G|\cap |H|=\emptyset$, $P[Q/p]$ represents the process $G[H/p]\langle\Phi^{\prime}\rangle$ with $\Phi^{\prime}(p^{\prime}) = \Phi(p^{\prime})$ for $p^{\prime}\notin |H|$ and $\Phi^{\prime}(p^{\prime}) = \Psi(p^{\prime})$ for $p^{\prime}\in |H|$. In general, a substitution may require $\alpha$-conversions on data variables and symbols.

Given graphs $G$ and $H$ with $|G|\cap |H| = \emptyset$ and $D\subseteq |G|\times|H|$, we define a new graph $K= G \oplus_D H$ with $|K|=|G|\cup|H|$ and $p\frown_K q$ if $p\frown_G q$ or $p\frown_H q$ or $(p,q)\in D$ for any $p,q\in|K|$. When $D=\emptyset$, we let $G\oplus H = G\oplus_D H$.
Given $P=G\langle\Phi\rangle\backslash I$ and $Q=H\langle\Psi\rangle\backslash J$ with $|G|\cap |H| = \emptyset$, $I\cap J = \emptyset$ (always possible with $\alpha$-conversions on (co-)symbols), and  $D\subseteq |G|\times|H|$, we define process $P\oplus_D Q$ as
$(G\oplus_D H)\langle\Phi\cup\Psi\rangle\backslash (I\cup J)$. When $D=|G|\times|H|$, $P\oplus_D Q$ is written as $P\mid Q$ for simplicity. We write $P\oplus Q$ for $P\oplus_{\emptyset} Q$.
More generally, $\oplus{\vec P}$ stands for  $P_1\oplus\cdots\oplus P_n$ when ${\vec P} = (P_1,\ldots,P_n)$. When we consider $P_1,\ldots,P_n$ together, we always assume that their associated graphs are pairwise disjoint.

\subsection{Canonical Processes}\label{canonical_pro}
Roughly speaking, a process is canonical if all sums in it are guarded.
Formally, we define {\it canonical processes} ({\sf CP}), {\it canonical guarded sums} ({\sf CGS}) and {\it recursive canonical guarded sums} ({\sf RCGS}) following  \cite{Ehrhard2013ccts} by mutual induction as follows:
\begin{center}\small
$\infer[]{X\in {\sf CP}}{X\in \mathcal{V}}$ $\infer[]{A({\vec v})\in {\sf CP}}{P\in {\sf CP}~A({\vec x})\stackrel{{\rm def}}{=}P}$ $\infer[]{G\langle\Phi\rangle\in{\sf CP}}{\Phi: |G|\rightarrow {\sf RCGS}}$ $\infer[]{P\backslash I\in {\sf CP}}{P\in {\sf CP}\quad I\subseteq \Sigma}$ $\infer[]{{\bf 0}\in {\sf CGS}}{}$ $\infer[]{\ast\in {\sf CGS}}{}$ \\
$\infer[]{f(x)\cdot(P_1,\ldots,P_n)\in {\sf CGS}}{x\in {\bf Var}\quad f\in\Sigma_n\quad P_1,\ldots,P_n\in{\sf CP}}$\qquad
$\infer[]{\overline{f}(e)\cdot(Q_1,\ldots,Q_n)\in {\sf CGS}}{e\in {\bf Exp}\quad f\in\Sigma_n\quad Q_1,\ldots,Q_n\in{\sf CP}}$\\
$\infer[]{S_1+S_2\in{\sf CGS}}{S_1,S_2\in{\sf CGS}}$ $\infer[]{{\bf if}~ b ~{\bf then}~ S_1~ {\bf else}~ S_2\in{\sf CGS}}{b\in{\bf BExp}~~ S_1,S_2\in{\sf CGS}}$
$\infer[]{S\in {\sf RCGS}}{S\in {\sf CGS}}$ $\infer[]{A({\vec v}) \in {\sf RCGS}}{S\in {\sf RCGS} ~~ A({\vec x})\stackrel{{\rm def}}{=}S}$
\end{center}
If $P= G\langle\Phi\rangle\backslash I$ (or $P= G\langle\Phi\rangle$) is a canonical process, we denote $|P| = |G|$. Meanwhile, for $p\in |G|$, let $P(p)$ represent $\Phi(p)$ and let $\frown_P$ represent $\frown_G$. A recursive process can also be built by $A({\vec x})\stackrel{{\rm def}}{=}P[A({\vec v})/X]$, e.g. $A_1\stackrel{\rm def}{=}f(x)\cdot(X)[A_1/X] = f(x)\cdot(A_1)$.
\begin{lemma}\label{lemma:cs}
If $R$ and $P$ are canonical processes, then $R[P/X]$ is a canonical process.
If $R$ is a (recursive) canonical guarded sum, so is $R[P/X]$.
\end{lemma}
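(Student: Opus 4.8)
The plan is to prove both statements simultaneously by a single structural induction on the definition of canonical processes, since \textsf{CP}, \textsf{CGS} and \textsf{RCGS} are defined by mutual induction; so in fact I would strengthen the statement slightly to a conjunction covering all three classes: if $R\in\textsf{CP}$ then $R[P/X]\in\textsf{CP}$, if $R\in\textsf{CGS}$ then $R[P/X]\in\textsf{CGS}$, and if $R\in\textsf{RCGS}$ then $R[P/X]\in\textsf{RCGS}$, under the standing hypothesis that $P\in\textsf{CP}$. The induction is on the derivation of membership of $R$ in one of the three classes using the inference rules displayed just before the lemma.

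I would then go through the rules one by one. For the base cases: if $R=Y\in\mathcal V$ with $Y\neq X$, then $R[P/X]=Y\in\textsf{CP}$; if $Y=X$, then $R[P/X]=P\in\textsf{CP}$ by hypothesis. The cases $R={\bf 0}$ and $R=\ast$ are immediate since they contain no free occurrence of $X$, so $R[P/X]=R$, which is already in \textsf{CGS} (hence in \textsf{RCGS}, hence we are done for the relevant class). For the inductive cases in \textsf{CGS}: if $R=f(x)\cdot(R_1,\dots,R_n)$ with each $R_i\in\textsf{CP}$, then by the induction hypothesis each $R_i[P/X]\in\textsf{CP}$, and since substitution commutes with the prefix constructor (possibly after an $\alpha$-conversion on $x$ to avoid capture, which is harmless as $x\in{\bf Var}$ and $X\in\mathcal V$ belong to disjoint name spaces), $R[P/X]=f(x)\cdot(R_1[P/X],\dots,R_n[P/X])\in\textsf{CGS}$; the output-prefix case $R=\overline f(e)\cdot(R_1,\dots,R_n)$ is identical. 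The cases $R=S_1+S_2$ and $R={\bf if}~b~{\bf then}~S_1~{\bf else}~S_2$ with $S_1,S_2\in\textsf{CGS}$ follow because substitution distributes over $+$ and the conditional and the induction hypothesis gives $S_i[P/X]\in\textsf{CGS}$. For \textsf{CP}: the case $R=G\langle\Phi\rangle$ with $\Phi\colon|G|\to\textsf{RCGS}$ gives $R[P/X]=G\langle\Phi'\rangle$ where $\Phi'(p)=\Phi(p)[P/X]$, and by the induction hypothesis on the \textsf{RCGS} clause each $\Phi'(p)\in\textsf{RCGS}$, so $R[P/X]\in\textsf{CP}$; the case $R=R'\backslash I$ with $R'\in\textsf{CP}$, $I\subseteq\Sigma$ is handled by the induction hypothesis together with the fact that substitution of a process for a process variable commutes with symbol restriction (after $\alpha$-converting the bound (co-)symbols in $I$ away from those occurring in $P$). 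For the \textsf{RCGS} clauses: $S\in\textsf{CGS}\Rightarrow S\in\textsf{RCGS}$ is subsumed by the \textsf{CGS} case, and $A(\vec v)\in\textsf{RCGS}$ coming from $A(\vec x)\stackrel{\rm def}{=}S$ with $S\in\textsf{RCGS}$: here $A(\vec v)[P/X]=A(\vec v)$ because $\vec v$ are data values (containing no process variable) and the body of the definition is not affected by the substitution at the use site, so the result is again in \textsf{RCGS}; the analogous clause $A(\vec v)\in\textsf{CP}$ is the same.

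The main obstacle, such as it is, is purely bookkeeping: making precise that the substitution $R[P/X]$ commutes with each syntactic constructor and that any required $\alpha$-conversions (on the bound data variable $x$ in an input prefix, or on the bound (co-)symbols of a restriction $\backslash I$) do not take us outside the canonical fragment. Since the binders that could capture are over data variables and over symbols, which are disjoint from the process variables being substituted, no capture can in fact occur for the process-variable substitution, so these $\alpha$-conversions are only needed to keep side conditions like $|G|\cap|H|=\emptyset$ or $I\cap J=\emptyset$ satisfiable in later uses, and are irrelevant here; I would state this once and not belabor it. A secondary subtlety is the treatment of recursive definitions $A(\vec x)\stackrel{\rm def}{=}P$: one must observe that the grammatical status (\textsf{CP} vs.\ \textsf{RCGS}) of $A(\vec v)$ depends only on the body $P$ and not on the actual parameters $\vec v$, and that substituting a process for $X$ at an occurrence $A(\vec v)$ does nothing because $\vec v\in{\bf Val}^{*}$ carries no process variable — so the recursion case is trivial rather than requiring a coinductive or fixpoint argument. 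Everything else is a routine case analysis following the shape of the inference rules.
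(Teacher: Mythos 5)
Your proof is correct and is exactly the argument the paper intends: the paper's own proof is the one-line ``by induction on $R$'', and your case analysis over the mutually inductive rules for {\sf CP}, {\sf CGS} and {\sf RCGS} (with the observations that $X\in\mathcal V$ cannot be captured by data-variable or symbol binders, and that $A(\vec v)[P/X]=A(\vec v)$) is the standard elaboration of that induction. No gaps; nothing further is needed.
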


{\it In the rest of this paper, we only consider data-closed canonical processes, and let ${\sf Proc}$ represent this set.}

We assume the existence of an evaluation function ${\sf eval}$ for the closed expressions in {\bf Exp} and {\bf BExp}.
For each recursive canonical guarded sum $S$, there is a canonical guarded sum ${\sf cs}(S)$ defined by:\\
$${\sf cs}(S)=
  \left\{
    \begin{array}{ll}
      S & \hbox{if } S = f(x)\cdot(P_1,\ldots,P_n) + S_3, \hbox{ or } S = {\bf 0}, \hbox{ or } S = \ast,\\
        & \hbox{\quad or }  S = \overline{f}(e)\cdot(P_1,\ldots,P_n) + S_3;\\
      {\sf cs}(S_1) & \hbox{if } S=({\bf if}~b~{\bf then}~S_1~{\bf else}~S_2) + S_3 \hbox{ with } {\sf eval}(b) = {\it true};\\
      {\sf cs}(S_2) & \hbox{if } S=({\bf if}~b~{\bf then}~S_1~{\bf else}~S_2) + S_3 \hbox{ with } {\sf eval}(b) = {\it false};\\
      {\sf cs}(T\{{\vec v}/{\vec x}\}) & \hbox{if } S=A({\vec v}) \hbox{ and } A({\vec x})\stackrel{\rm def}{=}T.\\
    \end{array}
  \right.
$$
From the rule if $S_1,S_2\in{\sf CGS}$ then $S_1+S_2 \in{\sf CGS}$, it is obvious that the terms in summations are always guarded canonical sums.
Because we can reorder the terms in a summation (commutativity), we just write a prefixed process on the left side of $+$.
From the definition of ${\sf RCGS}$, one can easily see that the function is well defined.

\section{Operational Semantics}\label{Semantics}
\subsection{Internal Reduction}

The internal reduction over ${\sf Proc}$ is of the form $P\xrightarrow[]{}P^{\prime}$, defined in Fig. \ref{Internal_Reduction}.
In (R-React), for $p,q\in |P|$, $p\frown_P q$ means that processes $P(p)$ and $P(q)$ can interact on dual symbols, where
\begin{itemize}
  \item ${\sf cs}(P(p))$ is of the form $f(x)\cdot(P_1,\ldots,P_n) + S$;
  \item ${\sf cs}(P(q))$ is of the form $\overline{f}(e)\cdot(Q_1,\ldots,Q_n)+T$ and ${\sf eval}(e)=v$.
\end{itemize}
Then, $P$ can reduce to $P^{\prime}$, where $P^{\prime}$ is defined as follows: first, $|P^{\prime}|=(|P|\setminus \{p,q\})\cup \bigcup_{i=1}^n|P_i\{v/x\}|\cup \bigcup_{i=1}^n|Q_i|$; then, $\frown_{P^{\prime}}$ is the least symmetric relation on $|P^{\prime}|$ such that, for any $p^{\prime},q^{\prime}\in|P^{\prime}|$, $p^{\prime}\frown_{P^{\prime}} q^{\prime}$ if one of the following cases is satisfied:
        \begin{itemize}
          \item $p^{\prime}\frown_{P_i\{v/x\}} q^{\prime}$ or
                 $p^{\prime}\frown_{Q_i} q^{\prime}$ for some $i=1,\ldots,n$ \hfill $(a)$
          \item $p^{\prime}\in \bigcup_{i=1}^n|P_i\{v/x\}|$ and
                $q^{\prime}\in \bigcup_{i=1}^n|Q_i|$ \hfill $(b)$
          \item $\{p^{\prime},q^{\prime}\}\nsubseteq \bigcup_{i=1}^n|P_i\{v/x\}|
                 \cup\bigcup_{i=1}^n|Q_i|$ and $\lambda(p^{\prime}) \frown_P
                 \lambda(q^{\prime})$ \hfill $(c)$
        \end{itemize}
where $\lambda:|P^{\prime}|\rightarrow |P|$ is a {\it residual function}
        satisfying $\lambda(p^{\prime})=p$ if $p^{\prime}\in \bigcup_{i=1}^n|P_i\{v/x\}|$, $\lambda(p^{\prime})=q$ if $p^{\prime}\in \bigcup_{i=1}^n|Q_i|$ and $\lambda(p^{\prime})=p^{\prime}$ otherwise; at last, $P^{\prime}(p^{\prime})=P_i\{v/x\}(p^{\prime})$ if $p^{\prime}\in|P_i\{v/x\}|$, $P^{\prime}(p^{\prime})=Q_i(p^{\prime})$ if $p^{\prime}\in |Q_i|$
        and $P^{\prime}(p^{\prime})= P(p^{\prime})$ if $p^{\prime} \notin \bigcup_{i=1}^n|P_i\{v/x\}|\cup \bigcup_{i=1}^n|Q_i|$, with $i\in \{1,\ldots,n\}$.

The connection between $p$ and $q$ in $P$ is inherited by the vertices in $|P_i\{v/x\}|$ and $|Q_j|$ (cf. (b)).
The connections between $p$ and other vertices of $P$, distinct from $q$, are
inherited by the vertices in $|P_i\{v/x\}|$, and similarly for $q$ and $|Q_i|$ (cf. (c)). Rules (R-Res) and (R-Con) model restriction and constant cases as usual, respectively.
Let $\xrightarrow[]{}^\ast$ denote the reflexive and transitive closure of $\xrightarrow[]{}$.
\begin{figure}[!tp]\small
$$\infer[\mbox{(R-React)}]{P\xrightarrow[]{}P^{\prime}}{p,q\in|P|~~P(q),P(p)\in {\sf RCGS}~~p\frown_P q}$$
$$\infer[\mbox{(R-Res)}]{P\backslash I \xrightarrow[]{} P^{\prime}\backslash I}{P\xrightarrow[]{} P^{\prime}}\quad
\infer[\mbox{(R-Con)}]{A({\vec v})\xrightarrow[]{}P^{\prime}}{P\{{\vec v}/{\vec x}\}\xrightarrow[]{}P^{\prime}~~A({\vec x})\stackrel{{\rm def}}{=}P}$$
  \caption{Internal Reductions}\label{Internal_Reduction}
\end{figure}
\subsection{Weak Barbed Congruence}
To endow VCCTS with a non-sequential semantics, it could express more than one action occurring simultaneously.
Following Milner and Sangiorgi \cite{Milner1992barbed}, we use {\it barb} to describe the observable information.
\begin{definition}[\bf Barb for Recursive Canonical Guarded Sum]
Let $f\in \overline{\Sigma}$ and $P\in{\sf RCGS}$. We say that $f$ is a barb of $P$, written $P\downarrow_{f}$, if one of the following holds:
\begin{itemize}
  \item $f = g$, $g\in \Sigma$ and ${\sf cs}(P)$ is of the form $g(x)\cdot(P_1,\ldots,P_n) + S$;
  \item $f = \overline{g}$, $g\in \Sigma$ and ${\sf cs}(P)$ is of the form $\overline{g}(e)\cdot(P_1,\ldots,P_n) + S$.
\end{itemize}
\end{definition}
\begin{definition}[\bf Barb for Canonical Process]
We say that a finite subset $B$ of $\overline{\Sigma}$ is a barb of a canonical process $Q$, written $Q\downarrow_B$, if there exist distinct locations $q_i \in |Q|$, such that $Q(q_i)\downarrow_{f_i}$ for each $f_i\in B$ and, moreover, $f_i\notin I$ and $\overline{f_i}\notin I$ if $Q$ is of the form $P\backslash I$.
\end{definition}

\begin{example}[Barbs]
Let $f,g\in\Sigma_2$ with $f\neq g$ and process $P = \overline{f}(3)\cdot(\ast,\ast)\mid \overline{g}(4)\cdot(\ast, \ast)$. The associated graph of $P$ is a complete graph with two vertices, $\{1,2\}$, and ${\sf cs} (P(1)) = \overline{f}(3)\cdot(\ast,\ast)$ and ${\sf cs}(P(2)) = \overline{g}(4)\cdot(\ast,\ast)$. We have $P(1)\downarrow_{\overline{f}}$ and $P(2)\downarrow_{\overline{g}}$. Thus, $P\downarrow_{\{\overline {f}\}}$, $P\downarrow_{\{\overline{g}\}}$ and $P\downarrow_{\{\overline{f},\overline{g}\}}$.
\qed\end{example}

\begin{definition}[\bf Weak Barbed Bisimulation]
A binary relation $\mathcal{B}$ on ${\sf Proc}$ is a weak barbed bisimulation if it is symmetric and whenever $(P,Q)\in \mathcal{B}$ the following conditions are satisfied:
\begin{itemize}
  \item for any $P^{\prime} \in {\sf Proc}$, if $P\xrightarrow[]{}^\ast P^{\prime}$, then there exists $Q^{\prime}$ such that $Q\xrightarrow[]{}^{\ast}Q^{\prime}$ and $(P^{\prime},Q^{\prime})\in\mathcal{B}$;
  \item for any $P^{\prime} \in {\sf Proc}$ and any finite set $B\subseteq\overline{\Sigma}$, if $P\xrightarrow[]{}^{\ast}P^{\prime}$ and $P^{\prime}\downarrow_{B}$, then there exists $Q^{\prime}\in {\sf Proc}$ such that $Q\xrightarrow[]{}^{\ast}Q^{\prime}$ and   $Q^{\prime}\downarrow_{B}$.
\end{itemize}
Weak barbed bisimilarity, written $\stackrel{\bullet}{\approx}$, is the union of all weak barbed bisimulations.
\end{definition}

\begin{lemma}\label{lemma:barbbisim}
$\stackrel{\bullet}{\approx}$ is an equivalence relation.
\end{lemma}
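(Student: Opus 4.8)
The plan is to verify reflexivity, symmetry, and transitivity, in each case by exhibiting a weak barbed bisimulation and invoking the fact that $\stackrel{\bullet}{\approx}$, being the union of all such relations, contains it. Two routine observations will be used throughout: a union of symmetric relations is symmetric; and $\stackrel{\bullet}{\approx}$ itself satisfies both transfer clauses of the definition, since if $(P,Q)\in\stackrel{\bullet}{\approx}$ then the pair lies in some weak barbed bisimulation $\mathcal{B}$, and the answering process supplied by $\mathcal{B}$ again lies in $\mathcal{B}\subseteq\stackrel{\bullet}{\approx}$.

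For reflexivity, I would check that the identity relation $\mathcal{I}=\{(P,P)\mid P\in{\sf Proc}\}$ is a weak barbed bisimulation: it is symmetric, and for $(P,P)\in\mathcal{I}$ both clauses are discharged by answering any challenge $P\xrightarrow[]{}^{\ast}P'$ (respectively $P\xrightarrow[]{}^{\ast}P'$ with $P'\downarrow_B$) by $P'$ itself, using reflexivity of $\xrightarrow[]{}^{\ast}$. Hence $\mathcal{I}\subseteq\stackrel{\bullet}{\approx}$. Symmetry of $\stackrel{\bullet}{\approx}$ is immediate from the first observation above.

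For transitivity, I would take the relational composition $\mathcal{R}=\{(P,R)\mid \exists Q.\ (P,Q)\in\stackrel{\bullet}{\approx}\ \text{and}\ (Q,R)\in\stackrel{\bullet}{\approx}\}$ and prove it is a weak barbed bisimulation; then $\mathcal{R}\subseteq\stackrel{\bullet}{\approx}$ gives transitivity, while symmetry of $\mathcal{R}$ follows from symmetry of $\stackrel{\bullet}{\approx}$. Fix $(P,R)\in\mathcal{R}$ with intermediate process $Q$. For the reduction clause, a challenge $P\xrightarrow[]{}^{\ast}P'$ is matched by first using $P\stackrel{\bullet}{\approx}Q$ to obtain $Q\xrightarrow[]{}^{\ast}Q'$ with $P'\stackrel{\bullet}{\approx}Q'$, then using $Q\stackrel{\bullet}{\approx}R$ to obtain $R\xrightarrow[]{}^{\ast}R'$ with $Q'\stackrel{\bullet}{\approx}R'$, so that $(P',R')\in\mathcal{R}$. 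For the barb clause, a challenge $P\xrightarrow[]{}^{\ast}P'$ with $P'\downarrow_B$ yields as before some $Q'$ with $Q\xrightarrow[]{}^{\ast}Q'$ and $P'\stackrel{\bullet}{\approx}Q'$; applying the barb clause of $P'\stackrel{\bullet}{\approx}Q'$ to the zero-step reduction $P'\xrightarrow[]{}^{\ast}P'$ gives $Q\xrightarrow[]{}^{\ast}Q'\xrightarrow[]{}^{\ast}Q''$ with $Q''\downarrow_B$, and then the barb clause of $Q\stackrel{\bullet}{\approx}R$ delivers $R\xrightarrow[]{}^{\ast}R'$ with $R'\downarrow_B$.

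The only subtle point is this last step: matching a barb does not come with a matching pair in the relation, so to push the observation $P'\downarrow_B$ through the second bisimulation one must first re-derive it at $Q'$ via a zero-step challenge — which is exactly what the reflexive and transitive closure in the definition makes available. Everything else is bookkeeping.
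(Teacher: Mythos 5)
Your proof is correct and follows the same (standard) route the paper takes, which it leaves as "straightforward from the definition": identity relation for reflexivity, union of symmetric relations for symmetry, and relational composition for transitivity. You have also correctly handled the one genuinely non-trivial point — that the barb clause does not return a related pair, so in the transitivity argument the barb at $P'$ must be re-derived at $Q'$ via a zero-step challenge before being pushed through the second bisimulation.
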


We intend to investigate an important relation in weak barbed bisimilarity, i.e. weak barbed congruence with respect to one-hole contexts.
Given a process variable $Y$, a $Y${\it -context} is a canonical process $R$ containing only one free occurrence of $Y$, and $Y$ does not occur in any subprocess of $R$ of a recursive form $A({\vec x})\stackrel{\rm def}{=}R^{\prime}[A({\vec v})/X]$. A relation $\mathcal{R}\subseteq {\sf Proc}\times{\sf Proc}$ is a congruence if it is an equivalence and for any $Y${\it -context} $R$, $(P, Q)\in \mathcal{R}$ implies $(R[P/Y],R[Q/Y])\in \mathcal{R}$.
\begin{proposition}\label{prop:congruence}
For any equivalence $\mathcal{R}\subseteq {\sf Proc}\times{\sf Proc}$, there exists a largest congruence $\overline{\mathcal{R}}$ contained in $\mathcal{R}$. This relation is characterized by $(P,Q)\in \overline{\mathcal{R}}$ if and only if for any $Y$-context $R$ one has
$(R[P/Y],R[Q/Y])\in \mathcal{R}$.
\end{proposition}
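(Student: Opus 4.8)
The plan is to promote the stated characterization to a definition and then verify each clause. Set
$$\overline{\mathcal{R}} \;=\; \{(P,Q)\in{\sf Proc}\times{\sf Proc} \mid (R[P/Y],R[Q/Y])\in\mathcal{R}\ \text{for every}\ Y\text{-context}\ R\}.$$
I would establish, in order: (i) $\overline{\mathcal{R}}$ is an equivalence; (ii) $\overline{\mathcal{R}}\subseteq\mathcal{R}$; (iii) $\overline{\mathcal{R}}$ is a congruence; (iv) every congruence contained in $\mathcal{R}$ is contained in $\overline{\mathcal{R}}$. Together, (i) and (iii) make $\overline{\mathcal{R}}$ a congruence, (ii) places it inside $\mathcal{R}$, and (iv) makes it the largest such — which in particular gives uniqueness.

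Steps (i), (ii) and (iv) are routine. For (ii), the process variable $Y$ is itself a $Y$-context and $Y[P/Y]=P$, so $(P,Q)\in\overline{\mathcal{R}}$ forces $(P,Q)\in\mathcal{R}$. For (i), reflexivity and symmetry of $\overline{\mathcal{R}}$ are inherited clause-by-clause from those of $\mathcal{R}$, and transitivity follows since if, for every $Y$-context $R$, both $(R[P/Y],R[Q/Y])$ and $(R[Q/Y],R[S/Y])$ lie in $\mathcal{R}$, then so does $(R[P/Y],R[S/Y])$ by transitivity of $\mathcal{R}$. For (iv), if $\mathcal{S}$ is a congruence with $\mathcal{S}\subseteq\mathcal{R}$ and $(P,Q)\in\mathcal{S}$, then $(R[P/Y],R[Q/Y])\in\mathcal{S}\subseteq\mathcal{R}$ for every $Y$-context $R$ by the congruence property of $\mathcal{S}$, whence $(P,Q)\in\overline{\mathcal{R}}$.

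The substantive step is (iii). Given $(P,Q)\in\overline{\mathcal{R}}$ and an arbitrary $Y$-context $S$, I must show $(S[P/Y],S[Q/Y])\in\overline{\mathcal{R}}$; unfolding the definition, this amounts to $(R[S[P/Y]/Y],\,R[S[Q/Y]/Y])\in\mathcal{R}$ for every $Y$-context $R$. The key observation is that the composite $R[S/Y]$ is itself a $Y$-context: it is a canonical process by Lemma \ref{lemma:cs}, it has one free occurrence of $Y$ (the one coming from $S$, since $R$ has a single free $Y$), and that occurrence lies in no recursive subprocess — the hole of $R$ is not inside a recursive subprocess of $R$ (else $R$'s free $Y$ would be), and the free $Y$ of $S$ is not inside a recursive subprocess of $S$, so it sits in no recursive subprocess of $R[S/Y]$. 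Using associativity of substitution, $R[S[P/Y]/Y] = (R[S/Y])[P/Y]$ and likewise for $Q$, so applying the defining property of $\overline{\mathcal{R}}$ to the $Y$-context $R[S/Y]$ yields exactly what is needed. I expect the only real friction to be this bookkeeping about $Y$-contexts — that the side condition on recursive subprocesses is stable under composition, and that the associativity identity for iterated substitution holds (up to the $\alpha$-conversions on data variables and (co-)symbols noted after Lemma \ref{lemma:cs}); the rest is a direct unwinding of the definitions.
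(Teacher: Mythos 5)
Your proof is correct, and it reaches the same destination as the paper's by a slightly different route. The paper first establishes the \emph{existence} of a largest congruence abstractly (the identity relation is a congruence contained in $\mathcal{R}$, and the class of congruences contained in $\mathcal{R}$ is closed under union/transitive closure), and only then identifies it with the set $\mathcal{E}$ of pairs closed under all $Y$-contexts by a double inclusion; crucially, the paper simply \emph{asserts} that $\mathcal{E}$ is a congruence. You instead take the characterization as the definition and verify all four properties directly, which makes your existence argument constructive and forces you to prove the one genuinely non-routine fact that the paper leaves implicit: closure of the characterization set under contexts, via the observation that $R[S/Y]$ is again a $Y$-context (canonicity by Lemma~\ref{lemma:cs}, single free occurrence of $Y$, stability of the no-recursive-subprocess side condition) together with associativity of substitution, $R[S[P/Y]/Y]=(R[S/Y])[P/Y]$. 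That composition-of-contexts argument is exactly the content the paper's phrase ``let $\mathcal{E}$ be a congruence defined by\dots'' is silently relying on, so your version is the more complete of the two; the paper's version buys a shorter write-up at the cost of that unstated lemma.
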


\begin{definition}
Processes $P$ and $Q$ are weakly barbed congruent if $R[P/Y]\stackrel{\bullet}{\approx}R[Q/Y]$ for every $Y$-context $R$, denoted by $P \cong Q$.
\end{definition}

$\cong$ is the largest congruence included in $\stackrel{\bullet}{\approx}$ by Proposition \ref{prop:congruence}.

\subsection{Localized Labelled Transition Systems}
There are {\it early} semantics and {\it late} semantics, according to the time when the receiving of a value takes place in an input transition. In this paper, we adopt early semantics. 
To reflect the concurrent/distributed information in syntax, we add information of locations to transitions, obtaining {\it localized labelled transition systems} (LLTSs) over ${\sf Proc}$. In an LLTS, unrelated actions (see Definition \ref{Def_unrelated})
are allowed to happen in the same transition.

Let $Act=\{fv,\overline{f}v\mid v\in {\bf Val}, f\in \Sigma\}$ be the set of actions, ranged over by $\alpha$, $\alpha_1$, etc.
Given $\alpha = fv$, its dual action is $\overline{\alpha}=\overline{f}v$, and similarly for $\alpha = \overline{f}v$.
We define a function ${\sf symb}: Act \rightarrow \overline{\Sigma}$, and ${\sf symb}(fv)= f$ for $fv\in Act$.
Single-labelled transitions, defined in Fig. \ref{LTS1}, are of the form $P\xrightarrow[\lambda]{\delta}P^{\prime}$, where $\lambda$ is a residual function keeping the traces of locations during transitions and the label $\delta$ is defined as
$$\delta::=p:fv\cdot({\vec L})\mid p:\overline{f}v\cdot({\vec L})\mid \tau$$
where $p\in{\sf Loc}$, $f\in\Sigma$ and ${\vec L}=(L_1,\ldots,L_n)$ is a vector of the sets of locations. 

\begin{example}[Local Connections]
A system $S$ consists of a transmitter $A_1$ and two receivers $A_2$ and $A_3$, only $A_2$ connecting with $A_1$. The structure and connections of $S$ can be naturally expressed in VCCTS as follows:
$A_1 \stackrel{\rm def}{=} \overline{f}(5)\cdot(A_1), A_2\stackrel{\rm def}{=} f(x)\cdot(A_2), A_3 \stackrel{\rm def}{=} f(x)\cdot(A_3)$ and $S\stackrel{\rm def}{=}G\langle\Phi\rangle$,
where $f\in\Sigma_1$, $G=(\{1,2,3\},\{(1,2)\})$, $\Phi(i) = A_i$ for $i\in\{1,2,3\}$. That is $S\stackrel{{\rm def}}{=}(A_1\mid A_2)\oplus A_3$, and $S$ evolves as
$$ \infer[]{(A_1\mid A_2)\oplus A_3\xrightarrow[\mathrm{Id}]{\tau}(A_1\mid A_2) \oplus A_3}{A_1\xrightarrow[\mathrm{Id}]{1:\overline{f}5:(\{1\})}A_1\qquad  A_2\xrightarrow[\mathrm{Id}]{2:f5:(\{2\})}A_2}$$
$A_3$ cannot communicate with $A_1$, since there is no connection between them.
\qed\end{example}

Given a multiset $\Delta$ of labels, we use $\Delta(\delta)$ to represent the number of occurrences of $\delta$ in $\Delta$. 
We define ${\sf size}(\Delta) = \sum_{\delta\in \Delta}\Delta(\delta)$ to figure out the size of multiset $\Delta$.
We use $\Delta_n^{\tau}$ to represent a multiset which only contains $n$ $\tau$s, i.e. ${\sf size}(\Delta_n^{\tau})=\Delta_n^{\tau}(\tau)=n$. The union $\uplus$ and the difference $\backslash\!\!\backslash$ of multisets satisfy: $(\Delta_1\uplus\Delta_2)(\delta) = \Delta_1(\delta)+\Delta(\delta)$ and $(\Delta_1\backslash\!\!\backslash\Delta_2)(\delta)=
\mbox{max}(0,\Delta_1(\delta)-\Delta_2(\delta))$.

\begin{definition}[\bf Unrelated Action \cite{liu2016vccts}]\label{Def_unrelated}
Actions $\alpha_1$ and $\alpha_2$ are unrelated if ${\sf symb}(\alpha_1)$ $\neq {\sf symb}(\alpha_2)$. A multiset of labels $\Delta$ is pairwise unrelated, denoted by ${\sf PUnrel}(\Delta)$, if for every $(p:\alpha_1\cdot({\vec L_1}), q:\alpha_2\cdot({\vec L_2}))\in \Delta\times \Delta$ with $p\neq q$, $\alpha_1$ and $\alpha_2$ are unrelated.
\end{definition}

\begin{figure}[!tp]\small
$$\infer[(\mbox{Input})]{P\xrightarrow[\lambda]{p:fv\cdot({\vec L})}P^{\prime}}{p\in|P|\quad P(p)\in {\sf RCGS}}$$
 \begin{itemize}
        \item ${\sf cs}(P(p))$ is of the form $f(x)\cdot(P_1,\ldots,P_n) + S$;
        \item ${\vec L} = (|P_1\{v/x\}|, \ldots, |P_n\{v/x\}|)$;
        \item $P^{\prime} = P[\oplus {\vec P}/p]$ with ${\vec P} = (P_1\{v/x\},\ldots,P_n\{v/x\})$;
        \item $\lambda:|P^{\prime}|\rightarrow |P|$ is defined by $\lambda(p^{\prime}) = p$ if $p^{\prime}\in \bigcup_{i=1}^n L_i$ and $\lambda(p^{\prime}) = p^{\prime}$ otherwise.
      \end{itemize}
$$\infer[(\mbox{Output})]{P\xrightarrow[\lambda]{p:\overline{f}v\cdot({\vec L})}P^{\prime}}{p\in|P|\quad P(p)\in {\sf RCGS}}$$
\begin{itemize}
  \item  ${\sf cs}(P(p))$ is of the form $\overline{f}(e)\cdot(P_1,\ldots,P_n) + S$ and ${\sf eval}(e)=v$;
  \item  ${\vec L}= (|P_1|, \ldots, |P_n|)$, and $P^{\prime} = P[\oplus {\vec P}/p]$ with ${\vec P} = (P_1,\ldots,P_n)$;
  \item $\lambda:|P^{\prime}|\rightarrow |P|$ is defined by $\lambda(p^{\prime}) = p$ if $p^{\prime}\in \bigcup_{i=1}^n L_i$ and $\lambda(p^{\prime}) = p^{\prime}$ otherwise.
\end{itemize}
$$\infer[(\mbox{Com1})]{P\oplus_D Q \xrightarrow[\lambda]{\tau} P^{\prime}\oplus_{D^{\prime}}Q^{\prime}}{P\xrightarrow[\lambda_1]{p:\alpha\cdot({\vec L})}P^{\prime}\quad Q\xrightarrow[\lambda_2]{q:\overline{\alpha}\cdot({\vec H})}Q^{\prime}\quad (p,q)\in D  }$$
\begin{itemize}
  \item  $\lambda:|P^{\prime}|\cup |Q^{\prime}|\rightarrow |P|\cup |Q|$ is defined by $\lambda(p^{\prime})=\lambda_1(p^{\prime})$ if $p^{\prime}\in |P^{\prime}|$ and $\lambda(q^{\prime})=\lambda_2(q^{\prime})$ if $q^{\prime}\in |Q^{\prime}|$;
  \item  $(p^{\prime},q^{\prime})\in D^{\prime}$ if either $p^{\prime}\in \bigcup_{i=1}^n L_i$ and $q^{\prime}\in \bigcup_{i=1}^nH_i$, or $\{p^{\prime},q^{\prime}\}\nsubseteq \bigcup_{i=1}^n L_i \cup \bigcup_{i=1}^n H_i$ and $(\lambda(p^{\prime}),\lambda(q^{\prime}))\in D$.
\end{itemize}
$$\infer[(\mbox{Res1})]{P\backslash I \xrightarrow[\lambda]{p:\alpha\cdot({\vec L})}P^{\prime}\backslash I}{P \xrightarrow[\lambda]{p:\alpha\cdot({\vec L})}P^{\prime}~~{\sf symb}(\alpha)\notin I\cup\overline{I}} \quad
\infer[(\mbox{Res2})]{P\backslash I \xrightarrow[\lambda]{\tau}P^{\prime}\backslash I}{P \xrightarrow[\lambda]{\tau}P^{\prime}}$$
$$\infer[\mbox{(Con)}]{A({\vec v})\xrightarrow[\lambda]{\delta}P^{\prime}}{P\{{\vec v}/{\vec x}\}\xrightarrow[\lambda]{\delta}P^{\prime}~~A({\vec x})\stackrel{\rm def}{=}P}
$$
  \caption{Single-labelled Transitions}\label{LTS1}
\end{figure}

Obviously, $\{\tau,\tau\}$ is pairwise unrelated. We use pairwise unrelated multisets to characterize multi-labelled transitions.
Roughly speaking,
given two multi-labelled transitions $P\xrightarrow[\lambda_1]{\Delta_1}P^{\prime}$ and $Q\xrightarrow[\lambda_2]{\Delta_2}Q^{\prime}$ with $\Delta_1$ and $\Delta_2$ being pairwise unrelated respectively, there exists a transition of $P\oplus_D Q$ if $\Delta_1 \uplus \Delta_2$ are pairwise unrelated. Moreover, if there exist some communications between $P$ and $Q$, then the transition should take them into account (see $\Delta_m^{\tau}$ and $\Delta_0$ in Definition \ref{Com2}). The multi-labelled transition rule (Com2) for parallel composed processes, which is a general version of rule (Com1), is defined as follows.
\begin{definition}[\bf Multi-labelled Transitions]\label{Com2}
$$\infer[(\mbox{\rm Com2})]{P\oplus_D Q\xrightarrow[\lambda]{\Delta} P^{\prime}\oplus_{D^{\prime}} Q^{\prime}}{
    P\xrightarrow[\lambda_1]{\Delta_1}P^{\prime}\quad Q\xrightarrow[\lambda_2]{\Delta_2}Q^{\prime}\quad {\sf PUnrel}(\Delta_1)\quad {\sf PUnrel}(\Delta_2)
  \quad {\sf PUnrel}(\Delta_1\uplus \Delta_2)
  }$$
where: $\Delta = \Delta^{\prime} \uplus \Delta_m^{\tau}$ with $\Delta^{\prime} = (\Delta_1\uplus \Delta_2)\backslash\!\!\backslash\Delta_0$, $m = \frac{{\sf size}(\Delta_0)}{2}$ and $\Delta_0 = \{p:\alpha\cdot(\vec{L}),q:\overline{\alpha}\cdot
        (\vec{M})\mid (p,q)\in D \mbox{ and } p:\alpha\cdot(\vec{L})\in \Delta_1
        \mbox{ and } q:\overline{\alpha}\cdot(\vec{M})\in \Delta_2\}$; $\lambda:|P^{\prime}|\cup |Q^{\prime}|\rightarrow |P|\cup|Q|$ is defined by $\lambda(p)=\lambda_1(p)$ for $p\in |P^{\prime}|$ and $\lambda(q)=\lambda_2(q)$ for $q\in |Q^{\prime}|$; and $(p^{\prime},q^{\prime})\in D^{\prime} \subseteq |P^{\prime}|\times |Q^{\prime}|$, if
        \begin{itemize}
        \item either $p^{\prime}\in \bigcup_{i=1}^n L_{i}$ and $q^{\prime}\in \bigcup_{i=1}^n M_{i}$, $p:\alpha\cdot(\vec{L})\in \Delta_0$, $q:\overline{\alpha}\cdot(\vec{M})\in \Delta_0$ and $(p,q)\in D$; ({\it communication between $\alpha$ and $\overline{\alpha}$})
        \item  or $(\lambda(p^{\prime}),\lambda(q^{\prime}))\in D$ and $\{p^{\prime},q^{\prime}\}\nsubseteq \bigcup_{i=1}^n L_{i} \cup \bigcup_{i=1}^n M_{i}$ for any $p:\alpha\cdot(\vec{L})\in \Delta_0$, $q:\overline{\alpha}\cdot(\vec{M})\in \Delta_0$ and $(p,q)\in D$. ({\it inheritance})
        \end{itemize}
\end{definition}

In $P\xrightarrow[\lambda]{\Delta}P^{\prime}$, when ${\sf size}(\Delta)= 1$, we use the unique element to represent the multiset.
We can easily extend multi-labelled transitions to canonical processes.

\begin{example}[Multi-labelled transitions]
Let $f_1,g_1\in \Sigma_1$ and $f_2,g_2\in \Sigma_2$. Consider processes
$P = f_1(x)\cdot (\overline{g_1}(x)\cdot(\ast)) \oplus f_2(y)\cdot(\ast,\ast)$ and $Q = \overline{f_1}(1)\cdot (\ast) \oplus \overline{f_2}(2)\cdot(\ast, \ast)$, where $|P|=\{1,2\}$ such that ${\sf cs}(P(1)) = f_1(x)\cdot (\overline{g_1}(x)\cdot(\ast))$ and ${\sf cs}(P(2)) = f_2(y)\cdot(\ast,\ast)$, and $|Q|=\{3,4\}$ such that ${\sf cs}(Q(3))=\overline{f_1}(1)\cdot (\ast)$ and ${\sf cs}(Q(4))=\overline{f_2}(2)\cdot(\ast,\ast)$.
We have
\begin{center}
$P\xrightarrow[\lambda_1]
 {\{1:f_11\cdot(L_1),2:f_22\cdot(L_2,L_3)\}}P^{\prime}$\quad
and\quad
$Q\xrightarrow[\lambda_2]
{\{3:\overline{f_1}1\cdot(L_4),4:\overline{f_2}2\cdot(L_5,L_6)\}}Q^{\prime}$
\end{center}
where $P^{\prime} = \overline{g_1}(1)\cdot(\ast)\oplus (\ast\oplus \ast)$,
$Q^{\prime} = \ast\oplus (\ast\oplus\ast)$, $L_1=|\overline{g_1}(1)\cdot(\ast)|$,
$L_2 = |\ast|$, $L_3 =|\ast|$, $L_4 = |\ast|$, $L_5 = |\ast|$ and $L_6 = |\ast|$.
The graph of $P^{\prime}$ has $3$ vertices. Let
$|P^{\prime}|= \{1,6,7\}$ with $L_1=\{1\}$, $L_2=\{6\}$ and $L_3=\{7\}$, $P^{\prime}(1)=\overline{g_1}(1)\cdot(\ast)$, $P^{\prime}(6)= \ast$ and $P^{\prime}(7)=\ast$, then we have $\lambda_1(1)=1$ and $\lambda_1(6)=\lambda_1(7)=2$. Similarly, let $|Q^{\prime}|=\{3,8,9\}$ with $L_4=\{3\}$, $L_5=\{8\}$ and $L_6=\{9\}$, $Q^{\prime}(3)= \ast$, $Q^{\prime}(8) = \ast$ and $Q^{\prime}(9)=\ast$, then we have $\lambda_2(3) = 3$ and $\lambda_2(8)=\lambda_2(9) = 4$.
If $D=\{(1,3),(2,4)\}$, then we get
$P\oplus_D Q\xrightarrow[\lambda]{\{\tau,\tau\}}
P^{\prime}\oplus_{D^{\prime}} Q^{\prime}$
where $D^{\prime} = \{(1,3), (6,8), (6,9), (7,8), (7,9)\}$ and $\lambda= \lambda_1\circ \lambda_2$.
\qed\end{example}

In LLTSs, unrelated actions could occur consecutively and the order of their occurrences does not affect the final process. Moreover, a multi-labelled transition can be realized by a sequence of single-labelled transitions with the labels appearing in the multiset.
\begin{lemma}[\bf Diamond Property]\label{diomand}
\begin{enumerate}
  \item If $P\xrightarrow[\lambda_1]{\delta_1}P^{\prime}$, $Q\xrightarrow[\lambda_2]{\delta_2}Q^{\prime}$ and $P\oplus_D Q\xrightarrow[\lambda]{\{\delta_1,\delta_2\}}
   P^{\prime}\oplus_{D^{\prime}}Q^{\prime}$ (i.e. $\{\delta_1,\delta_2\}$ is pairwise unrelated), then we have
   $P\oplus_D Q\xrightarrow[\mu_1]{\delta_1}P^{\prime}\oplus_{D_1}Q
   \xrightarrow[\mu_2]{\delta_2}
   P^{\prime}\oplus_{D^{\prime}}Q^{\prime}$, $P\oplus_D Q\xrightarrow[\rho_1]{\delta_2}P\oplus_{D_2}Q^{\prime}
   \xrightarrow[\rho_2]{\delta_1}
   P^{\prime}\oplus_{D^{\prime}}Q^{\prime}$ and
   $\mu_1\circ \mu_2 = \rho_1\circ \rho_2 = \lambda$.
  \item Given a process $P$, if $P\xrightarrow[\lambda]{\Delta}P^{\prime}$ then there exist $P_0 = P$, $P_n = P^{\prime}$, and $P_i \xrightarrow[\lambda_{i+1}]{\delta_{i+1}}P_{i+1}$ such that $P_0\xrightarrow[\lambda_1]{\delta_1}P_1
      \xrightarrow[\lambda_2]{\delta_2}\cdots
    \xrightarrow[\lambda_{n}]{\delta_{n}}P_n$,
   where $n={\sf size}(\Delta)$, $\delta_{i+1}\in \Delta$ with $i\in \{0,\ldots,n-1\}$ and $\lambda= \lambda_1\circ\cdots\circ\lambda_n$.
\end{enumerate}
\end{lemma}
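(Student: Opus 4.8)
The plan is to establish the two parts in turn, using part~(1), the diamond, as the elementary commutation fact on which part~(2) rests. For part~(1), the first thing to notice is that the two moves \emph{cannot} communicate: a communication would force $\delta_1=p:\alpha\cdot(\vec L)$ and $\delta_2=q:\overline{\alpha}\cdot(\vec M)$ with $(p,q)\in D$, whence ${\sf symb}(\alpha)={\sf symb}(\overline{\alpha})$, contradicting ${\sf PUnrel}(\{\delta_1,\delta_2\})$; a $\tau$ never lies in $\Delta_0$ either. Hence in the instance of (Com2) giving $P\oplus_D Q\xrightarrow[\lambda]{\{\delta_1,\delta_2\}}P'\oplus_{D'}Q'$ we have $\Delta_0=\emptyset$, $m=0$, $\Delta=\{\delta_1,\delta_2\}$, and $D'$ arises from $D$ by the \emph{inheritance} clause alone. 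I would then construct the orderings directly: when $\delta_1$ is an input/output label, $P\oplus_D Q\xrightarrow[\mu_1]{\delta_1}P'\oplus_{D_1}Q$ is a direct instance of (Input)/(Output) applied to $P\oplus_D Q$ (since $P$ and $Q$ occupy disjoint locations), with $\mu_1=\lambda_1\cup\mathrm{Id}_{|Q|}$ and $D_1$ the pushforward of $D$ along $(\lambda_1,\mathrm{Id})$; when $\delta_1=\tau$ it follows by flattening the parallel composition so that the relevant (Com1)/(Res2)/(Con) inside $P$ fires within $P\oplus_D Q$. A symmetric step on the right then reaches $P'\oplus_{D'}Q'$, and the remaining content is to verify that $\mu_1\circ\mu_2=\lambda$ and that the two-step $D'$ equals the one-step $D'$. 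Both reduce to unfolding the inheritance clause and go through because the locations created by the $\delta_1$-move lie in $|P|$ and those created by the $\delta_2$-move lie in $|Q|$, so the side condition ``not both endpoints new'' holds automatically; one also uses the freedom in (Input)/(Output) to pick, in the two-step derivation, the very fresh location sets of the one-step derivation. Swapping $P$ and $Q$ gives the other ordering and $\rho_1\circ\rho_2=\lambda$.

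For part~(2) I would argue by induction on $n={\sf size}(\Delta)$, the substance being a \emph{peeling} statement: for every derivation of $P\xrightarrow[\lambda]{\Delta}P'$ and every $\delta\in\Delta$ there are $P''$, $\mu_1$, $\mu'$ with $P\xrightarrow[\mu_1]{\delta}P''\xrightarrow[\mu']{\Delta\backslash\!\!\backslash\{\delta\}}P'$ and $\mu_1\circ\mu'=\lambda$; the induction hypothesis then handles $P''\xrightarrow[\mu']{\Delta\backslash\!\!\backslash\{\delta\}}P'$. Peeling itself is proved by induction on the derivation of $P\xrightarrow[\lambda]{\Delta}P'$. The rules (Input), (Output) give $n=1$ and are immediate; (Res1), (Res2), (Con) follow by peeling in the premise and re-applying the same rule to every step of the resulting sequence. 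The real case is (Com2), with $P=P_L\oplus_D P_R$, $P_L\xrightarrow[\lambda_1]{\Delta_1}P_L'$, $P_R\xrightarrow[\lambda_2]{\Delta_2}P_R'$. If $\delta$ is a surviving label of $\Delta'=(\Delta_1\uplus\Delta_2)\backslash\!\!\backslash\Delta_0$, say $\delta\in\Delta_1$, I peel it from $P_L\xrightarrow{\Delta_1}P_L'$, perform the matching left-only step of $P_L\oplus_D P_R$, and close the square with (Com2) on the two residual sub-transitions, using that removing a non-matched label leaves $\Delta_0$ and $m$ unchanged. If $\delta=\tau$ is one of the $m$ communication $\tau$'s, it comes from a matched pair $(p:\alpha\cdot(\vec L)\in\Delta_1,\ q:\overline{\alpha}\cdot(\vec M)\in\Delta_2)\in\Delta_0$; I peel $p:\alpha\cdot(\vec L)$ on the left and $q:\overline{\alpha}\cdot(\vec M)$ on the right, fire (Com1) to realize that single $\tau$, and close with (Com2) on the two remaining sub-transitions, whose $\Delta_0$ has lost exactly this pair so that $m$ drops by one. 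In each case the pairwise-unrelatedness premises of the new (Com2) persist (sub-multisets of pairwise-unrelated multisets are pairwise unrelated), and the residual functions and $D'$-relations match by the same computations as in part~(1).

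The step I expect to be the main obstacle is the (Com2) case of part~(2), and within it the communication $\tau$'s: one must track precisely how $\Delta_0$, the count $m$, the relation $D'$ and the residual function $\lambda$ change when a single label—a surviving input/output or one of the synchronization $\tau$'s—is extracted, including the multiset arithmetic $\bigl((\Delta_1\backslash\!\!\backslash\{\delta\})\uplus\Delta_2\bigr)\backslash\!\!\backslash\Delta_0=\Delta\backslash\!\!\backslash\{\delta\}$ and its $\tau$-shifted analogue. One must also make sure that the ``a single operand moves by itself'' mechanism used throughout is a bona fide transition—either verifying the flattening/associativity of $\oplus_D$ so that (Com1) applies inside $P\oplus_D Q$, or pinning down the empty-multiset transition and checking that (Com2) specializes to it correctly. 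Part~(1) is then in effect the $n=2$ instance of this bookkeeping, made transparent by the fact that there $\Delta_0$ is empty.
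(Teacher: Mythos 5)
Your proposal is correct and follows essentially the same route as the paper: part (1) by direct computation of the residual functions and of the induced edge relations, checking that the two-step relation collapses to $D^{\prime}$ and that $\mu_1\circ\mu_2=\rho_1\circ\rho_2=\lambda$, and part (2) by induction on ${\sf size}(\Delta)$ --- the paper's entire argument for part (2) is that one sentence, so your peeling lemma, the (Com2) case analysis, and your explicit flagging of the flattening/single-operand-step issue supply detail the paper simply omits. One small caveat: your claim that a communication between $\delta_1$ and $\delta_2$ would already violate ${\sf PUnrel}$ because ``${\sf symb}(\alpha)={\sf symb}(\overline{\alpha})$'' sits uneasily with the paper's literal definition (under which ${\sf symb}(fv)=f\neq\overline{f}={\sf symb}(\overline{f}v)$, so a dual pair is formally \emph{unrelated}); the conclusion $\Delta_0=\emptyset$ that you need is nevertheless forced, since a nonempty $\Delta_0$ would replace the matched pair by a $\tau$ in rule (Com2) and the composite label could then not be $\{\delta_1,\delta_2\}$ as hypothesized.
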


We write $P \xrightarrow[\lambda]{\tau^{\ast}}P^{\prime}$ if there exists
$n\geq 1$ such that $P = P_1$, $P^{\prime} = P_n$, $P_1\xrightarrow[\lambda_1]{\tau}P_2\xrightarrow[\lambda_2]{\tau}\cdots$ $\xrightarrow[\lambda_{n-1}]{\tau}P_n$ and $\lambda = \lambda_1\circ\lambda_2\circ \cdots \circ \lambda_{n-1}$.
Let $\widehat{\Delta}$ represent the multiset with all the invisible labels (i.e. $\tau$s) removed from $\Delta$.
By diamond property, if $P\xrightarrow[\lambda]{\Delta}P^{\prime}$, then we can get $P\xrightarrow[\lambda_1]{\tau^{\ast}}P_1
\xrightarrow[\lambda_2]{\widehat{\Delta}}P_2\xrightarrow[\lambda_3]{\tau^{\ast}}P^{\prime}$ and $\lambda=\lambda_1\circ\lambda_2\circ\lambda_3$ for some processes $P_1$ and $P_2$. $P\xLongrightarrow[\lambda,\lambda_1,\lambda^{\prime}]
                 {\widehat{\Delta}}P^{\prime}$ means that there exist
                  processes $P_1$ and $P_1^{\prime}$ such that
                 $P\xrightarrow[\lambda]{\tau^{\ast}}P_1\xrightarrow[\lambda_1]
                 {\widehat{\Delta}}P_1^{\prime}\xrightarrow[\lambda^{\prime}]
                {\tau^{\ast}}P^{\prime}$.

\subsection{Weak Bisimulation}
In this part, we define an early version of weak bisimulation on ${\sf Proc}$ through triples $(P,E,Q)$ by taking locations into account, where $E \subseteq |P|\times|Q|$ specifies the pairs of locations as well as the pairs of corresponding subprocesses to be considered together.

\begin{definition}[\bf Localized Relation \cite{Ehrhard2013ccts}]
A localized relation on ${\sf Proc}$ is a set $\mathcal{R}\subseteq {\sf Proc}\times\mathcal{P}({\sf Loc}^2)\times{\sf Proc}$ such that, if $(P,E,Q)\in \mathcal{R}$ then $E\subseteq |P|\times|Q|$. $\mathcal{R}$ is symmetric if $(P,E,Q)\in \mathcal{R}$ then $(Q,^t\!\!E,P)\in \mathcal{R}$, where $^t\!E=\{(q,p)\mid (p,q)\in E\}$.
\end{definition}

\begin{definition}[\bf Corresponding Multiset]
Given $\widehat{\Delta}$ (only containing visible labels), a corresponding multiset for $\widehat{\Delta}$, denoted by $\widehat{\Delta}^{c}$, is a multiset of labels such that for any $p:\alpha\cdot(\vec{L})\in \widehat{\Delta}$ there exists a unique $q:\alpha\cdot(\vec{M})\in \widehat{\Delta}^{c}$ (with the same $\alpha$), and vice versa.
\end{definition}

\begin{definition}[\bf Weak Bisimulation]
A symmetric localized relation $\mathcal{S}$ is a (localized early) weak bisimulation such that:
\begin{itemize}
  \item if $(P,E,Q)\in\mathcal{S}$ and $P \xrightarrow[\lambda]{\tau}P^{\prime}$ then
   $Q\xrightarrow[\rho]{\tau^{\ast}}Q^{\prime}$ with $(P^{\prime},E^{\prime},Q^{\prime})\in\mathcal{S}$ for some $E^{\prime}\subseteq|P^{\prime}|\times|Q^{\prime}|$ such that, if $(p^{\prime},q^{\prime})\in E^{\prime}$ then $(\lambda(p^{\prime}),\rho(q^{\prime}))\in E$;
  \item if $(P,E,Q)\in \mathcal{S}$ and
  $P\xlongrightarrow[\lambda]{\widehat{\Delta}}P^{\prime}$
   then
  $Q\xLongrightarrow[\rho,\rho_1,\rho^{\prime}]
  {\widehat{\Delta}^{c}}Q^{\prime}$ with the conditions that for any $p:\alpha\cdot({\vec L})\in \widehat{\Delta}$ there exists $q:\alpha\cdot({\vec M})\in \widehat{\Delta}^{c}$ such that $(p,\rho(q))\in E$ and $(P^{\prime},E^{\prime},Q^{\prime})\in \mathcal{S}$ for some $E^{\prime}\subseteq|P^{\prime}|\times|Q^{\prime}|$ such that if $(p^{\prime},q^{\prime})\in E^{\prime}$ then $(\lambda(p^{\prime}),\rho\rho_1\rho^{\prime}(q^{\prime}))\in E$.
\end{itemize}
\end{definition}

\begin{definition}
$P$ and $Q$ are weakly bisimilar, written $P\approx Q$, if there exist a weak bisimulation $\mathcal{R}$ and a relation $E\subseteq|P|\times|Q|$ such that $(P,E,Q)\in\mathcal{R}$.
\end{definition}
\noindent{\it Remark}. The localized relation is needed later for the parallel composition in the proof that weak bisimilarity is a congruence. Though there are infinitely many possible corresponding multisets of a given $\widehat{\Delta}$, we only need to fix one in the definition of weak bisimulation.

\section{Characterizations}\label{Soundness}
\subsection{Soundness}
For soundness, we need to prove that weak bisimilarity implies weak barbed congruence.
We first have to prove that weak bisimilarity is a congruence, i.e. $\approx$ is an equivalence relation and is preserved by the structures of VCCTS.
\begin{proposition}\label{equivalence}
$\approx$ is an equivalence relation.
\end{proposition}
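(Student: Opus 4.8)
The statement to establish is Proposition~\ref{equivalence}: that $\approx$ is an equivalence relation on ${\sf Proc}$. The plan is to verify reflexivity, symmetry, and transitivity by exhibiting, in each case, a witnessing weak bisimulation. For \emph{reflexivity}, I would check that the localized identity relation $\mathcal{I} = \{(P, \mathrm{Id}_{|P|}, P) \mid P \in {\sf Proc}\}$, where $\mathrm{Id}_{|P|} = \{(p,p)\mid p\in|P|\}$, is a weak bisimulation. Given $(P,\mathrm{Id}_{|P|},P)\in\mathcal{I}$ and a transition $P\xrightarrow[\lambda]{\tau}P'$, the matching move is the same transition $P\xrightarrow[\lambda]{\tau}P'$; one takes $E' = \mathrm{Id}_{|P'|}$ and checks that $(p',p')\in E'$ forces $(\lambda(p'),\lambda(p'))\in\mathrm{Id}_{|P|}$, which holds since $\lambda$ maps into $|P|$. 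The visible-label clause is handled similarly, taking $\widehat{\Delta}^c = \widehat{\Delta}$ (a legitimate corresponding multiset with $q = p$) and noting $(p,\lambda(p))$ need not literally be $(p,p)$ — here one must be slightly careful and instead pick the identity residuals so the diagonal condition is preserved; the point is that $\xlongrightarrow[\lambda]{\widehat{\Delta}}$ is itself an instance of $\xLongrightarrow{}$ with empty $\tau$-prefix and $\tau$-suffix.

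For \emph{symmetry}, this is immediate from the definition: weak bisimulations are required to be symmetric localized relations, so if $\mathcal{R}$ witnesses $P\approx Q$ via $(P,E,Q)\in\mathcal{R}$, then $(Q,{}^tE,P)\in\mathcal{R}$ witnesses $Q\approx P$. For \emph{transitivity}, suppose $P\approx Q$ via $\mathcal{R}_1$ with $(P,E_1,Q)\in\mathcal{R}_1$ and $Q\approx R$ via $\mathcal{R}_2$ with $(Q,E_2,R)\in\mathcal{R}_2$. I would form the relational composition along the middle component:
$$\mathcal{R} = \{(P,E,R) \mid \exists Q,E_1,E_2.\ (P,E_1,Q)\in\mathcal{R}_1,\ (Q,E_2,R)\in\mathcal{R}_2,\ E = E_2\circ E_1\}$$
where $E_2\circ E_1 = \{(p,r)\mid \exists q.\ (p,q)\in E_1 \text{ and } (q,r)\in E_2\}$, together with its transpose closure to ensure symmetry. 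One then checks $\mathcal{R}\cup{}^t\mathcal{R}$ is a weak bisimulation: given $(P,E,R)\in\mathcal{R}$ and $P\xrightarrow[\lambda]{\tau}P'$, use the first clause of $\mathcal{R}_1$ to get $Q\xrightarrow[\rho_1]{\tau^*}Q'$ with $(P',E_1',Q')\in\mathcal{R}_1$; then apply the weak ($\tau^*$, or equivalently the visible clause with $\widehat{\Delta}=\emptyset$) matching property of $\mathcal{R}_2$ iteratively along the finite chain $Q\to\cdots\to Q'$ to get $R\xrightarrow[\rho_2]{\tau^*}R'$ with $(Q',E_2',R')\in\mathcal{R}_2$; set $E' = E_2'\circ E_1'$ and verify the residual condition: $(p',r')\in E'$ means there is $q'$ with $(p',q')\in E_1'$, $(q',r')\in E_2'$, so $(\lambda(p'),\rho_1(q'))\in E_1$ and $(\rho_1(q'),\rho_2(r'))\in E_2$, hence $(\lambda(p'),\rho_2(r'))\in E_2\circ E_1 = E$, as needed with $\rho = \rho_2$. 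The visible-label clause is analogous, composing the corresponding multisets $\widehat{\Delta}\leadsto\widehat{\Delta}^c\leadsto(\widehat{\Delta}^c)^c$ and composing the $\tau^*$ prefixes/suffixes of the $\xLongrightarrow{}$ moves (using that $\tau^*$ transitions of $Q$ are absorbed by iterating $\mathcal{R}_2$'s matching).

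\textbf{Main obstacle.} I expect the delicate point to be bookkeeping of the \emph{residual functions} and \emph{localized relations} through composition, especially in the visible-label clause. The definition of weak bisimulation threads three residuals $\rho,\rho_1,\rho'$ through $\xLongrightarrow[\rho,\rho_1,\rho']{\widehat{\Delta}^c}$, and matching a move $Q\xLongrightarrow[\rho,\rho_1,\rho']{\widehat{\Delta}^c}Q'$ of the ``middle'' process against $\mathcal{R}_2$ requires first decomposing it (by the Diamond Property, Lemma~\ref{diomand}) into $\tau^*$ then $\widehat{\Delta}^c$ then $\tau^*$ segments, matching each segment, and then recomposing — all while checking that the final diagonal/localization condition $(\lambda(p'),\rho\rho_1\rho'(r'))\in E$ comes out right after the composite residual is assembled. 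A secondary nuisance is that a $\tau^*$-transition of $Q$ must be simulated by $R$ by iterating the single-$\tau$ clause finitely many times and composing the residuals; one should note this sub-lemma (``$\tau^*$ is matched by $\tau^*$ with composed residuals'') explicitly, since it is reused several times. None of these steps is conceptually hard, but getting the function compositions and the direction of the containments $(\cdot,\cdot)\in E$ consistent is where the real work lies; the rest is routine unwinding of definitions.
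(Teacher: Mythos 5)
Your proposal is correct and follows essentially the same route as the paper: reflexivity via the localized identity relation, symmetry directly from the definition, and transitivity by composing the two witnessing bisimulations along the middle process, supported by exactly the auxiliary facts the paper isolates (that $\tau^\ast$ moves are matched by $\tau^\ast$ moves with composed residuals, and that a weak $\widehat{\Delta}$-move decomposes into and recomposes from $\tau^\ast$/$\widehat{\Delta}$/$\tau^\ast$ segments). The only cosmetic difference is that the paper's composed relation uses the containment $F\circ E\subseteq H$ rather than your exact $E=E_2\circ E_1$ with transpose closure, which changes nothing essential.
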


The relationship between two bisimulations is the following proposition.
\begin{proposition}\label{propsitionbisimulationbarb}
If $P\approx Q$ then $P\stackrel{\bullet}{\approx}Q$.
\end{proposition}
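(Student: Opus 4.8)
The plan is to show directly that the relation
$$\mathcal{B} = \{(P,Q) \mid \text{there is }E\subseteq|P|\times|Q|\text{ with }(P,E,Q)\in\mathcal{R}\text{ for some weak bisimulation }\mathcal{R}\}$$
is a weak barbed bisimulation, which gives $P\stackrel{\bullet}{\approx}Q$ whenever $P\approx Q$. Since $\approx$ is symmetric and, by Proposition \ref{equivalence}, an equivalence, $\mathcal{B}$ is symmetric, so it remains to verify the two clauses in the definition of weak barbed bisimulation: preservation of $\xrightarrow{}^\ast$ and preservation of barbs $\downarrow_B$, both up to weak reduction on the right-hand side.

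First I would handle the reduction-closure clause. The key observation is that an internal reduction $P\xrightarrow{}P'$ in the sense of Fig.~\ref{Internal_Reduction} is exactly a single-labelled $\tau$-transition $P\xrightarrow[\lambda]{\tau}P'$ in the LLTS of Fig.~\ref{LTS1}: rule (R-React) on a connected pair $p\frown_P q$ is the instance of (Com1) obtained by decomposing $P$ around those two locations, and (R-Res), (R-Con) match (Res2), (Con). (I would state this correspondence as a small lemma, or cite it if it appears in the appendix.) Given this, if $P\xrightarrow{}^\ast P'$ then $P\xrightarrow[\lambda]{\tau^\ast}P'$ (or $P=P'$), and iterating the first clause of the definition of weak bisimulation along the chain yields $Q\xrightarrow[\rho]{\tau^\ast}Q'$ with $(P',E',Q')\in\mathcal{R}$, hence $(P',Q')\in\mathcal{B}$ and $Q\xrightarrow{}^\ast Q'$, as required. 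For the empty chain we take $Q'=Q$.

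For the barb-preservation clause, suppose $P\xrightarrow{}^\ast P'$ with $P'\downarrow_B$, where $B=\{f_1,\dots,f_k\}\subseteq\overline{\Sigma}$ and there are distinct locations $p_i\in|P'|$ with $P'(p_i)\downarrow_{f_i}$ (and, if $P'=P''\backslash I$, none of $f_i,\overline{f_i}$ in $I$). Each such barb $P'(p_i)\downarrow_{f_i}$ means ${\sf cs}(P'(p_i))$ has a summand prefixed by $f_i$, which is precisely what enables a single-labelled (Input) or (Output) transition $P'\xrightarrow[\lambda_i]{p_i:f_iv_i\cdot(\vec L_i)}\cdot$ with ${\sf symb}(\cdot)=f_i$; choosing the $f_i$ pairwise in $B$ need not make them unrelated, so rather than fire them simultaneously I would apply the first clause (reduction closure, just proved) to reach $Q'$ with $(P',E',Q')\in\mathcal{R}$, and then fire the visible transitions one at a time. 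Concretely, for a single $f\in B$ witnessed at location $p$: $P'\xrightarrow[\lambda]{\widehat\Delta}P''$ with $\widehat\Delta=\{p:fv\cdot(\vec L)\}$; the second clause of weak bisimulation gives $Q'\xLongrightarrow[\rho,\rho_1,\rho']{\widehat\Delta^c}Q''$ where $\widehat\Delta^c=\{q:fv'\cdot(\vec M)\}$ with the \emph{same} symbol $f$ and $(p,\rho(q))\in E'$. The intermediate process $Q_1$ in $Q'\xrightarrow{\tau^\ast}Q_1\xrightarrow{\widehat\Delta^c}Q_1'$ must itself have a location $q$ with $Q_1(q)\downarrow_f$ (this is exactly what licenses the (Input)/(Output) step labelled $q:fv'\cdot(\vec M)$), and $Q'\xrightarrow{\tau^\ast}Q_1$ is a chain of internal reductions; taking $Q'=Q$-reachable, we get $Q\xrightarrow{}^\ast Q_1$ with $Q_1\downarrow_{\{f\}}$. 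For a set $B$ with several elements I would iterate: the distinct witnessing locations $p_i$ on the $P$ side are mapped via the successive residual functions, and because the residual functions are the identity outside the locations actually touched by a transition, firing the barb at $p_1$ leaves the barbs at $p_2,\dots,p_k$ available (their ${\sf cs}$ summands are unchanged), so one collects, one at a time, distinct locations $q_i$ on the $Q$ side with $Q$-descendant $Q^*$ satisfying $Q^*(q_i)\downarrow_{f_i}$, i.e. $Q^*\downarrow_B$; the restriction side-condition transfers because $f_i,\overline{f_i}\notin I$ is inherited along (Res1)/(Res2).

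I expect the main obstacle to be the bookkeeping for a \emph{multi-element} barb $B$: one must argue that after consuming one barb the remaining ones survive (distinctness of the $q_i$ and stability of the relevant ${\sf cs}$-summands under the intervening $\tau$-steps and the one visible step), and that the residual functions compose so that the ``distinct locations'' requirement in the definition of $Q^*\downarrow_B$ is genuinely met. This is where the locality built into the residual function $\lambda$ (identity away from the rewritten vertices) does the real work, and it is worth isolating as a sublemma: a transition that fires only at location $p$ does not disturb ${\sf cs}(R(r))$ for $r\neq p$ in its support, so independent barbs are preserved. Everything else is a direct unwinding of the definitions together with the reduction/$\tau$-transition correspondence and Proposition \ref{equivalence}.
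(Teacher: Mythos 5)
Your overall architecture (define $\mathcal{B}$ from the weak bisimulation, verify the two clauses, handle the reduction clause via the correspondence between internal reductions and $\tau^\ast$-transitions together with the $\tau^\ast$-closure of weak bisimulation) matches the paper, and the reduction clause is fine. The barb clause, however, contains a genuine gap, and it stems from a false premise. You write that ``choosing the $f_i$ pairwise in $B$ need not make them unrelated'' and therefore fire the barbs one at a time. But unrelatedness is defined purely by ${\sf symb}(\alpha_1)\neq{\sf symb}(\alpha_2)$, and $B$ is a \emph{set} of distinct elements of $\overline{\Sigma}$; so the labels $p_i:f_iv_i\cdot(\vec L_i)$ witnessing $P'\downarrow_B$ at distinct locations are automatically pairwise unrelated. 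The paper exploits exactly this: it fires one single multi-labelled transition $P'\xrightarrow[\lambda_1']{\widehat{\Delta}}P_1$ carrying one visible label per element of $B$, obtains a matching $Q'\xrightarrow[\rho']{\tau^{\ast}}Q_1\xrightarrow[\rho_1]{\widehat{\Delta}^c}Q_1'\xrightarrow{\tau^\ast}\cdot$, and reads $Q_1\downarrow_B$ off the single intermediate process $Q_1$, which must simultaneously hold, at distinct locations, a ${\sf cs}$-summand prefixed by each $f\in B$ in order to perform $\widehat{\Delta}^c$.

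Your sequential replacement does not recover this. After you match the barb $f_1$, the $Q$-side move \emph{performs} the visible $f_1$-action, consuming the corresponding prefix; the process from which you subsequently extract the $f_2$-witness is a descendant of that consumption, and nothing guarantees it still exhibits $f_1$. So you obtain a chain of different $Q$-derivatives, each exhibiting one barb, but the definition of $Q^{*}\downarrow_B$ demands a \emph{single} $Q^{*}$ with distinct locations $q_i$ such that $Q^{*}(q_i)\downarrow_{f_i}$ for \emph{all} $f_i\in B$ at once. Your proposed sublemma about residual functions being the identity away from the rewritten vertices only shows that the $P$-side barbs survive; it does not address the $Q$ side, where the difficulty actually lives. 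The fix is simply to abandon the sequential strategy and use the simultaneous multi-labelled transition, which is available precisely because distinct barbs are unrelated by construction.
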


\begin{theorem}\label{bisimilationCongruence}
 $\approx$ is a congruence.
\end{theorem}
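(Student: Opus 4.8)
\textbf{Proof plan for Theorem \ref{bisimilationCongruence}.}

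The plan is to show that $\approx$ is preserved by each of the syntactic constructors of VCCTS: prefixing $f(x)\cdot(-)$ and $\overline{f}(e)\cdot(-)$, guarded sum $+$, symbol restriction $\backslash I$, the conditional ${\bf if}~b~{\bf then}~-~{\bf else}~-$, recursive definitions, and, crucially, the graph-based parallel composition $G\langle\Phi\rangle$ (equivalently $\oplus_D$). Since a $Y$-context is built from these constructors with a single hole, closure under all of them together with Proposition \ref{equivalence} (equivalence) gives that $\approx$ is a congruence. For each constructor the standard technique applies: given $P\approx Q$ witnessed by a weak bisimulation $\mathcal{R}$ with $(P,E,Q)\in\mathcal{R}$, exhibit an explicit localized relation $\mathcal{S}$ containing the required triple and check the transfer conditions of Definition \textbf{Weak Bisimulation}, closing $\mathcal{S}$ under $\tau$-reduction and visible multi-labelled transitions as needed (and symmetrizing). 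I would first dispatch the easy cases. For prefixing, the only transitions of $f(x)\cdot(\vec P)$ and $f(x)\cdot(\vec Q)$ are the single input actions at the hole location, producing $\oplus\vec P\{v/x\}$ versus $\oplus\vec Q\{v/x\}$ componentwise; here I use Lemma \ref{lemma:cs} to stay inside canonical processes, and the candidate $E'$ pairs up the locations coming from corresponding $P_i\{v/x\}$ and $Q_i\{v/x\}$ via the bisimulations witnessing $P_i\approx Q_i$. Restriction is handled by checking that rules (Res1)/(Res2) neither create nor destroy matchable transitions once ${\sf symb}(\alpha)\notin I\cup\overline I$; the conditional and constant cases reduce immediately to the branch/unfolding selected by ${\sf cs}$, using rule (Con) and the well-definedness of ${\sf cs}$ on ${\sf RCGS}$; guarded sum is routine because ${\sf cs}(S_1+S_2)$ exposes the summands and any initial action of $S_1$ is an initial action of $S_1+S_2$.

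The substantial case is parallel composition: I must show that $P_1\approx P_2$ implies $P_1\oplus_D R\approx P_2\oplus_D R$ (and, by a symmetric/transitive argument, congruence in both arguments, hence for the full $G\langle\Phi\rangle$ by iterating over $|G|$). The candidate relation is
$$\mathcal{S}=\{\,(P_1\oplus_D R,\ E\uplus \mathrm{Id}_{|R|},\ P_2\oplus_D R)\mid (P_1,E,P_2)\in\mathcal{R}\,\}$$
closed under $\tau^{\ast}$ and visible transitions and symmetrized, where $\mathrm{Id}_{|R|}$ is the diagonal on $R$'s locations and $D$ is adjusted along $\lambda$ as in rule (Com2)/(Com1). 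Given a transition of $P_1\oplus_D R$, the Diamond Property (Lemma \ref{diomand}, part 2) lets me decompose it into single-labelled steps; each step is, by the shape of (Com1)/(Input)/(Output)/(Res1), either a step entirely inside $P_1$, a step entirely inside $R$, or a synchronization across some $(p,q)\in D$. A step inside $P_1$ is matched by $P_2$ using $\mathcal{R}$, with the residual $D'$ updated consistently; a step inside $R$ is matched identically by $R$ on the other side (this is where $\mathrm{Id}_{|R|}$ and the remark about needing \emph{localized} relations are used — the matching on $R$'s part is the identity, but the locations of $R$ get freshened the same way on both sides). For a synchronization between location $p$ of $P_1$ and $q$ of $R$: $P_1$ performs $p:\alpha\cdot(\vec L)$, which by the visible-transition clause of weak bisimulation is matched by $P_2\xLongrightarrow[\rho,\rho_1,\rho']{\,\widehat{\Delta}^c}P_2'$ with a matching label $p':\alpha\cdot(\vec M)$ and $(p,\rho(p'))\in E$; then $R$ can perform the dual $q:\overline\alpha\cdot(\vec H)$ in both composites, and I reassemble a weak transition $P_2\oplus_D R\xLongrightarrow{}P_2'\oplus_{D'}R'$ by interleaving the $\tau^{\ast}$ prefixes/suffixes of $P_2$'s weak move with $R$'s single step (again invoking the Diamond Property to legitimize the interleaving, since the $R$-labels and the $P_2$-labels are on disjoint locations and are pairwise unrelated with the $\tau$'s). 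The bookkeeping is in showing that the updated edge set $D'$ on the $P_2$-side and the residual function compose so that the new triple is again in $\mathcal{S}$, i.e. $(\lambda(p^{\prime\prime}),\rho\rho_1\rho^{\prime}(q^{\prime\prime}))$ lands in $E\uplus\mathrm{Id}$ — this follows by tracing the definitions of $D'$ in (Com2) and the residual-function clauses.

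The main obstacle I anticipate is precisely this parallel-composition case, and within it two points require care. First, the interleaving/reassembly of a weak move of one component with a concrete step of the other must respect the non-sequential semantics: I need that the multiset obtained by adjoining $R$'s labels (or the induced $\tau$) to the labels of $P_2$'s weak derivation remains pairwise unrelated and that the Diamond Property actually produces the composed transition with the correctly composed residual function $\lambda_1\circ\cdots\circ\lambda_n$ — the statement of Lemma \ref{diomand} is tailored for this, but applying it along an entire $\tau^{\ast}$-bracketed derivation needs an induction on its length. Second, handling \emph{multi}-labelled visible transitions of $P_1\oplus_D R$ uniformly (rather than only single ones): here I would first reduce to the single-labelled case via Diamond, process the single steps one at a time accumulating the matching weak moves on the $P_2\oplus_D R$ side, and then repackage them into one $\xLongrightarrow{\widehat{\Delta}^c}$ move, checking that the corresponding multiset $\widehat{\Delta}^c$ built step-by-step is a legitimate corresponding multiset for the whole $\widehat\Delta$ and that all the location-tracking constraints compose. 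The recursion case is also slightly delicate: for $A(\vec v)$ with $A(\vec x)\stackrel{\rm def}{=}P$ containing the hole, I rely on the constraint in the definition of $Y$-context that $Y$ does not occur inside a recursive body, so that substitution into a context never duplicates the hole under a $\stackrel{\rm def}{=}$, reducing the recursive case to finitely many applications of the constructor cases already handled.
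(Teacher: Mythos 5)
Your overall strategy (structural induction over the constructors of a $Y$-context, with parallel composition as the one substantial case) differs from the paper's: the paper instead builds a single localized relation $\mathcal{R}^{+}$ --- the union of the identity relation, a \emph{parallel extension} $\mathcal{R}^{\prime}$ of $\mathcal{R}$, and all relations $R[\mathcal{R}/Y]$ for $Y$-contexts $R$ --- and shows it is a weak bisimulation in one pass. The reason for that packaging is that a prefix context $f(x)\cdot(Y,R_2,\ldots,R_n)$ evolves, after one input, into a parallel composition $R^{\prime}\oplus_C P\{v/x\}$, so the prefix case cannot be closed without the parallel case being available \emph{inside the same relation}. Your plan does recognize this (you route the prefix case through the parallel case and pair up the released components), so the decomposition itself is workable, and your use of the Diamond Property to decompose and reassemble multi-labelled and $\tau^{\ast}$-bracketed moves matches what the paper does.

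The genuine gap is in your candidate relation for the parallel case. You take $\mathcal{S}=\{(P_1\oplus_D R,\ E\uplus\mathrm{Id}_{|R|},\ P_2\oplus_D R)\mid (P_1,E,P_2)\in\mathcal{R}\}$ with a \emph{single} edge relation $D$ on both sides. That is only meaningful at the initial instant: after one round of matching, $P_1$ has moved to $P_1^{\prime}$ and $P_2$ to $P_2^{\prime}$ with $|P_1^{\prime}|\neq|P_2^{\prime}|$ in general, and the rules (Com1)/(Com2) produce two \emph{different} edge relations $C^{\prime}\subseteq|R|\times|P_1^{\prime}|$ and $D^{\prime}\subseteq|R|\times|P_2^{\prime}|$, so the new triple is not of your stated form for any single $D$. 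More importantly, the transfer property at the next step --- specifically the synchronization sub-case, where a location $q$ of $R$ that can communicate with $p^{\prime}\in|P_1^{\prime}|$ must also be able to communicate with the $E^{\prime}$-related location of $P_2^{\prime}$ --- needs the invariant that the triple $(C^{\prime},D^{\prime},E^{\prime})$ is \emph{adapted}: $(q,p^{\prime})\in C^{\prime}$ iff $(q,p^{\prime\prime})\in D^{\prime}$ whenever $(p^{\prime},p^{\prime\prime})\in E^{\prime}$. This is exactly the paper's notion of adapted triple, and the bulk of Proposition~\ref{propositionparallel} is the case analysis verifying that adaptedness is preserved by every kind of transition (internal to $R$, internal to $P$, cross-boundary synchronization, and visible multi-labelled moves). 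Your proposal neither states this invariant nor preserves it; "closing $\mathcal{S}$ under transitions" makes membership trivial but leaves you unable to discharge the synchronization case at any stage after the first. The residual-function bookkeeping you do mention is the condition on $E^{\prime}$, which is a separate (and easier) obligation. Supplying the adapted-triple invariant and its preservation proof is the missing core of the argument; the remaining constructor cases are, as you say, routine (the hole is always guarded in a canonical $Y$-context, so the classical failure of weak bisimilarity under $+$ does not arise).
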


Here, the main challenge is to extend the localized relation $\mathcal{R}$ to another localized relation $\mathcal{R}^{\prime}$ to handle the parallel composition in VCCTS. 
In CCS \cite{Milner1989}, if $\mathcal{R}$ is a weak bisimulation and $P~\mathcal{R}~Q$, then we can prove that $S\mid P$ and $S\mid Q$ are weakly bisimilar just by proving that a new relation $\mathcal{R}^{\prime}$ extending $\mathcal{R}$, such that $(S\mid P)~\mathcal{R}^{\prime}~(S\mid Q)$, is a weak bisimulation. However, we cannot simply do this in VCCTS. Moreover we have to record the locations of the subprocesses and the edges of locations which represent the possible communications between subprocesses.
To overcome this obstacle in VCCTS, we use $S\oplus_C P$ to specify the parallel composition of $S$ and $P$ with some $C\subseteq|S|\times|P|$. Similarly, we say that $S\oplus_D Q$ with some relation $D\subseteq |S|\times|Q|$ is a parallel composition of $S$ and $Q$. 
Then we prove the parallel extensions, $S\oplus_C P$ and $S\oplus_D Q$, are weakly bisimilar. The proof depends on locations and the relations $C$ and $D$.

An easy consequence of Theorem \ref{bisimilationCongruence} and Proposition \ref{propsitionbisimulationbarb} is the following.
\begin{theorem}[\bf Soundness]\label{maintheorem}
If $P\approx Q$, then $P\cong Q$.
\end{theorem}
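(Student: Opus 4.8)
The plan is to derive Theorem~\ref{maintheorem} purely as a corollary of the two results stated immediately before it, namely Theorem~\ref{bisimilationCongruence} (that $\approx$ is a congruence) and Proposition~\ref{propsitionbisimulationbarb} (that $\approx$ implies $\stackrel{\bullet}{\approx}$). The strategy is the classical one used for CCS and the $\pi$-calculus: show that $\approx$ is a congruence contained in $\stackrel{\bullet}{\approx}$, and then invoke the fact (recorded just after Proposition~\ref{prop:congruence}) that $\cong$ is the \emph{largest} congruence included in $\stackrel{\bullet}{\approx}$. Concretely, first I would observe that $\approx$ is an equivalence relation on ${\sf Proc}$ by Proposition~\ref{equivalence}, and that it is preserved by every $Y$-context by Theorem~\ref{bisimilationCongruence}; hence $\approx$ is a congruence in the sense defined before Proposition~\ref{prop:congruence}.

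Next I would combine this with Proposition~\ref{propsitionbisimulationbarb}: for all $P,Q\in{\sf Proc}$, if $P\approx Q$ then $P\stackrel{\bullet}{\approx}Q$, so $\approx\;\subseteq\;\stackrel{\bullet}{\approx}$. Now take any $P\approx Q$ and any $Y$-context $R$. Since $\approx$ is a congruence, $R[P/Y]\approx R[Q/Y]$, and therefore $R[P/Y]\stackrel{\bullet}{\approx}R[Q/Y]$ by Proposition~\ref{propsitionbisimulationbarb} again. As this holds for every $Y$-context $R$, the definition of weak barbed congruence gives $P\cong Q$. (Equivalently, one may phrase the last step via Proposition~\ref{prop:congruence}: $\approx$ is a congruence contained in the equivalence $\stackrel{\bullet}{\approx}$, and $\cong=\overline{\stackrel{\bullet}{\approx}}$ is by that proposition the largest such congruence, so $\approx\;\subseteq\;\cong$.)

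I do not expect any genuine obstacle in this particular proof, since all the substantive work has been discharged in the preceding results. The only point requiring a little care is making sure the notion of ``congruence'' invoked here is exactly the one in the statement of Theorem~\ref{bisimilationCongruence} and in the paragraph defining $Y$-contexts, i.e. closure under one-hole $Y$-contexts where $Y$ does not occur inside a recursive definition; one should check that $\approx$, being preserved by such contexts and being an equivalence, meets that definition verbatim, and that the substitution $R[P/Y]$ indeed stays within ${\sf Proc}$ (which follows from Lemma~\ref{lemma:cs} together with the assumption that $P,Q$ are data-closed canonical). With those bookkeeping remarks in place the argument is a two-line deduction, and the real mathematical content — especially the parallel-composition case of Theorem~\ref{bisimilationCongruence}, where the localized relation must be extended along the graph structure — lives entirely upstream.
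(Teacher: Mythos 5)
Your proposal is correct and is exactly the argument the paper intends: the theorem is stated as "an easy consequence of Theorem~\ref{bisimilationCongruence} and Proposition~\ref{propsitionbisimulationbarb}", and your two-line deduction (apply congruence of $\approx$ to any $Y$-context, then pass to $\stackrel{\bullet}{\approx}$ via Proposition~\ref{propsitionbisimulationbarb}) is precisely that consequence. Nothing further is needed.
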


\subsection{Completeness}
Inspired by \cite{Sangiorgi:2011}, the proof below requires the image-finite condition to use the stratification of weak bisimilarity. We adopt the method in \cite{Sangiorgi:2011}, and pay more attentions to the $n$-ary symbols, locations and non-sequential semantics.
\begin{definition}
An LLTS is image-finite if for all $P$, $\lambda$ and $\Delta$, the set $\{P^{\prime}\mid P\xLongrightarrow[\lambda]{\widehat{\Delta}}P^{\prime}\}$ is finite;
an LLTS is finitely-branching if it is image-finite, and for each $P$, the set $\{\Delta\mid P\xLongrightarrow[\lambda]{\widehat{\Delta}}P^{\prime} \hbox{ for some } P^{\prime} \}$ is finite.
\end{definition}

\begin{definition}[\bf Stratification of Weak Bisimilarity]
\begin{itemize}
  \item $\approx_0 \stackrel{\rm def}{=} {\sf Proc}\times\mathcal{P}({\sf Loc}^2)\times{\sf Proc}$;
  \item $(P,E,Q)\in \approx_{n+1}$, for $n\ge 0$,
        \begin{itemize}
          \item if $P \xrightarrow[\lambda]{\tau}P^{\prime}$ then
   $Q\xrightarrow[\rho]{\tau^{\ast}}Q^{\prime}$ with $(P^{\prime},E^{\prime},Q^{\prime})\in \approx_{n}$ for some $E^{\prime}\subseteq|P^{\prime}|\times|Q^{\prime}|$ such that, if $(p^{\prime},q^{\prime})\in E^{\prime}$ then $(\lambda(p^{\prime}),\rho(q^{\prime}))\in E$;
          \item if $P\xlongrightarrow[\lambda]{\widehat{\Delta}}P^{\prime}$ then
  $Q\xLongrightarrow[\rho,\rho_1,\rho^{\prime}]
  {\widehat{\Delta}^{c}}Q^{\prime}$ with the conditions that for any $p:\alpha\cdot({\vec L})\in \widehat{\Delta}$ there exists $q:\alpha\cdot({\vec M})\in \widehat{\Delta}^{c}$ such that $(p,\rho(q))\in E$ and $(P^{\prime},E^{\prime},Q^{\prime})\in \approx_{n}$ for some $E^{\prime}\subseteq|P^{\prime}|\times|Q^{\prime}|$ such that if $(p^{\prime},q^{\prime})\in E^{\prime}$ then $(\lambda(p^{\prime}),\rho\rho_1\rho^{\prime}(q^{\prime}))\in E$;
        \item the converses of the above two cases also hold;
        \end{itemize}

  \item $\approx_{\omega} \stackrel{\rm def}{=}\bigcap_{n\ge 0} \approx_n$
\end{itemize}
\end{definition}

\begin{lemma}\label{two_bisim}
On finitely-branching LLTSs, $\approx$ and $\approx_{\omega}$ coincide.
\end{lemma}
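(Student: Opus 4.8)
The plan is to establish the two inclusions $\approx\;\subseteq\;\approx_\omega$ and $\approx_\omega\;\subseteq\;\approx$ separately, with the first being routine and the second being the substantive part that uses the finitely-branching hypothesis.

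\textbf{The easy inclusion.} First I would show that $\approx$ refines every $\approx_n$, by induction on $n$. The base case $n=0$ is trivial since $\approx_0$ is the full product ${\sf Proc}\times\mathcal{P}({\sf Loc}^2)\times{\sf Proc}$. For the inductive step, suppose $(P,E,Q)\in\;\approx$, witnessed by a weak bisimulation $\mathcal{R}$ containing $(P,E,Q)$; I must check the two transfer clauses of $\approx_{n+1}$. Given $P\xrightarrow[\lambda]{\tau}P'$, the bisimulation clause of $\mathcal{R}$ yields $Q\xrightarrow[\rho]{\tau^\ast}Q'$ with $(P',E',Q')\in\mathcal{R}$ and the location-compatibility condition on $E'$; since $(P',E',Q')\in\;\approx$, the induction hypothesis gives $(P',E',Q')\in\;\approx_n$, which is exactly what is required. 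The visible-label clause is handled identically using the $\widehat{\Delta}$/$\widehat{\Delta}^c$ clause of weak bisimulation, and the converse directions follow because $\approx$ and each $\approx_n$ are symmetric localized relations. Hence $\approx\;\subseteq\;\approx_n$ for all $n$, so $\approx\;\subseteq\;\bigcap_n\approx_n\;=\;\approx_\omega$.

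\textbf{The hard inclusion.} The goal is to prove $\approx_\omega$ is itself a weak bisimulation, so that $\approx_\omega\;\subseteq\;\approx$. Symmetry of $\approx_\omega$ is immediate from symmetry of each $\approx_n$. So fix $(P,E,Q)\in\;\approx_\omega$ and a transition $P\xrightarrow[\lambda]{\tau}P'$ (the visible case is analogous). For every $n$ we have $(P,E,Q)\in\;\approx_{n+1}$, so there exist $Q'_n$, $\rho_n$ and $E'_n\subseteq|P'|\times|Q'_n|$ with $Q\xrightarrow[\rho_n]{\tau^\ast}Q'_n$, $(P',E'_n,Q'_n)\in\;\approx_n$, and $(p',q')\in E'_n$ implying $(\lambda(p'),\rho_n(q'))\in E$. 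Now the finitely-branching hypothesis enters: the set $\{Q'\mid Q\xLongrightarrow[\rho]{\widehat{\Delta_0^\tau}}Q'\}$ — equivalently $\{Q'\mid Q\xrightarrow{\tau^\ast}Q'\}$ together with $Q$ itself — is finite, and for a fixed target $Q'$ the set of relevant $E'\subseteq|P'|\times|Q'|$ is finite since $|P'|$ and $|Q'|$ are finite sets of locations, and there are only finitely many residual functions $\rho$ between finite location sets. Therefore among the witnesses $(Q'_n,\rho_n,E'_n)$ only finitely many distinct triples occur, so by the pigeonhole principle some single triple $(Q',\rho,E')$ serves as a witness for infinitely many $n$; since $\approx_0\supseteq\approx_1\supseteq\cdots$ is a decreasing chain, "$(P',E',Q')\in\;\approx_n$ for infinitely many $n$" forces $(P',E',Q')\in\;\approx_n$ for \emph{all} $n$, i.e. $(P',E',Q')\in\;\approx_\omega$. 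This produces exactly the matching transition required by the definition of weak bisimulation, and the $\widehat{\Delta}$-labelled clause is handled by the same pigeonhole argument applied to the finite set $\{Q'\mid Q\xLongrightarrow[\rho]{\widehat{\Delta}}Q'\}$ guaranteed finite by image-finiteness, together with the finitely many choices of $\widehat{\Delta}^c$-shaped data (here one should note that although there are infinitely many corresponding multisets in principle, the transfer condition only constrains their underlying symbols and location-pairings through $E$, and we have fixed $\widehat{\Delta}^c$ once and for all per the Remark).

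\textbf{Main obstacle.} I expect the delicate point to be the bookkeeping of residual functions and the localization data $E'$ in the pigeonhole step: one must argue that the space of "witness packages" $(Q',\rho,\rho_1,\rho',E')$ attached to a single transition of $P$ is finite, which relies on each process having a finite underlying graph (finitely many locations), on residual functions being functions between these finite sets, and on the image-finiteness clause bounding the reachable $Q'$. A secondary subtlety is making the argument uniform across the visible-transition clause, where the definition of $\xLongrightarrow[\rho,\rho_1,\rho']{\widehat{\Delta}^c}$ involves three composed residual functions and an intermediate process; here I would first invoke the diamond property (Lemma \ref{diomand}) and the decomposition $P\xrightarrow{\tau^\ast}P_1\xrightarrow{\widehat{\Delta}}P_2\xrightarrow{\tau^\ast}P'$ to reduce everything to finitely-branching data, then run the same pigeonhole selection. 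Everything else is a routine transfer-lemma verification.
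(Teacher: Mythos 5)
Your proof is correct and follows essentially the same route as the paper: the easy inclusion $\approx\,\subseteq\,\approx_\omega$ by induction on $n$, and the converse by showing $\approx_\omega$ is a weak bisimulation via a pigeonhole argument on the finitely many matching transitions guaranteed by the finitely-branching hypothesis, then using that the chain $\{\approx_n\}_n$ is decreasing. If anything, your bookkeeping of the witness packages $(Q',\rho,E')$ is slightly more careful than the paper's, which tacitly fixes $E'$ across all $n$.
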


By definition, if $P\approx Q$ then $(P,|P|\times|Q|,Q)\in \approx$.
It holds in this paper by adopting the relaxed communication constraints.
So in image-finite systems, if $(P,|P|\times|Q|,Q)\notin \approx_{\omega}$, then $P~/\!\!\!\!\!\!\approx Q$.
The key of the proof is to show that for any image-finite canonical processes $P$ and $Q$, if they are not weakly bisimilar, then $P\mid R$ and $Q\mid R$ are not weakly barbed bisimilar for some canonical process $R$.

\begin{theorem}\label{completeness}
Suppose that for $n\ge 0$, $(P,|P|\times|Q|,Q)\notin\approx_n $ and $P$, $Q$ are image-finite. Then there is a canonical process $R$ such that one of the following holds:
\begin{itemize}
  \item $P\mid R ~ /\!\!\!\!\!\!\stackrel{\bullet}{\approx} Q^{\prime} \mid R$ for all $Q^{\prime}$ such that $Q\xrightarrow[]{}^{\ast} Q^{\prime}$;
  \item $P^{\prime}\mid R ~ /\!\!\!\!\!\!\stackrel{\bullet}{\approx} Q \mid R$ for all $P^{\prime}$ such that $P\xrightarrow[]{}^{\ast} P^{\prime}$.
\end{itemize}
\end{theorem}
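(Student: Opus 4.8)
The plan is to prove Theorem~\ref{completeness} by induction on $n$, following the classical Hennessy--Milner/Sangiorgi strategy (as in \cite{Sangiorgi:2011}) but carefully adapting the ``distinguishing context'' construction to the $n$-ary symbols, to locations, and to the non-sequential semantics of VCCTS. The base case $n=0$ is vacuous, since $\approx_0$ is the full relation, so the hypothesis $(P,|P|\times|Q|,Q)\notin\approx_0$ never holds. For the inductive step, assume the statement for $n$ and suppose $(P,|P|\times|Q|,Q)\notin\approx_{n+1}$ with $P,Q$ image-finite. By the definition of $\approx_{n+1}$, some clause fails: either there is a $\tau$-move $P\xrightarrow[\lambda]{\tau}P'$ (respectively a visible multi-step $P\xlongrightarrow[\lambda]{\widehat\Delta}P'$) for which no matching move of $Q$ leads into $\approx_n$ with the required edge condition, or the symmetric failure with $P$ and $Q$ swapped. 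By symmetry of the conclusion (the two bullets are mirror images) we may focus on the first case. Since $P$ is image-finite, there are only finitely many $P'$ reachable, and for each candidate matching $Q'$ (again finitely many, by image-finiteness of $Q$) and each admissible edge set $E'$, we have $(P',E',Q')\notin\approx_n$; the induction hypothesis then supplies, for each such pair, a distinguishing canonical process $R_{P',Q'}$.

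The core construction is to assemble these finitely many ``local'' distinguishers into a single global distinguishing context $R$ that detects the specific failing move. I would build $R$ as a parallel component (connected by $\oplus_C$ to all locations of $P$ and $Q$) that first performs the dual actions witnessing the move in question --- for a visible $\widehat\Delta = \{p_j\!:\!\alpha_j\cdot(\vec L_j)\}_j$ this means a component offering, at a fresh location, the matching co-actions $\overline{\alpha_j}$ along edges to the relevant locations, together with a fresh ``success'' barb $\omega$ that is emitted exactly when the intended synchronizations have all fired; for a $\tau$-move, $R$ need only use $\omega$ to mark ``$P$ has (internally) moved and the opponent must follow.'' After the move, $R$ continues as the appropriate inductively-obtained $R_{P',Q'}$, plugged in at the locations produced by the residual function $\lambda$ so that the edge-tracking condition ``$(p',q')\in E' \Rightarrow (\lambda(p'),\rho(q'))\in E$'' is respected; here I would exploit Proposition~\ref{prop:congruence} and the fact (noted right before the theorem) that under the relaxed communication constraints $P\approx Q$ is equivalent to $(P,|P|\times|Q|,Q)\in\approx$, so that it suffices to work with the canonical full edge relation $|P|\times|Q|$ and its images. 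The Diamond Property (Lemma~\ref{diomand}) is what lets me realize the multi-labelled transition of $R$ synchronizing with $P$ by a sequence of single $\tau$-steps, so the resulting $P\mid R \xrightarrow{}^{\ast}(\cdots)$ lives in the reduction semantics used by $\stackrel{\bullet}{\approx}$; Proposition~\ref{propsitionbisimulationbarb} and Lemma~\ref{lemma:barbbisim} are used to transfer between the bisimulation levels and the barbed world.

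Concretely the steps are: (1) dispatch $n=0$; (2) unfold the failure of $\approx_{n+1}$ into a concrete offending move of $P$ and a finite set of ``bad'' triples $(P',E',Q')\notin\approx_n$; (3) invoke the induction hypothesis to get finitely many canonical distinguishers $R_{P',Q'}$; (4) design the single context $R$ = ``co-action trigger with fresh barb $\omega$'' $\oplus$ (continuation into the relevant $R_{P',Q'}$), making sure $R$ is a genuine canonical process (guarded sums only) and that its graph is disjoint from those of $P$ and $Q$ as required by $\oplus_C$; (5) verify that $P\mid R$ can reach a state with barb $\omega$ (or with the barb pattern witnessing the move) via the triggered synchronization, using Lemma~\ref{diomand} to serialize it; (6) verify that \emph{no} reduct $Q'$ of $Q$ gives $Q'\mid R \stackrel{\bullet}{\approx} P\mid R$: any attempted matching by $Q'\mid R$ must either fail to produce $\omega$, or produce it only after entering a state of the form $Q''\mid R_{P',Q'}$ with $(P',E',Q'')\notin\approx_n$, where the inductive distinguishing property of $R_{P',Q'}$ gives the contradiction; (7) observe that if the offending move is on the $Q$ side instead, the same construction with roles swapped yields the second bullet.

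The main obstacle I expect is step~(4)--(6): making the context $R$ \emph{location-aware} and \emph{order-insensitive}. Because the semantics is non-sequential, $R$ must be able to synchronize simultaneously with several locations of $P$ (one per element of $\widehat\Delta$), and the residual/edge bookkeeping must line up so that the inductive distinguisher $R_{P',Q'}$ is attached at exactly the locations where the witnessing triple demands it --- all of this while $R$ stays in the canonical fragment and while the $\alpha$-conversion conditions on symbols and on the disjointness of graphs are met. A subtle point is that the opponent $Q'\mid R$ may interleave $R$'s trigger actions with unrelated $\tau$-moves or with spurious partial synchronizations on the $n$ branches of an $n$-ary prefix; the fresh symbol $\omega$ (used nowhere in $P$ or $Q$) and a careful gadget that only releases $\omega$ after \emph{all} required branches have fired are what rule these out, and the Diamond Property guarantees that any such interleaving can be permuted without changing the reachable set, so the analysis of ``what $Q'\mid R$ can do'' stays finite and tractable. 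Getting this gadget exactly right --- and checking it interacts correctly with restriction $\backslash I$ so the barbs are observable in the sense of Definition (Barb for Canonical Process) --- is the technical heart of the proof.
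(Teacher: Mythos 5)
Your proposal is correct in outline and follows essentially the same route as the paper's proof: induction on the stratification index, image-finiteness to reduce the defender's matching moves to a finite set $\{Q_j\}_{j\in J}$, a testing context built from fresh (co-)symbols whose barbs appear/disappear to force the offending move and then to commit to a branch $j$, an auxiliary guarded recursive driver (the paper's $D\stackrel{\rm def}{=}\overline{d}(0)\cdot(D)$) to resolve the sum over $J$ inside the canonical fragment, and the Diamond Property to serialize the multi-labelled case. The only cosmetic difference is that the paper detects the simultaneous firing of the $k$ triggers via a \emph{set}-barb $\{\overline{c_1'},\ldots,\overline{c_k'}\}$ together with the disappearance of the $\overline{g_i}$, rather than via a single collected success barb $\omega$, but this does not change the substance of the argument.
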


A straightforward consequence of Theorem \ref{maintheorem} and Theorem \ref{completeness} is the following.
\begin{theorem}[\bf Characterization]
For any image-finite canonical processes $P$ and $Q$, $P\cong Q$ iff $P\approx Q$.
\end{theorem}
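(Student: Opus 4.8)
\textbf{Proof proposal for Theorem \ref{completeness}.}
The plan is to follow the classical Sangiorgi-style construction \cite{Sangiorgi:2011} of a distinguishing context, adapted to the non-sequential, $n$-ary, located setting of VCCTS. I would proceed by induction on $n$, the stratification level at which $P$ and $Q$ first differ. The base case $n=0$ is vacuous since $\approx_0$ is the full relation, so the first genuine case is $n=1$: here $(P,|P|\times|Q|,Q)\notin\approx_1$ means one side, say $P$, has a move that the other cannot match up to barb preservation. Because the stratified clauses are built from single $\tau$-moves and from $\xlongrightarrow{\widehat{\Delta}}$-moves, the failure is witnessed by a concrete multi-labelled transition $P\xlongrightarrow[\lambda]{\widehat{\Delta}}P'$ (or a $\tau$-transition) for which no matching $Q\xLongrightarrow{\widehat{\Delta}^c}Q'$ with the required location and $E'$ conditions exists. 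By the Diamond Property (Lemma \ref{diomand}) I may decompose $\widehat{\Delta}$ into its constituent single labels, so it suffices to build a tester $R$ that ``consumes'' the finitely many actions in $\widehat{\Delta}$ via dual (co-)symbols along suitable edges, and then exposes a fresh barb (say on a symbol $\omega\in\Sigma$ not occurring in $P$, $Q$, or $\widehat{\Delta}$) precisely when the whole multiset has been matched. Concretely, $R$ is a canonical process $G_R\langle\Phi_R\rangle\backslash I$ whose components offer the co-actions $\overline{\alpha}$ for each $p:\alpha\cdot(\vec L)\in\widehat{\Delta}$, wired by the complete relation $|{\cdot}|\times|R|$ of the $\mid$ composition so that the located side-conditions of (Com2) are satisfiable exactly for the intended communications; after firing them $R$ evolves to a process with the $\omega$-barb, while all intermediate co-actions are restricted away so that $P\mid R$ cannot reach an $\omega$-barbed derivative by any other route.

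For the inductive step, suppose the claim holds for $n$ and $(P,|P|\times|Q|,Q)\notin\approx_{n+1}$. Then there is a transition of $P$, say $P\xlongrightarrow[\lambda]{\widehat{\Delta}}P'$, such that for \emph{every} matching response $Q\xLongrightarrow[\rho,\rho_1,\rho']{\widehat{\Delta}^c}Q'$ (and there are only finitely many, by image-finiteness) one has $(P',E',Q')\notin\approx_n$ for the induced $E'$; symmetrically if the offending move is on the $Q$ side. For each such $Q'_j$ the induction hypothesis yields a canonical tester $R_j$ distinguishing $P'$ from $Q'_j$ in the sense of the two bullet items. I would then assemble a single context that (i) first performs, via fresh dual symbols, the communications realizing $\widehat{\Delta}$ against $P$ (resp.\ $Q$), driving $P$ to $P'$; (ii) simultaneously records, through the barbs it can and cannot produce, which of the finitely many $Q'_j$ the opponent could have reached; and (iii) then behaves as the corresponding $R_j$. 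Because the $R_j$ are finitely many and image-finiteness bounds the opponent's reachable set, this amounts to a finite ``case split'' encoded as a guarded sum of testers, one summand per possible $Q'_j$, composed in parallel after a synchronization phase — exactly the pattern of Sangiorgi's proof, with the extra bookkeeping that (Com2) forces on the location relations $D,D'$ handled by taking all $\mid$-compositions complete and all auxiliary symbols restricted. One then checks that $P\mid R$ does reach a configuration with the distinguishing barb along the unique intended reduction path while $Q\mid R$ (for no $Q\xrightarrow{}^\ast Q'$) can, yielding the first bullet of the theorem; the symmetric analysis of a $Q$-side failure yields the second bullet.

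The main technical work, beyond the combinatorics already present in the CCS argument, is ensuring that the located semantics does not give the opponent spurious matching moves or give $R$ spurious extra behaviours. Two points need care. First, the residual/location conditions in the weak-bisimulation clauses — that matched pairs of locations descend, via $\lambda$ and $\rho\rho_1\rho'$, into the relation $E$ — must be reflected in how $R$ attaches its co-actions: I must choose the edges in $G_R$ and the relation in each $\mid$ so that a communication is enabled \emph{iff} the location pair is one the bisimulation game would have to consider, otherwise the barb test could succeed for the ``wrong'' reason. This is where the relaxed post-communication connectivity of the present calculus (every $P_i\{v/x\}$ connects to every $Q_j$) is essential and, as the introduction notes, is precisely what makes the construction go through here but not in CCTS. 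Second, since transitions are multi-labelled, I must use the Diamond Property to serialize $\widehat{\Delta}$ and argue that no partial synchronization with $R$ can expose $\omega$ prematurely; this is arranged by guarding the $\omega$-emission behind the full sequence of co-actions and restricting all of them, so that any derivative of $P\mid R$ with the $\omega$-barb must have gone through the complete intended interaction. The rest — that $\stackrel{\bullet}{\approx}$ is insensitive to the exact derivative chosen (it quantifies over all $\xrightarrow{}^\ast$-reducts), so it suffices to exhibit one distinguishing barb — is routine from the definition of weak barbed bisimulation and Lemma \ref{lemma:barbbisim}.
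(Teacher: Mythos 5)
Your proposal attacks the right target: the characterization theorem reduces to the soundness theorem (already established) plus the completeness statement of Theorem~\ref{completeness}, and your induction on the stratification level $n$, the use of image-finiteness to obtain a finite set $\{Q_j\}_{j\in J}$ of possible responses, the fresh-symbol testers with a per-$j$ case split, and the appeal to the Diamond Property to serialize $\widehat{\Delta}$ are all exactly the architecture of the paper's proof. Two of your design choices differ from the paper's without changing the substance. First, you certify success by the \emph{appearance} of a single fresh barb $\omega$ guarded behind the whole interaction, with auxiliary symbols restricted away; the paper instead uses no restriction at all and works with freshness alone, certifying that the action was consumed by the \emph{disappearance} of a fresh barb $\overline{g_i}$ placed in sum with the co-action $\overline{f_i}(i)\cdot(N_i,\ast,\ldots,\ast)$, and certifying branch selection by the appearance of fresh barbs $\overline{c_j}$; in the multi-labelled case it uses the set-barb $\{\overline{c_1'},\ldots,\overline{c_k'}\}$ at $k$ distinct locations to force all $k$ actions to have been matched. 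Both designs work. Second, you worry about wiring the tester's edges so that communications are enabled exactly for the location pairs the bisimulation game considers; the paper sidesteps this entirely by always taking $E=|P|\times|Q|$ (which the relaxed post-communication connectivity makes legitimate) and composing with the complete relation $\mid$, so the residual and location side-conditions are trivially satisfied — your extra care is harmless but unnecessary.

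The one point you must make explicit for the construction to go through in VCCTS is how the case split over $j\in J$ is resolved. In a canonical process every sum is prefix-guarded, and in the reduction semantics a prefix fires only by communicating with a dual prefix at a connected location: there is no analogue of Sangiorgi's $\sum_j\tau.R_j$. The paper therefore introduces a fresh symbol $d$ and a recursive driver $D\stackrel{\rm def}{=}\overline{d}(0)\cdot(D)$ composed in parallel, and guards each summand of the case split as $d(x)\cdot(M_j+\overline{c_j}(0)\cdot(\ast))$, so that the internal choice is realized as a reduction against $D$. Your phrase ``guarded sum of testers, composed in parallel after a synchronization phase'' gestures at this but does not say what discharges the guards; without such a driver the sum is stuck and $P\mid R$ never reaches the configuration where the inductive hypothesis applies. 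With that addition your argument coincides with the paper's.
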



\section{Discussions on Top-down Tree Automata and VCCS}\label{Discussion}
\subsection{Top-down Tree Automata}\label{top_down_automata}
\begin{definition}
The set of $\Sigma$-trees with value passing is the smallest set such that
\begin{itemize}
  \item $\ast()\cdot()$ (simply written $\ast$) is a $\Sigma$-tree with value passing,
  \item if $t_1,\ldots,t_n$ are $\Sigma$-trees with value passing, $f\in \Sigma_n$ and $x\in {\bf Var}$, then $f(x)\cdot(t_1,\ldots,t_n)$ is a $\Sigma$-tree with value passing.
\end{itemize}
\end{definition}

We recall the definition of top-down tree automata \cite{tata2007} without the part of initial states, i.e. $\mathcal{A}=(\mathcal{Q},\Sigma,\mathcal{T})$.
Let $\mathcal{Q}$ be a finite subset of $\mathcal{K}$. $(Q,f(x),(Q_1,\ldots,Q_n))$ represents the transition $Q(f(x)\cdot(t_1,\ldots,t_n))\rightarrow f(x)\cdot(Q_1(t_1),\ldots,Q_n(t_n))$,
where $f\in \Sigma_n (n\ge 1)$, $x\in {\bf Var}$, $Q,Q_1,\ldots,Q_n \in \mathcal{Q}$ and $t_1,\ldots,t_n$ are $\Sigma$-trees with value passing. Therefore, $f(x)\cdot(t_1,\ldots,t_n)$ is {\it recognized} by $\mathcal{A}$ at state $Q$, if there exists $(Q,f(x),(Q_1,\ldots,Q_n))\in \mathcal{T}$ and $t_i$ is recognized by $\mathcal{A}$ at state $Q_i$ for $i\in \{1,\ldots,n\}$.


Given $\mathcal{A}= (\mathcal{Q},\Sigma,\mathcal{T})$, for any state $Q\in\mathcal{Q}$, we define a process $\langle\mathcal{A}\rangle_Q$ for the pair $(\mathcal{A},Q)$. We define $\langle\mathcal{A}\rangle_Q^{\mathcal{X}}$ in which $\mathcal{X}$ is a finite subset of $\mathcal{K}$ and $\mathcal{X}$ is the set of processes that have been defined. Let $\langle\mathcal{A}\rangle_Q = \langle\mathcal{A}\rangle_Q^{\emptyset}$.
\begin{itemize}
  \item If $Q\notin\mathcal{X}$, then $\langle\mathcal{A}\rangle_Q^{\mathcal{X}}\stackrel{\rm def}{=} S$ where $S$ is the sum of prefixed processes $f(x)\cdot(\langle\mathcal{A}\rangle_{Q_1}^{\mathcal{X}\cup\{Q\}},\ldots,$ $\langle\mathcal{A}\rangle_{Q_n}^{\mathcal{X}\cup\{Q\}})$ for each $(Q,f(x),(Q_1,\ldots,Q_n))\in \mathcal{T}$.
  \item If $Q\in\mathcal{X}$, then $\langle\mathcal{A}\rangle_Q^{\mathcal{X}} \stackrel{\rm def}{=} Q$.
\end{itemize}
We can check that ${\sf cs}(\langle\mathcal{A}\rangle_Q)$ is the sum of processes $f(x)\cdot(\langle\mathcal{A}\rangle_{Q_1}^{\mathcal{X}\cup\{Q\}},\ldots,\langle\mathcal{A}\rangle_{Q_n}^{\mathcal{X}\cup\{Q\}})$ for all $(Q,f(x),(Q_1,\ldots,Q_n))\in \mathcal{T}$. For each $\Sigma$-tree with value passing $t = f(x)\cdot(t_1,\ldots,t_n)$, we define ${\sf proc}(t) = \overline{f}(v)\cdot({\sf proc}(t_1),\ldots,{\sf proc}(t_n))$, for some $v\in {\bf Val}$, and ${\sf proc}(\ast)=\ast$. Moreover, a process $G\langle\Phi\rangle$ is an idle process if $\Phi(p)= \ast$ for any $p\in|G|$.
\begin{proposition}\label{tree_automata}
Let $\mathcal{A}=(\mathcal{Q},\Sigma,\mathcal{T})$ be a top-down tree automaton, let $Q\in \mathcal{Q}$ and let $t$ be a $\Sigma$-tree with value passing. If $t$ is recognized by $\mathcal{A}$ at state $Q$, then $\langle\mathcal{A}\rangle_Q\mid{\sf proc}(t)$ can reduce to an idle process.
\end{proposition}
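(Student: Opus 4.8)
The plan is to proceed by structural induction on the $\Sigma$-tree with value passing $t$ that is recognized by $\mathcal{A}$ at state $Q$, simultaneously strengthening the statement so that the induction goes through. The key observation is that the reduction $\langle\mathcal{A}\rangle_Q \mid {\sf proc}(t) \xrightarrow[]{}^\ast$ (idle) should proceed by a leftmost/outermost communication: ${\sf cs}(\langle\mathcal{A}\rangle_Q)$ is a sum containing the prefixed summand $f(x)\cdot(\langle\mathcal{A}\rangle_{Q_1}^{\{Q\}},\ldots,\langle\mathcal{A}\rangle_{Q_n}^{\{Q\}})$ coming from the transition $(Q,f(x),(Q_1,\ldots,Q_n))\in\mathcal{T}$ used at the root, while ${\sf cs}({\sf proc}(t)) = \overline{f}(v)\cdot({\sf proc}(t_1),\ldots,{\sf proc}(t_n))$ for $t=f(x)\cdot(t_1,\ldots,t_n)$ and some value $v$. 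Since $\langle\mathcal{A}\rangle_Q$ and ${\sf proc}(t)$ sit on the two vertices of a complete two-vertex graph (the $\mid$ composition), they are connected, so (R-React) applies: the single reduction step yields a process $P'$ whose vertices are $\bigcup_{i=1}^n |\langle\mathcal{A}\rangle_{Q_i}^{\{Q\}}\{v/x\}| \cup \bigcup_{i=1}^n |{\sf proc}(t_i)|$, with $\langle\mathcal{A}\rangle_{Q_i}^{\{Q\}}\{v/x\}$ connected to ${\sf proc}(t_j)$ for all $i,j$ by clause (b) of (R-React) (this is exactly where the relaxed communication constraints of the present paper are used: in CCTS only $i=j$ would be connected, and the argument would break).

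The first subtlety to handle is the unfolding of recursive constants: $\langle\mathcal{A}\rangle_{Q_i}^{\{Q\}}$ may be the constant $Q_i$ itself (if $Q_i = Q$ or, more generally, in the recursive definition machinery it resolves to a constant $\langle\mathcal{A}\rangle_{Q_i}$), and one should check that ${\sf cs}(\langle\mathcal{A}\rangle_{Q_i})$ equals ${\sf cs}$ of the corresponding sum of $f$-prefixed processes — this is asserted in the paragraph preceding the proposition, so I would invoke it. Also, since $x\notin{\sf fv}$ of the bodies in the way that matters for trees built without free data variables (we are in ${\sf Proc}$, i.e. data-closed), the substitution $\{v/x\}$ either does nothing or merely instantiates the placeholder variable, and in any case $\langle\mathcal{A}\rangle_{Q_i}^{\{Q\}}\{v/x\}$ is again (up to ${\sf cs}$) the process $\langle\mathcal{A}\rangle_{Q_i}$. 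So after one step $P'$ is, as a canonical process, the parallel composition $\bigoplus_{D}$ of the $2n$ components $\langle\mathcal{A}\rangle_{Q_i}$ and ${\sf proc}(t_i)$, with the edge relation $D$ making each $\langle\mathcal{A}\rangle_{Q_i}$ adjacent to each ${\sf proc}(t_j)$, in particular to ${\sf proc}(t_i)$.

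The induction hypothesis applies to each pair $(\langle\mathcal{A}\rangle_{Q_i}, {\sf proc}(t_i))$: since $t$ is recognized at $Q$ via $(Q,f(x),(Q_1,\ldots,Q_n))$, each $t_i$ is recognized by $\mathcal{A}$ at $Q_i$, so $\langle\mathcal{A}\rangle_{Q_i}\mid{\sf proc}(t_i)$ reduces to an idle process. The remaining work is a ``locality'' / context lemma: a reduction sequence valid for $R_1\mid R_2$ remains valid when $R_1\mid R_2$ sits as a subconfiguration inside a larger graph $P'$ (here with extra vertices ${\sf proc}(t_j)$, $j\neq i$, and the extra edges in $D$), because (R-React) only ever consumes an existing edge and the residual function inherits connections of the untouched vertices via clause (c). Formally I would prove: if $P = G\langle\Phi\rangle$ reduces to $P''$ using only vertices in some $H \subseteq |G|$, and $H$ is ``closed'' in the sense that a React step between $p,q\in H$ produces residuals still tracked inside the corresponding sub-zone, then the same step is available in any $P'$ containing $G$ as an induced subgraph, and the rest of $P'$ is carried along unchanged. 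Applying this iteratively, component $i$ contributes a sub-sequence driving $\langle\mathcal{A}\rangle_{Q_i}\mid{\sf proc}(t_i)$ to all-$\ast$; doing this for $i=1,\ldots,n$ in turn (the steps for different $i$ touch disjoint vertex zones, so they do not interfere, and ${\rm cs}$ of the $\ast$-valued residuals stays $\ast$) drives all of $P'$ to an idle process. Since $\ast$-only configurations are exactly the idle processes, $\langle\mathcal{A}\rangle_Q\mid{\sf proc}(t) \xrightarrow[]{}\ P' \xrightarrow[]{}^\ast$ (idle), closing the induction.

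The base case $t=\ast$ is immediate: ${\sf proc}(\ast)=\ast$ and ${\sf cs}(\ast)=\ast$, so $\langle\mathcal{A}\rangle_Q\mid\ast$ is already (or trivially reduces to, depending on whether $\langle\mathcal{A}\rangle_Q$ has any enabled internal React with $\ast$, which it does not since $\ast$ has no prefixed summand) idle — here one only needs that the statement asks for reduction to \emph{an} idle process and that we may also view this base case as subsumed once we arrange the induction on tree height. The main obstacle I anticipate is the locality/context lemma for reductions inside a larger graph: one must track the residual function $\lambda$ and the inherited edge relation $\frown_{P'}$ carefully across a whole sequence of React steps to be sure that the ``environment'' vertices ${\sf proc}(t_j)$ and the edges of $D$ never block an intended step and never get spuriously consumed — this is bookkeeping-heavy but routine given the definition of (R-React), and it is precisely the place where the relaxed (fully-connected) post-communication topology of this paper, as opposed to CCTS, is essential.
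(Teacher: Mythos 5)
Your proposal is correct and follows essentially the same route as the paper's own proof: one (R-React) step at the root consuming the summand for the transition $(Q,f(x),(Q_1,\ldots,Q_n))$, then induction on the subtrees, driving each pair $\langle\mathcal{A}\rangle_{Q_i}\mid{\sf proc}(t_i)$ to idle via the matching-index edges and a locality argument that the paper compresses into ``we can make a choice for the communications\ldots this kind of choice is always possible at each step.'' One small inaccuracy in an aside: your parenthetical claim that the argument would break in CCTS is wrong --- the proof only ever uses the $i=j$ edges, which CCTS also provides (the paper explicitly notes this direction was already valid in CCTS); the relaxed post-communication topology matters only for the failure of the converse, as the counterexample following the proposition shows.
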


However, the other direction is not satisfied. We show an example for this.
\begin{example}[Counterexample]\label{Counterexample}
$\mathcal{A}=(\mathcal{Q},\Sigma,\mathcal{T})$ is a top-down tree automaton, $\mathcal{Q}= \{Q,Q_1,Q_2,Q_{11},Q_{12},Q_{21},Q_{22}\}$, $\Sigma=\{f,g_1,g_2\}$, and $\mathcal{T} = \{(Q,f(x),(Q_1,Q_2)),$ $(Q_1,g_1(x),(Q_{11},Q_{12})),(Q_2,g_2(x),(Q_{21},Q_{22}))\}$, see Fig. \ref{fig:automata_tree}. Let $t = f(x)\cdot(g_1(x)\cdot(g_2(x)\cdot(\ast,\ast),\ast),\ast)$, see Fig. \ref{fig:automata_tree}.
We have $\langle\mathcal{A}\rangle_Q = f(x)\cdot(g_1(x)\cdot(\ast,\ast),g_2(x)\cdot(\ast,\ast))$ and ${\sf proc}(t) = \overline{f}(1)\cdot(\overline{g_1}(1)\cdot(\overline{g_2}(1)\cdot(\ast,\ast),\ast),\ast)$.
Process $f(x)\cdot(g_1(x)\cdot(\ast,\ast),g_2(x)\cdot(\ast,\ast))\mid \overline{f}(1)\cdot(\overline{g_1}(1)\cdot(\overline{g_2}(1)\cdot(\ast,\ast),\ast),\ast)$ can reduce to an idle process. But $t$ cannot be recognized by $\mathcal{A}$ at $Q$.
\qed\end{example}

\begin{figure}[!tb]
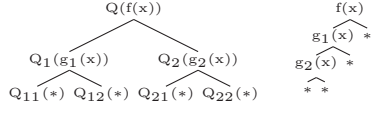
\tiny
\centering
$\Tree [.Q(f(x)) [.Q_1(g_1(x)) Q_{11}($\ast$) Q_{12}($\ast$) ] [.Q_2(g_2(x)) Q_{21}($\ast$)  Q_{22}($\ast$) ] ]\qquad
\Tree [.f(x) [.g_1(x) [.g_2(x) $\ast$ $\ast$  ] $\ast$  ]  $\ast$  ] $
\caption{A Tree Automaton (the Left One) and a Tree (the Right One)}\label{fig:automata_tree}
\end{figure}
\subsection{Embedding Value-passing CCS in VCCTS}
In Milner's CCS book \cite{Milner1989}, a version of CCS with value passing (VCCS) has been studied and VCCS could be translated into pure CCS, i.e. an interleaving semantics was given to VCCS.
Therefore, similar notations of VCCS can be derived from CCS immediately (cf. \cite{Milner1989} for more details).
In this part, we intend to embed VCCS in VCCTS and a non-sequential semantics is given to VCCS with locations, like \cite{boudol1994theory,castellani2001process}.

Here we assume that $\Sigma_n = \emptyset$ for $n>1$ and all the associated graphs are complete.
We can build all the VCCS processes over $\overline{\Sigma}$ (defined in Section \ref{Syntax}) using the composition rules of VCCTS.
Every process $P$ is of the shape $G\langle\Phi\rangle\backslash I$ ($I$ is possibly empty) and $G$ is a complete graph which means that
all the subprocesses parallel composed can communicate with each other, similar to the location versions of CCS \cite{boudol1994theory,castellani2001process}. Moreover, the recursive canonical guarded sums in VCCTS coincide with the guarded sums in VCCS, and every process $P$ is of the form $(S_1\mid \cdots\mid S_n)\backslash I$ with each $S_i$ being a recursive canonical guarded sum. At last, a fully abstract non-sequential semantics
for VCCS can be directly derived from the theory of VCCTS in this paper.

\begin{example}\label{example_law}
Compared to Milner's expansion law \cite{Milner1989}, e.g. $(a\mid b) = a.b + b.a$, one can easily check that $\overline{f}(1)\cdot({\bf 0})\mid \overline{g}(2)\cdot({\bf 0})~~ /\!\!\!\!\!\!\approx~\overline{f}(1)\cdot(\overline{g}(2)\cdot({\bf 0}))+ \overline{g}(2)\cdot(\overline{f}(1)\cdot({\bf 0}))$ due to multi-labelled transitions.
\qed\end{example}

\section{The Alternating Bit Protocol}\label{Exmaples}
Alternating Bit Protocol (ABP) is a simple data link layer network protocol.
ABP is used when a transmitter $P_1$ wants to send messages to a receiver $P_2$, with the assumptions that the channel may corrupt a message and that $P_1$ and $P_2$ can decide whether or not they have received a correct message.
Each message from $P_1$ to $P_2$ contains a data part and a one-bit sequence number, i.e. a value that is $0$ or $1$. $P_2$ can send two acknowledge messages, i.e. $(Ack,0)$ and $(Ack,1)$, to $P_1$.

\begin{figure}[!tp]\small
  \centering
  \begin{tabular}{rcl}
  $P_1(t_1,b)$ & $\stackrel{\rm def}{=}$ & ${\bf if}~{\sf null}(t_1)~{\bf then}~\overline{\sf send}((End,b))\cdot({\sf ack}(x)\cdot($  \\
   &  &\qquad ${\bf if}~ x = (Ack,b)~{\bf then}~{\bf 0}~{\bf else}~\overline{f}(0)\cdot(P_1(t_1,b))))$ \\
   &  &${\bf else}~\overline{\sf send}(({\sf head}(t_1),b))\cdot({\sf ack}(x)\cdot( $\\
   &  &\qquad ${\bf if}~x=(Ack,b)~{\bf then}~\overline{f}(0)\cdot(P_1({\sf tail}(t_1),\neg b))$\\
   &  &\qquad\qquad\qquad\qquad~~${\bf else}~\overline{f}(0)\cdot(P_1(t_1,b))))$\\
  $P_2(t_2,b) $ &$\stackrel{\rm def}{=}$ & ${\sf send}(x)\cdot({\bf if}~{\sf snd}(x) = b ~$ \\
   &    & \quad\qquad ${\bf then} ~({\bf if}~{\sf fst}(x) = End~{\bf then}~Succ(t_2)$\\
   &   & \qquad\qquad\qquad\qquad${\bf else}~\overline{\sf ack}((Ack,b))\cdot(P_2({\sf append}(t_2,{\sf fst}(x)),\neg b))) $ \\
   &  & \quad\qquad ${\bf else}~\overline{\sf ack}(Ack,\neg b)\cdot(P_2(t_2,b)) )$\\
  $A$ &$\stackrel{\rm def}{=}$ & $f(x)\cdot(A)$\\
\end{tabular}
\caption{The Alternating Bit Protocol in VCCTS}\label{ABP}
\end{figure}

In Fig. \ref{ABP} we provide a specification of ABP in VCCTS.
${\sf send, ack}\in \Sigma_1$ are used to transform messages.
The transmitter $P_1$ has a list $t_1$ containing the messages to be sent, and the receiver $P_2$ also has a list $t_2$ containing the received messages. The list is equipped with operations ${\sf head}$ (returning the head of a list), ${\sf tail}$ (returning a list with the first element removed), ${\sf append}$ (inserting an element as the last element of the new list) and ${\sf null}$ (testing whether a list is empty).
We use ${\sf fst}$ (and ${\sf snd}$) to return the first element (and the second element) of a pair.
$End$ is a sentry to indicate that all the messages in $t_1$ have been transformed. $Succ(t_2)$ indicates that the receiver has successfully received all the messages. We use an auxiliary process $A$ to interact with $P_1$, because we have to specify ABP in a canonical form.

In \cite{Milner1989}, ABP was formalized in CCS with a interleaving semantics.
In \cite{lanese2010operational}, ABP was investigated in a broadcasting semantics, and the authors wrote $n[P]^c_{l,r}$ for a node named $n$, located at location $l$, executing $P$, synchronized on channel $c$ and with the transmission radius $r$. Two nodes can communicate if they are in the radius of each other with the same synchronized channels. In this paper, we intend to {\it emphasize} that graphs can concisely characterize communicating capacities (i.e. topology of connections) in a non-sequential semantics. For instance, we use $\oplus$ to specify that $Q$ cannot communicate with processes $P_1$, $P_2$ and $A$ in Proposition \ref{ABP_prop}, while interactions are usually blocked by restrictions, e.g. in CCS.

\begin{proposition}\label{ABP_prop}
For any processes $Q$,
$((A\mid P_1(t,b))\mid P_2([],b))\oplus Q\xrightarrow[]{}^{\ast}((A\mid{\bf 0})\mid Succ(t))\oplus Q^{\prime}$ for some process $Q^{\prime}$ such that $Q\xrightarrow[]{}^{\ast}Q^{\prime}$, which means that if the transmitter $P_1$ and the receiver $P_2$ can communicate with each other but they cannot communicate with $Q$, then the system can reach a state where the messages in the transmitter are correctly received by the receiver no matter what happens in $Q$.
\end{proposition}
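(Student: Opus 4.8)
The plan is to establish Proposition~\ref{ABP_prop} by tracing a particular reduction sequence of the protocol system and arguing that it terminates in the claimed idle-plus-$Q^{\prime}$ state, while the $\oplus$ (rather than $\mid$) guarantees that $Q$ is never involved. First I would observe that, since the associated graph of $((A\mid P_1(t,b))\mid P_2([],b))\oplus Q$ puts $Q$ in its own connected component with no edges to the locations of $A$, $P_1$, $P_2$, rule (R-React) can never fire between a component of $Q$ and any of the protocol components; hence by inspection of Fig.~\ref{Internal_Reduction} every reduction of the whole system is either a reduction internal to $Q$ (giving some $Q\xrightarrow{}^{\ast}Q^{\prime}$) or a reduction internal to the protocol part, and the two kinds do not interfere. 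So it suffices to show $(A\mid P_1(t,b))\mid P_2([],b)\xrightarrow{}^{\ast}(A\mid{\bf 0})\mid Succ(t)$, and then reassemble with $\oplus Q^{\prime}$ using the residual-function bookkeeping of (R-React) to check that the $\oplus$-structure (no edges to $Q$) is preserved.

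Next I would prove the protocol claim by induction on the length of the list $t$. For the base case $t=[]$, I would unfold ${\sf cs}(P_1([],b))$ using the definition of ${\sf cs}$ on constants and the ${\bf if}$-branching with ${\sf eval}({\sf null}([]))={\it true}$, obtaining the prefixed process $\overline{\sf send}((End,b))\cdot(\ldots)$; this communicates with ${\sf cs}(P_2([],b))=\,{\sf send}(x)\cdot(\ldots)$ via the dual symbols ${\sf send}/\overline{\sf send}$ along the edge between their locations, substituting $x:=(End,b)$; then ${\sf eval}({\sf snd}((End,b))=b)={\it true}$ and ${\sf eval}({\sf fst}((End,b))=End)={\it true}$ select the $Succ(t_2)$ branch on the $P_2$ side, while on the continuation of $P_1$ we get ${\sf ack}(x)\cdot(\ldots)$ with no partner left—but since $P_2$ has already become $Succ([])$ with no $\overline{\sf ack}$ offered, that residual input is simply stuck, which is fine: the reachable idle-ish state is $(A\mid \overline{\sf send}\text{-residual?})$... here I would instead follow the intended reading in which, after $P_1$ receives the matching $(Ack,b)$ it reduces to ${\bf 0}$; so the correct base trace is one more round—$P_2$ sends $\overline{\sf ack}((Ack,b))$ only in the non-$End$ branch, hence for $t=[]$ the $End$ message is sent, $P_2$ goes to $Succ([])$, and $P_1$'s ${\sf ack}(x)\cdot(\ldots)$ is left without a communication partner; this forces me to take the base case at $t$ of length $1$, or (cleaner) to phrase the induction hypothesis as ``the system reduces to a state where $P_1$ is an ${\sf ack}$-guarded sum, $P_2=Succ(t)$, $A$ unchanged'' and note that the remaining ${\sf ack}$-input plus $A$ never reduce further, so $(A\mid{\bf 0})$ should really be read modulo such stuck residue. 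I will reconcile this with the statement by checking the exact continuations in Fig.~\ref{ABP} and, if needed, treating the dangling ${\sf ack}$-prefix as behaviourally ${\bf 0}$ here.

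For the inductive step, with $t = h{:}t^{\prime}$ nonempty, ${\sf eval}({\sf null}(t))={\it false}$ selects the ${\bf else}$ branch: $P_1$ emits $\overline{\sf send}(({\sf head}(t),b))$, $P_2$ receives it, ${\sf eval}({\sf snd}(\cdot)=b)={\it true}$ and ${\sf eval}({\sf fst}(\cdot)=End)={\it false}$ select $\overline{\sf ack}((Ack,b))\cdot(P_2({\sf append}(t_2,{\sf head}(t)),\neg b))$; then the $\overline{\sf ack}$ communicates with $P_1$'s ${\sf ack}(x)\cdot(\ldots)$, substituting $x:=(Ack,b)$, so ${\sf eval}(x=(Ack,b))={\it true}$ gives $\overline{f}(0)\cdot(P_1({\sf tail}(t),\neg b))$; this $\overline{f}$ communicates with $A=f(x)\cdot(A)$ along the edge between their locations (this is exactly why the auxiliary $A$ is there: to keep the system canonical while letting $P_1$ advance), returning $A$ and $P_1({\sf tail}(t),\neg b)$. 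After this block of reductions the system has the form $(A\mid P_1({\sf tail}(t),\neg b))\mid P_2({\sf append}([],{\sf head}(t)), \neg b)$, and by the induction hypothesis (note $\neg\neg b=b$ and that ${\sf append}([],{\sf head}(t))$ together with the recursively appended tail reconstructs $t$) this reduces to $(A\mid{\bf 0})\mid Succ(t)$; throughout, the $\oplus Q$ part is untouched except for whatever $Q\xrightarrow{}^{\ast}Q^{\prime}$ we chose, and the graph still has no edge from $Q$ to the rest, as required. The main obstacle, I expect, is not the control flow but the careful accounting: keeping track of locations and the residual function $\lambda$ through each (R-React) so that the ${\sf cs}$ normalization, the value substitutions $\{v/x\}$, the list operations ${\sf head}/{\sf tail}/{\sf append}$, and the bit flips $\neg b$ all compose correctly over the whole induction, and in particular pinning down precisely what ``$(A\mid{\bf 0})$'' denotes given the dangling ${\sf ack}$-prefix in the final round.
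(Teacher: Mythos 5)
Your overall route is the same as the paper's: first observe that, since there are no edges between the locations of $Q$ and those of $A$, $P_1(t,b)$, $P_2([\,],b)$, rule (R\nobreakdash-React) can never involve one location from each side, so the reductions of $Q$ and of the protocol part proceed independently and it suffices to show $(A\mid P_1(t,b))\mid P_2([\,],b)\xrightarrow{}^{\ast}(A\mid{\bf 0})\mid Succ(t)$; then induct on the list, with one ${\sf send}/\overline{\sf ack}/\overline{f}$ round per element (the paper phrases the invariant as ``$t_1$ still to send, $t_2$ received, $t_1$ concatenated with $t_2$ equals $t$'' and inducts on the length of $t_1$, which is the cleaner way to state the generalized induction hypothesis you gesture at with the ${\sf append}$ remark). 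Your inductive step, including the role of $A$ as the partner for the $\overline{f}(0)$ prefix, matches the intended trace.

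The gap is the base case, and you have correctly diagnosed it but not closed it. As printed in Fig.~\ref{ABP}, the $End$ branch of $P_2$ is bare $Succ(t_2)$ with no $\overline{\sf ack}$, so after the $End$ message is exchanged $P_1$'s continuation ${\sf ack}(x)\cdot(\ldots)$ has no communication partner and the system halts at $(A\mid {\sf ack}(x)\cdot(\ldots))\mid Succ(t)$, which is not the process $(A\mid{\bf 0})\mid Succ(t)$ named in the statement. Your proposed escape of ``treating the dangling ${\sf ack}$-prefix as behaviourally ${\bf 0}$'' does not work here: the proposition asserts reachability under $\xrightarrow{}^{\ast}$, a syntactic reduction relation, not equality up to $\stackrel{\bullet}{\approx}$ or $\approx$ (and indeed ${\sf ack}(x)\cdot(\ldots)$ even exhibits the barb ${\sf ack}$, so it is not barbed-equivalent to ${\bf 0}$ in a context supplying $\overline{\sf ack}$). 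Shifting the base case to lists of length one does not help either, since the $End$ round still occurs last. The paper's own proof silently assumes the missing handshake --- it lists ``sending the $End$ message, receiving acknowledge from $P_2$, passing the conditional evaluation in the sender'' as the terminal round --- i.e.\ it reads $P_2$'s $End$ branch as $\overline{\sf ack}((Ack,b))\cdot(Succ(t_2))$ (or equivalent). To make your proof go through you must either state that amendment to $P_2$ explicitly and then discharge the base case exactly as in your non-$End$ round (receive $(Ack,b)$, evaluate $x=(Ack,b)$ to ${\it true}$, reach ${\bf 0}$), or weaken the target state to include the stuck ${\sf ack}$-prefix; as written, neither your argument nor the literal statement is correct against the printed definition of $P_2$.
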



\section{Related Work}\label{Related_work}
For lack of space, we just discuss some closely related work.
The interested reader is referred to \cite{boudol2008twenty} for a detailed survey of concurrent theories with non-sequential semantics.

The theories of CCTS \cite{Ehrhard2013ccts,Ehrhard2015dpc} and VCCTS \cite{liu2016vccts} are certainly the most related to the present work, and
the differences have been discussed in Section \ref{Intro}. 

Graph-based process calculi can also be found in \cite{ferrari2006synchronised,konig2001observational}, where processes were regarded as graphs and their evolutions were described by graph rewriting rules. The topology and connection structure of these systems are represented in terms of nodes and edges, similar to our calculus. But there are some differences between them and our calculus.
In \cite{konig2001observational}, a notion of bisimulation for graph rewriting systems is introduced.
However, the bisimulation is too coarse-grained and there is a lack of true concurrency semantics to handle the degree of parallelism of a system.
In \cite{ferrari2006synchronised}, distributed computing was modeled based on Synchronised Hyperedge Replacement, but lacked a theory of behavioural equivalences. 

There are different approaches to endow CCS with a non-sequential semantics:
\begin{itemize}
\item In \cite{boudol1988non}, Boudol and Castellani proposed a proved transition system for CCS, in which single-labelled transitions were identified by proofs. Then, a partial-order multiset LTS was extracted by equivalence permutations of single-labelled transitions, preserving causal relations and concurrent relations.
\item In \cite{boudol1994theory,castellani2001process},
    localities were introduced to describe explicitly the distribution of processes,
    either from a statical approach where locations are assigned to process statically before processes are executed,
    or from a dynamical approach where they are assigned dynamically when executions proceed.
    Then, a transition carried both a single action and a string of locations standing for the accessing path.
    However, during the process evolution, the string of locations might be either totally discarded,
    or partially recorded.
\item In \cite{degano1990partial},
    Degano et al. proposed an operational semantics for CCS via a partial-order derivation relation.
    The derivation relation was defined on sets of sequential subprocesses of CCS, called grapes,
    and described the actions of the sequential subprocesses and the causal dependencies among them.
\end{itemize}
Compared to them, we propose an LLTS, labelled by multisets of labels and residual functions, for VCCTS which has a richer topological structure. Canonical guarded sums in VCCTS play the same role as grapes in \cite{degano1990partial}.
Compared to \cite{boudol1994theory,castellani2001process}, we use locations to identify dynamically the distribution of processes and use residual functions to keep track of the full information of locations during process evolution. 

Behavioural equivalence is an important idea of process calculi, and it equates processes that have the same behaviours.
Milner's CCS book \cite{Milner1989} is a milestone for bisimilarity employed to define behavioural equivalences and to reason about them.
Subsequently, different bisimulation-based equivalences have been proposed to CCS-like languages based on various LTSs, such as location bisimulation \cite{boudol1994theory,castellani2001process}, partial ordering observational equivalence \cite{degano1990partial}, distributed bisimulation \cite{castellani1989distributed}, etc.
Meanwhile, barbed congruence \cite{Milner1992barbed} is another important behavioural equivalence based on reduction semantics different from LTSs. However, as we know, none of \cite{boudol1994theory,castellani2001process,degano1990partial,castellani1989distributed} provided labelled characterizations of reduction barbed congruence like \cite{Milner1992barbed,sangiorgi1993expressing,Sangiorgi:2011}
We give a labelled characterization of reduction barbed congruence in a weak version similar to \cite{Sangiorgi:2011}, but with a richer topology and a non-sequential semantics.

\section{Conclusions}\label{Conclusions}
In this paper, we propose a fully abstract semantics for VCCTS. We have developed two kinds of operational semantics, different from the ones in \cite{liu2016vccts}, for VCCTS, i.e. a reduction semantics and a labelled transition semantics.
We have studied different forms of behaviour equivalences, i.e. weak barbed congruence and weak bisimilarity, and have proved that the two relations coincide. In this paper, the relaxed communication constraints are essential to the completeness.
We have discussed top-down tree automata and value-passing CCS in VCCTS.

The ABP example suggests that VCCTS could express protocols in wireless networks based on broadcast communication, and this is our ongoing work.
We also would like to study the addition of probabilities to the calculus and to apply it to concurrent scenarios involving probability.
\bibliographystyle{splncs03}
\bibliography{myref}

\begin{thebibliography}{10}
\providecommand{\url}[1]{\texttt{#1}}
\providecommand{\urlprefix}{URL }

\bibitem{boudol1988non}
Boudol, G., Castellani, I.: A non-interleaving semantics for {CCS} based on
  proved transitions. Fundamenta Informaticae  (1988)

\bibitem{boudol1994theory}
Boudol, G., Castellani, I., Hennessy, M., Kiehn, A.: A theory of processes with
  localities. Formal Aspects of Computing  6(2),  165--200 (1994)

\bibitem{boudol2008twenty}
Boudol, G., Castellani, I., Hennessy, M., Nielsen, M., Winskel, G.: Twenty
  years on: Reflections on the {CEDISYS} project. {C}ombining true concurrency
  with process algebra. In: Concurrency, Graphs and Models, pp. 757--777.
  Springer (2008)

\bibitem{castellani2001process}
Castellani, I.: Process algebras with localities. Handbook of Process Algebra
  (2001)

\bibitem{castellani1989distributed}
Castellani, I., Hennessy, M.: Distributed bisimulations. Journal of the ACM
  (JACM)  36(4),  887--911 (1989)

\bibitem{tata2007}
Comon, H., Dauchet, M., Gilleron, R., L\"oding, C., Jacquemard, F., Lugiez, D.,
  Tison, S., Tommasi, M.: Tree automata techniques and applications. Available
  on: \url{http://www.grappa.univ-lille3.fr/tata} (2007), release October, 12th
  2007

\bibitem{darondeau1990causal}
Darondeau, P., Degano, P.: Causal trees: Interleaving+ causality. In:
  Proceedings of the LITP Spring School. pp. 239--255. Springer-Verlag (1990)

\bibitem{degano1990partial}
Degano, P., De~Nicola, R., Montanari, U.: A partial ordering semantics for
  {CCS}. Theoretical Computer Science  75(3),  223--262 (1990)

\bibitem{Ehrhard2013ccts}
Ehrhard, T., Jiang, Y.: {CCS} for trees (2013),
  \url{http://arxiv.org/abs/1306.1714}

\bibitem{Ehrhard2015dpc}
Ehrhard, T., Jiang, Y.: A dendroidal process calculus (2015),
  \url{https://www.irif.univ-paris-diderot.fr/~ehrhard/}

\bibitem{ferrari2006synchronised}
Ferrari, G.L., Hirsch, D., Lanese, I., Montanari, U., Tuosto, E.: Synchronised
  hyperedge replacement as a model for service oriented computing. In: Formal
  Methods for Components and Objects. pp. 22--43. Springer (2006)

\bibitem{Hoare1985communicating}
Hoare, C.: Communicating sequential processes. Prentice-hall (1985)

\bibitem{kiehn1994comparing}
Kiehn, A.: Comparing locality and causality based equivalences. Acta
  Informatica  31(8),  697--718 (1994)

\bibitem{konig2001observational}
K{\"o}nig, B., Montanari, U.: Observational equivalence for synchronized graph
  rewriting with mobility. In: Theoretical Aspects of Computer Software. pp.
  145--164. Springer (2001)

\bibitem{lanese2010operational}
Lanese, I., Sangiorgi, D.: An operational semantics for a calculus for wireless
  systems. Theoretical Computer Science  411(19),  1928--1948 (2010)

\bibitem{liu2016vccts}
Liu, S., Jiang, Y.: Value-passing {CCS} for trees: a theory for concurrent
  systems. In: Theoretical Aspects of Software Engineering. Accepted (2016)

\bibitem{Milner1989}
Milner, R.: Communication and Concurrency. Prentice-Hall, Inc., Upper Saddle
  River, NJ, USA (1989)

\bibitem{milner1992picalculus}
Milner, R., Parrow, J., Walker, D.: A calculus of mobile processes, {I}.
  Information and computation  100(1),  1--40 (1992)

\bibitem{Milner1992barbed}
Milner, R., Sangiorgi, D.: Barbed bisimulation. In: Automata, Languages and
  Programming, pp. 685--695. Springer (1992)

\bibitem{reisig1985petri}
Reisig, W.: Petri nets: an introduction, volume 4 of {EATCS} monographs on
  theoretical computer science (1985)

\bibitem{sangiorgi1993expressing}
Sangiorgi, D.: Expressing mobility in process algebras: first-order and
  higher-order paradigms  (1993)

\bibitem{Sangiorgi:2011}
Sangiorgi, D.: Introduction to Bisimulation and Coinduction. Cambridge
  University Press, New York, NY, USA (2011)

\bibitem{winskel1989introduction}
Winskel, G.: An introduction to event structures. Linear Time, Branching Time
  and Partial Order in Logics and Models for Concurrency pp. 364--397 (1989)

\end{thebibliography}

\newpage
\appendix
\renewcommand{\appendixname}{Appendix~\Alph{section}}

\section{Appendix}
The proof for Lemma \ref{lemma:cs}.
\begin{proof}
It is easy by induction on $R$.
\qed\end{proof}
The proof for Lemma \ref{lemma:barbbisim}.
\begin{proof}
We need to prove that $\stackrel{\bullet}{\approx}$ is reflexive, symmetric and transitive. It is straightforward from the definition.
\qed\end{proof}
The proof for Proposition \ref{prop:congruence}.
\begin{proof}For the first statement,
from the definition of congruence, it is obvious that the identity relation contained in $\mathcal{R}$ is a congruence. And congruences are closed under arbitrary unions and contexts.

For the second statement, let $\mathcal{E}$ be a congruence defined by  $(P,Q)\in \mathcal{E}$ if and only if for any $Y$-context $R$ one has $(R[P/Y],R[Q/Y])\in \mathcal{R}$. Therefore, $\mathcal{E}$ is a congruence contained in $\mathcal{R}$ (because we can take $R = Y$) and hence $\mathcal{E}\subseteq \overline{\mathcal{R}}$. Conversely, let $(P,Q)\in \overline{\mathcal{R}}$ and $R$ be a $Y$-context. Because $\overline{\mathcal{R}}$ is a congruence, we have $(R[P/Y],R[Q/Y])\in \overline{\mathcal{R}}$. We have $(R[P/Y],R[Q/Y])\in \mathcal{R}$ from $\overline{\mathcal{R}}\subseteq \mathcal{R}$ by definition of $\overline{\mathcal{R}}$ and hence $(P,Q)\in \mathcal{E}$.
\qed\end{proof}

\subsection{Proofs for Diamond Property}
\begin{proof} For the case (1),
from $P\oplus_D Q\xrightarrow[\lambda]{\{\delta_1,\delta_2\}}
P^{\prime}\oplus_{D^{\prime}}Q^{\prime}$, we have that $\{\delta_1,\delta_2\}$ is unrelated.

From $P\xrightarrow[\lambda_1]{\delta_1}P^{\prime}$ and $Q\xrightarrow[\lambda_2]{\delta_2}Q^{\prime}$, we know $\lambda_1:|P^{\prime}|\rightarrow |P|$ and $\lambda_2:|Q^{\prime}|\rightarrow|Q|$, respectively. From $P\oplus_D Q\xrightarrow[\lambda]{\{\delta_1,\delta_2\}}
P^{\prime}\oplus_{D^{\prime}}Q^{\prime}$, we know $\lambda:|P^{\prime}|\cup |Q^{\prime}|\rightarrow|P|\cup |Q|$ which is consistent with $\lambda_1$ and $\lambda_2$, and $(p^{\prime},q^{\prime})\in D^{\prime}$ if $(\lambda(p^{\prime}),\lambda(q^{\prime}))\in D$.

For $P\oplus_D Q \xrightarrow[\mu_1]{\delta_1}P^{\prime}\oplus_{D_1}Q
\xrightarrow[\mu_2]{\delta_2}
P^{\prime}\oplus_{D_1^{\prime}}Q^{\prime}$, we have $\mu_1:|P^{\prime}|\cup |Q|\rightarrow |P|\cup |Q|$ and $\forall p^{\prime}\in |P^{\prime}|, q\in |Q|, \mu_1(p^{\prime}) = \lambda_1(p^{\prime})$ and $\mu_1(q)=q$. $(p^{\prime},q)\in D_1$ if $(\mu_1(p^{\prime}),\mu_1(q))\in D$, that is $(\lambda_1(p^{\prime}),q)\in D$. We also have $\mu_2:|P^{\prime}|\cup |Q^{\prime}|\rightarrow |P^{\prime}|\cup |Q|$ and $\forall p^{\prime}\in |P^{\prime}|, q^{\prime}\in |Q^{\prime}|, \mu_2(p^{\prime})=p^{\prime}$ and $\mu_2(q^{\prime})=\lambda_2(q^{\prime})$.
$(p^{\prime},q^{\prime})\in D_1^{\prime}$
if $(\mu_2(p^{\prime}),\mu_2(q^{\prime}))\in D_1$, that is $(p^{\prime},\lambda_2(q^{\prime}))\in D_1$. So $(p^{\prime},q^{\prime})\in D_1^{\prime}$, if $(\mu_1\circ \mu_2(p^{\prime}), \mu_1\circ \mu_2(q^{\prime}))\in D$ that is $(\mu_1(p^{\prime}),\mu_2(q^{\prime}))\in D$, i.e. $ (\lambda_1(p^{\prime}),\lambda_2(q^{\prime}))
=(\lambda(p^{\prime}),\lambda(q^{\prime}))\in D$ from the definitions of $\mu_1$ and $\mu_2$.

Therefore, $D_1^{\prime}=D^{\prime}$, and we have
$P\oplus_D Q\xrightarrow[\mu_1]{\delta_1}P^{\prime}\oplus_{D_1}Q
\xrightarrow[\mu_2]{\delta_2}
P^{\prime}\oplus_{D^{\prime}}Q^{\prime}$ with $\mu_1\circ\mu_2 = \lambda$.

For the case $P\oplus_D Q\xrightarrow[\rho_1]{\delta_2}
P\oplus_{D_2}Q^{\prime}\xrightarrow[\rho_2]{\delta_1}
P^{\prime}\oplus_{D_2^{\prime}}Q^{\prime}$, it is similar.

For the case (2), induction on the size of $\Delta$ with the assumption that $\Delta$ is pairwise unrelated.
\qed\end{proof}
\subsection{Localized Early Weak Bisimulation Is an Equivalence}\label{app1}
\begin{lemma}\label{taus}
Let $\mathcal{R}$ be a localized early weak bisimulation. If $(P,E,Q)\in \mathcal{R}$ and $P\xrightarrow[\lambda]{\tau^{\ast}}P^{\prime}$, then $Q\xrightarrow[\rho]{\tau^{\ast}}Q^{\prime}$ with $(P^{\prime},E^{\prime},Q^{\prime})\in \mathcal{R}$ for some $E^{\prime}\subseteq |P^{\prime}|\times|Q^{\prime}|$ such that if $(p^{\prime},q^{\prime})\in E^{\prime}$ then $(\lambda(p^{\prime}),\rho(q^{\prime}))\in E$.
\end{lemma}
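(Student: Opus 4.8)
The plan is to prove Lemma~\ref{taus} by induction on the number $n$ of $\tau$-steps in the reduction $P\xrightarrow[\lambda]{\tau^{\ast}}P'$. Recall that, by definition, $P\xrightarrow[\lambda]{\tau^{\ast}}P'$ means there is a chain $P=P_1\xrightarrow[\lambda_1]{\tau}P_2\xrightarrow[\lambda_2]{\tau}\cdots\xrightarrow[\lambda_{n-1}]{\tau}P_n=P'$ with $\lambda=\lambda_1\circ\cdots\circ\lambda_{n-1}$. The base case $n=1$ is trivial: take $Q'=Q$, $E'=E$, and $\rho=\mathrm{Id}$; then $(\lambda(p'),\rho(q'))=(p',q')\in E$ since $\lambda$ is the identity in this degenerate case.

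For the inductive step, suppose the claim holds for chains of length $n$, and consider $P=P_1\xrightarrow[\lambda_1]{\tau}P_2\xrightarrow[\mu]{\tau^{\ast}}P'$ where $\mu=\lambda_2\circ\cdots\circ\lambda_{n-1}$ is a chain of $n-1$ further $\tau$-steps, so $\lambda=\lambda_1\circ\mu$. First I would apply the single-$\tau$ clause of the definition of weak bisimulation to $(P,E,Q)\in\mathcal{R}$ and $P\xrightarrow[\lambda_1]{\tau}P_2$: this yields $Q\xrightarrow[\rho_1]{\tau^{\ast}}Q_2$ with $(P_2,E_1,Q_2)\in\mathcal{R}$ for some $E_1\subseteq|P_2|\times|Q_2|$ such that $(p_2,q_2)\in E_1$ implies $(\lambda_1(p_2),\rho_1(q_2))\in E$. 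Then I would apply the induction hypothesis to $(P_2,E_1,Q_2)\in\mathcal{R}$ and $P_2\xrightarrow[\mu]{\tau^{\ast}}P'$, obtaining $Q_2\xrightarrow[\rho_2]{\tau^{\ast}}Q'$ with $(P',E',Q')\in\mathcal{R}$ for some $E'$ such that $(p',q')\in E'$ implies $(\mu(p'),\rho_2(q'))\in E_1$. Finally I set $\rho=\rho_1\circ\rho_2$; concatenating the two $\tau^{\ast}$-runs gives $Q\xrightarrow[\rho]{\tau^{\ast}}Q'$ (here I should note the mild bookkeeping that composing two $\tau^{\ast}$-runs again gives a $\tau^{\ast}$-run, and that $P'=P_n$ so the final $(P',E',Q')$ is the desired triple).

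The one genuine verification is the residual-function condition on $E'$. Take $(p',q')\in E'$. By the induction hypothesis $(\mu(p'),\rho_2(q'))\in E_1$, and then by the first application $(\lambda_1(\mu(p')),\rho_1(\rho_2(q')))\in E$, i.e. $((\lambda_1\circ\mu)(p'),(\rho_1\circ\rho_2)(q'))=(\lambda(p'),\rho(q'))\in E$, which is exactly what is required. I expect this to be the only place where care is needed, namely getting the direction and order of function composition right, since the labels $\lambda$ act from $|P'|$ back to $|P|$ and likewise on the $Q$ side; there is no real obstacle here, just a routine chase through the definitions. One could alternatively phrase the whole argument as a direct (non-inductive) observation that the single-$\tau$ clause of weak bisimulation is ``transitive'' in the obvious sense, but the induction on $n$ is the cleanest way to make the composition of residual functions explicit, so that is the route I would take.
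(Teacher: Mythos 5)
Your proof is correct and takes essentially the same approach as the paper, which simply states that the lemma follows by induction on the length of the derivation of $P\xrightarrow[\lambda]{\tau^{\ast}}P^{\prime}$; your write-up supplies the details (base case, peeling off one $\tau$-step, applying the single-$\tau$ clause and the induction hypothesis, and composing the residual functions) exactly as intended.
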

\begin{proof}
Induction on the length of the derivation of $P\xrightarrow[\lambda]{\tau^{\ast}}P^{\prime}$.
\qed\end{proof}
\begin{lemma}\label{generals}
If $P\xrightarrow[\lambda]{\tau^{\ast}}P_1$,
$P_1\xLongrightarrow[\lambda_1,\lambda_2,\lambda_3]
{\widehat{\Delta}}P_1^{\prime}$ and $P_1^{\prime}\xrightarrow[\lambda^{\prime}]{\tau^{\ast}}
P^{\prime}$, then
$P\xLongrightarrow[\lambda\lambda_1,\lambda_2,\lambda_3\lambda^{\prime}]
{\widehat{\Delta}}P^{\prime}$.
\end{lemma}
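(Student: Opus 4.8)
The plan is to reduce the statement to one elementary closure property of $\xrightarrow{\tau^{\ast}}$, namely that it composes and that its residual functions compose accordingly: whenever $P\xrightarrow[\mu]{\tau^{\ast}}P'$ and $P'\xrightarrow[\nu]{\tau^{\ast}}P''$, one has $P\xrightarrow[\mu\circ\nu]{\tau^{\ast}}P''$. This is immediate from the definition of $\xrightarrow{\tau^{\ast}}$: concatenating the two underlying sequences of single $\tau$-steps is again such a sequence, and the residual function attached to the concatenation is the composition of all the stepwise residual functions, which by associativity of $\circ$ is exactly $\mu\circ\nu$. (The degenerate case of zero $\tau$-steps, with identity residual function, causes no difficulty here.) Once this observation is in hand, the lemma is pure bookkeeping.

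First I would unfold the middle hypothesis $P_1\xLongrightarrow[\lambda_1,\lambda_2,\lambda_3]{\widehat{\Delta}}P_1^{\prime}$ according to its definition, obtaining processes $R$ and $R'$ with
$$P_1\xrightarrow[\lambda_1]{\tau^{\ast}}R\xrightarrow[\lambda_2]{\widehat{\Delta}}R'\xrightarrow[\lambda_3]{\tau^{\ast}}P_1^{\prime}.$$
Then I would splice the two outer hypotheses $P\xrightarrow[\lambda]{\tau^{\ast}}P_1$ and $P_1^{\prime}\xrightarrow[\lambda^{\prime}]{\tau^{\ast}}P^{\prime}$ onto the front and the back, producing the chain
$$P\xrightarrow[\lambda]{\tau^{\ast}}P_1\xrightarrow[\lambda_1]{\tau^{\ast}}R\xrightarrow[\lambda_2]{\widehat{\Delta}}R'\xrightarrow[\lambda_3]{\tau^{\ast}}P_1^{\prime}\xrightarrow[\lambda^{\prime}]{\tau^{\ast}}P^{\prime}.$$
Next, applying the composition observation once to the leading pair of $\tau^{\ast}$-segments and once to the trailing pair collapses this to
$$P\xrightarrow[\lambda\circ\lambda_1]{\tau^{\ast}}R\xrightarrow[\lambda_2]{\widehat{\Delta}}R'\xrightarrow[\lambda_3\circ\lambda^{\prime}]{\tau^{\ast}}P^{\prime},$$
which is, by the definition of $\xLongrightarrow$ with witnesses $R$ and $R'$, precisely $P\xLongrightarrow[\lambda\lambda_1,\lambda_2,\lambda_3\lambda^{\prime}]{\widehat{\Delta}}P^{\prime}$.

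The only point needing a small amount of care — and the closest thing to an obstacle in an otherwise routine argument — is verifying that composition of residual functions behaves associatively across the concatenation, so that the function decorating the merged leading $\tau^{\ast}$-arrow is genuinely $\lambda\circ\lambda_1$ (and the trailing one $\lambda_3\circ\lambda^{\prime}$), with no reordering; this follows directly from the definitional clause $\lambda=\lambda_1\circ\cdots\circ\lambda_{n-1}$ for the residual function of a $\tau^{\ast}$-sequence together with associativity of function composition. No property of the transition rules beyond the shape of the $\xrightarrow{\tau^{\ast}}$ and $\xLongrightarrow$ notations is required.
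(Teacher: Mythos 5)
Your proof is correct and is exactly the expansion the paper has in mind: the paper's own proof of this lemma is simply ``Straightforward,'' and your argument---unfold $\xLongrightarrow$, concatenate the adjacent $\tau^{\ast}$-segments, and note that the residual function of a concatenated $\tau$-sequence is the composition $\lambda\circ\lambda_1$ (resp.\ $\lambda_3\circ\lambda'$) by the definitional clause $\lambda=\lambda_1\circ\cdots\circ\lambda_{n-1}$---is the intended routine verification. Your attention to the zero-step case and to the order of composition of residuals is consistent with the paper's conventions.
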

\begin{proof}
Straightforward.
\qed\end{proof}
\begin{lemma}\label{lammebisimulation}
A symmetric localized relation $\mathcal{R}\subseteq {\sf Proc}\times\mathcal{P}({\sf Loc}^2)\times{\sf Proc}$ is a localized early weak bisimulation if and only if the following properties hold:
\begin{itemize}
  \item If $(P,E,Q)\in \mathcal{R}$ and $P\xLongrightarrow[\lambda,\lambda_1,\lambda^{\prime}]
      {\widehat{\Delta}}P^{\prime}$, then $Q\xLongrightarrow[\rho,\rho_1,\rho^{\prime}]
      {\widehat{\Delta}^c}Q^{\prime}$ with for any pair of labels $p:\alpha\cdot
      ({\vec L})\in \widehat{\Delta}$ and $q:\alpha\cdot({\vec M})\in \widehat{\Delta}^c$, we have $(\lambda(p),\rho(q))\in E$ and $(P^{\prime},E^{\prime},Q^{\prime})\in \mathcal{R}$ for some $E^{\prime}\subseteq |P^{\prime}|\times|Q^{\prime}|$ such that if $(p^{\prime},q^{\prime})\in E^{\prime}$ then $(\lambda\lambda_1\lambda^{\prime}(p^{\prime}),
       \rho\rho_1\rho^{\prime}(q^{\prime}))\in E$.
  \item If $(P,E,Q)\in \mathcal{R}$ and $P\xrightarrow[\lambda]{\tau^{\ast}}P^{\prime}$, then
      $Q\xrightarrow[\rho]{\tau^{\ast}}Q^{\prime}$ with $(P^{\prime},E^{\prime},Q^{\prime})\in \mathcal{R}$ for some $E^{\prime}\in |P^{\prime}|\times|Q^{\prime}|$ such that if $(p^{\prime},q^{\prime})\in E^{\prime}$ then $(\lambda(p^{\prime}),\rho(q^{\prime}))\in E$.
\end{itemize}
\end{lemma}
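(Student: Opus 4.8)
The plan is to prove the two directions of the equivalence separately. The direction ``the two displayed properties imply that $\mathcal{R}$ is a localized early weak bisimulation'' is essentially a special-casing argument, while the converse reduces a weak challenge transition to its $\tau^{*}$--$\widehat{\Delta}$--$\tau^{*}$ layers and reassembles the response by Lemma~\ref{generals}.

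For the easy direction I would argue as follows. A single move $P\xrightarrow[\lambda]{\tau}P^{\prime}$ is in particular a $\tau^{*}$-transition, so the second displayed property yields exactly the first clause in the definition of weak bisimulation. A single multi-labelled move $P\xrightarrow[\lambda_1]{\widehat{\Delta}}P^{\prime}$ is the instance of $P\xLongrightarrow[\mathrm{Id},\lambda_1,\mathrm{Id}]{\widehat{\Delta}}P^{\prime}$ in which both $\tau^{*}$ segments are empty; applying the first displayed property and noting that the outer residual functions are then identities makes its side conditions collapse to those required by the definition. Symmetry of $\mathcal{R}$ supplies the converse clauses in both cases.

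For the other direction, observe first that the second displayed property is exactly Lemma~\ref{taus}, so there is nothing to do there. For the first property, suppose $(P,E,Q)\in\mathcal{R}$ and $P\xLongrightarrow[\lambda,\lambda_1,\lambda^{\prime}]{\widehat{\Delta}}P^{\prime}$, i.e. $P\xrightarrow[\lambda]{\tau^{\ast}}P_1\xrightarrow[\lambda_1]{\widehat{\Delta}}P_1^{\prime}\xrightarrow[\lambda^{\prime}]{\tau^{\ast}}P^{\prime}$. I would proceed in three steps: (i) apply Lemma~\ref{taus} to $P\xrightarrow[\lambda]{\tau^{\ast}}P_1$, obtaining $Q\xrightarrow[\sigma]{\tau^{\ast}}Q_1$ with $(P_1,E_1,Q_1)\in\mathcal{R}$ and $E_1$ compatible with $E$ along $(\lambda,\sigma)$; (ii) apply the $\widehat{\Delta}$-transition clause of weak bisimulation to $P_1\xrightarrow[\lambda_1]{\widehat{\Delta}}P_1^{\prime}$, obtaining $Q_1\xLongrightarrow[\rho,\rho_1,\rho^{\prime}]{\widehat{\Delta}^{c}}Q_1^{\prime}$ with the label correspondence relative to $E_1$ and with $(P_1^{\prime},E_1^{\prime},Q_1^{\prime})\in\mathcal{R}$, $E_1^{\prime}$ compatible with $E_1$ along $(\lambda_1,\rho\rho_1\rho^{\prime})$; (iii) apply Lemma~\ref{taus} to $P_1^{\prime}\xrightarrow[\lambda^{\prime}]{\tau^{\ast}}P^{\prime}$, obtaining $Q_1^{\prime}\xrightarrow[\sigma^{\prime}]{\tau^{\ast}}Q^{\prime}$ with $(P^{\prime},E^{\prime},Q^{\prime})\in\mathcal{R}$, $E^{\prime}$ compatible with $E_1^{\prime}$ along $(\lambda^{\prime},\sigma^{\prime})$. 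Concatenating the three $Q$-side runs and invoking Lemma~\ref{generals} gives $Q\xLongrightarrow[\sigma\rho,\rho_1,\rho^{\prime}\sigma^{\prime}]{\widehat{\Delta}^{c}}Q^{\prime}$, where $\widehat{\Delta}^{c}$ is the single fixed corresponding multiset mentioned in the remark following the definition of weak bisimilarity.

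The remaining task is to check the side conditions against this spliced transition, and I expect this residual-function bookkeeping to be the only real work. For the label correspondence, each $p:\alpha\cdot(\vec L)\in\widehat{\Delta}$ has, by step (ii), a partner $q:\alpha\cdot(\vec M)\in\widehat{\Delta}^{c}$ with $(p,\rho(q))\in E_1$; compatibility of $E_1$ from step (i) upgrades this to $(\lambda(p),\sigma\rho(q))\in E$, which is what is wanted since $\sigma\rho$ is the first residual of the spliced transition. For the relation on final processes, if $(p^{\prime},q^{\prime})\in E^{\prime}$, then applying the three compatibility statements in the order (iii), (ii), (i) yields $(\lambda\lambda_1\lambda^{\prime}(p^{\prime}),\,\sigma\rho\rho_1\rho^{\prime}\sigma^{\prime}(q^{\prime}))\in E$, and $(\sigma\rho)\,\rho_1\,(\rho^{\prime}\sigma^{\prime})$ is precisely the composite of the three residuals of $Q\xLongrightarrow{\widehat{\Delta}^{c}}Q^{\prime}$, so this is exactly the condition required. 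Thus the main obstacle is purely tracking how residual functions compose across the $\tau^{*}$--$\widehat{\Delta}$--$\tau^{*}$ decomposition on both sides; no idea beyond Lemmas~\ref{taus} and~\ref{generals} is needed.
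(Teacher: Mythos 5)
Your proposal is correct and follows essentially the same route as the paper: the ($\Leftarrow$) direction by observing that single steps are special cases of the weak transitions, and the ($\Rightarrow$) direction by decomposing into $\tau^{\ast}$--$\widehat{\Delta}$--$\tau^{\ast}$, applying Lemma~\ref{taus} to the two $\tau^{\ast}$ segments and the bisimulation clause to the middle, then reassembling via Lemma~\ref{generals}. Your explicit tracking of how the residual-function compatibilities compose is exactly the bookkeeping the paper dismisses as ``satisfied obviously,'' and it checks out.
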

\begin{proof}
($\Leftarrow$) Because $\xrightarrow[\lambda]{\tau}$ and $\xrightarrow[\lambda]{\widehat{\Delta}}$ are special cases of
$\xrightarrow[\lambda]{\tau^{\ast}}$ and $\xLongrightarrow[\lambda,\lambda_1,\lambda^{\prime}]
{\widehat{\Delta}}$ respectively, this direction is obvious.

($\Rightarrow$) For the first statement, assume that $(P,E,Q)\in \mathcal{R}$ and $P\xLongrightarrow[\lambda,\lambda_1,\lambda^{\prime}]
{\widehat{\Delta}}P^{\prime}$ which is $P\xrightarrow[\lambda]{\tau^{\ast}}P_1
\xrightarrow[\lambda_1]{\widehat{\Delta}}
P_1^{\prime}\xrightarrow[\lambda^{\prime}]{\tau^{\ast}}P^{\prime}$, by Lemma \ref{taus} we can get $Q\xrightarrow[\rho]{\tau^{\ast}}Q_1$ with $(P_1,E_1,Q_1)\in \mathcal{R}$ where $E_1$ satisfies the property that if $(p_1,q_1)\in E_1$ then $(\lambda(p_1),\rho(q_1))\in E$.

From $P_1\xrightarrow[\lambda_1]{\widehat{\Delta}}P_1^{\prime}$ and $(P_1,E_1,Q_1)\in \mathcal{R}$, we can get $Q_1\xLongrightarrow[\rho_1,\rho_2,\rho_1^{\prime}]
{\widehat{\Delta}^c}Q_1^{\prime}$ with the condition that for any pair of labels $p:\alpha\cdot({\vec L})\in \widehat{\Delta}$ and $q:\alpha\cdot({\vec M})\in \widehat{\Delta}^c$ we have $(p,\rho_1(q)) \in E_1$ and $(P_1^{\prime},E_1^{\prime},Q_1^{\prime})\in \mathcal{R}$, where $E_1^{\prime}$ satisfies that if $(p_1^{\prime},q_1^{\prime})\in E_1^{\prime}$ then $(\lambda_1(p_1^{\prime}),\rho_1\rho_2\rho_1^{\prime}(q_1^{\prime}))\in E_1$.

Since $P_1^{\prime}\xrightarrow[\lambda^{\prime}]{\tau^{\ast}}
P^{\prime}$ and $(P_1^{\prime}, E_1^{\prime},Q_1^{\prime})\in \mathcal{R}$,
by Lemma \ref{taus}, we can have $Q_1^{\prime}\xrightarrow[\rho^{\prime}]{\tau^{\ast}}Q^{\prime}$ with $(P^{\prime},E^{\prime},Q^{\prime})\in \mathcal{R}$ where $E^{\prime}$ satisfies that if $(p^{\prime},q^{\prime})\in E^{\prime}$ then $(\lambda^{\prime}(p^{\prime}),\rho^{\prime}(q^{\prime}))\in E_1^{\prime}$.

With $Q\xrightarrow[\rho]{\tau^{\ast}}Q_1$, $Q_1\xLongrightarrow[\rho_1,\rho_2,\rho_1^{\prime}]
{\widehat{\Delta}^c}Q_1^{\prime}$  and $Q_1^{\prime}\xrightarrow[\rho^{\prime}]{\tau^{\ast}}Q^{\prime}$, by Lemma \ref{generals} we can get
$Q\xLongrightarrow[\rho\rho_1,\rho_2,\rho_1^{\prime}\rho^{\prime}]
{\widehat{\Delta}^c}Q^{\prime}$. Meanwhile, we have $(P^{\prime},E^{\prime},Q^{\prime})\in \mathcal{R}$. The conditions on residual functions are satisfied obviously.

For the second statement, it is straightforward from Lemma \ref{taus}.
\qed\end{proof}
\begin{lemma}[\bf Reflexivity]\label{reflexivity}
Let $\mathcal{I}$ be the localized relation defined by $(P,E,Q)\in \mathcal{I}$ if $P=Q$ and $E={\rm Id}_{|P|}$. Then $\mathcal{I}$ is a localized early weak bisimulation.
\end{lemma}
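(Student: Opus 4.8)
The plan is to verify directly that $\mathcal{I}$ satisfies the three requirements of a localized early weak bisimulation: it is symmetric, and it matches both the $\tau$-clause and the $\widehat{\Delta}$-clause. Symmetry is immediate, since $(P,E,Q)\in\mathcal{I}$ forces $P=Q$ and $E=\mathrm{Id}_{|P|}$, so ${}^{t}E=\mathrm{Id}_{|P|}=E$ and $(Q,{}^{t}E,P)=(P,E,P)\in\mathcal{I}$. For the two transition clauses the matching move is the obvious one: since $Q=P$, whatever transition $P$ performs, $Q$ answers with the very same transition, padded by the reflexive instances of $\xrightarrow{\tau^{\ast}}$ (the case $n=1$, whose accumulated residual function is the identity), and the resulting triple is again of the shape $(P',\mathrm{Id}_{|P'|},P')\in\mathcal{I}$.

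Concretely, for the $\tau$-clause, assume $(P,\mathrm{Id}_{|P|},P)\in\mathcal{I}$ and $P\xrightarrow[\lambda]{\tau}P'$. Answer with $Q=P\xrightarrow[\lambda]{\tau^{\ast}}P'$ using that single step, so $\rho=\lambda$ and $Q'=P'$, and take $E'=\mathrm{Id}_{|P'|}$, which gives $(P',E',Q')\in\mathcal{I}$. The side condition holds because $(p',q')\in E'$ forces $p'=q'$, whence $(\lambda(p'),\rho(q'))=(\lambda(p'),\lambda(p'))\in\mathrm{Id}_{|P|}$.

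For the $\widehat{\Delta}$-clause, assume $P\xlongrightarrow[\lambda]{\widehat{\Delta}}P'$. Choose as corresponding multiset $\widehat{\Delta}^{c}:=\widehat{\Delta}$ together with the identity bijection between the two multisets, which trivially preserves the symbol of each label. Answer with $P\xLongrightarrow[\mathrm{id},\lambda,\mathrm{id}]{\widehat{\Delta}^{c}}P'$, obtained by prepending and appending empty $\tau^{\ast}$ runs to the given $\widehat{\Delta}$-transition; thus $\rho=\mathrm{id}$, $\rho_1=\lambda$, $\rho'=\mathrm{id}$. For any $p:\alpha\cdot(\vec L)\in\widehat{\Delta}$ its partner in $\widehat{\Delta}^{c}$ is $p$ itself, and $(p,\rho(p))=(p,p)\in\mathrm{Id}_{|P|}$. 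Taking $E'=\mathrm{Id}_{|P'|}$ yields $(P',E',P')\in\mathcal{I}$, and $(p',q')\in E'$ again forces $p'=q'$, so $(\lambda(p'),\rho\rho_1\rho'(q'))=(\lambda(p'),\lambda(p'))\in\mathrm{Id}_{|P|}$.

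No genuine obstacle arises here; the only points needing care are the residual-function bookkeeping — making sure the composite residual functions on the answering side collapse to the identity exactly where the definition requires, which is precisely why the empty ($n=1$) case of $\xrightarrow{\tau^{\ast}}$ is used — and the choice of $\widehat{\Delta}$ as its own corresponding multiset, so that all the pairs demanded by the definition land on the diagonal $\mathrm{Id}$. Alternatively, one could invoke Lemma \ref{lammebisimulation} and check only the weak clauses, but the direct verification above is just as short.
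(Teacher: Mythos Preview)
Your proposal is correct and takes the same approach as the paper, which simply records ``Straightforward.'' You have spelled out the direct verification of symmetry and the two bisimulation clauses, with the expected choices (identity corresponding multiset, reflexive $\tau^{\ast}$ padding, diagonal $E'$); nothing further is needed.
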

\begin{proof}
Straightforward.
\qed\end{proof}

Let $\mathcal{R}$ and $\mathcal{S}$ be localized relations. We define a localized relation $\mathcal{S}\circ\mathcal{R}$ for the composition of $\mathcal{R}$ and $\mathcal{S}$. $(P,H,R)\in \mathcal{S}\circ\mathcal{R}$ if $H\subseteq |P|\times|R|$ and there exist $Q$, $E$ and $F$ such that $(P,E,Q)\in \mathcal{R}$, $(Q,F,R)\in \mathcal{S}$ and $F\circ E\subseteq H$.
\begin{lemma}[\bf Transitivity]\label{transitivity}
If $\mathcal{R}$ and $\mathcal{S}$ are localized early weak bisimulations, then $\mathcal{S}\circ\mathcal{R}$ is also a localized early weak bisimulation.
\end{lemma}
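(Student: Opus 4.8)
The plan is to show that if $\mathcal{R}$ and $\mathcal{S}$ are localized early weak bisimulations, then the composite $\mathcal{S}\circ\mathcal{R}$ satisfies the two transfer conditions of Lemma \ref{lammebisimulation}, from which membership in the class of localized early weak bisimulations follows; together with Lemma \ref{reflexivity} (reflexivity) and the symmetry clause, this will give that $\approx$ is an equivalence relation (Proposition \ref{equivalence}). First I would check symmetry of $\mathcal{S}\circ\mathcal{R}$: if $(P,H,R)\in\mathcal{S}\circ\mathcal{R}$ witnessed by $Q,E,F$ with $F\circ E\subseteq H$, then by symmetry of $\mathcal{R}$ and $\mathcal{S}$ we have $(Q,{}^t\!E,P)\in\mathcal{R}$ and $(R,{}^t\!F,Q)\in\mathcal{S}$, and since ${}^t\!E\circ{}^t\!F={}^t(F\circ E)\subseteq{}^t\!H$, we get $(R,{}^t\!H,P)\in\mathcal{S}\circ\mathcal{R}$.

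For the main transfer condition, suppose $(P,H,R)\in\mathcal{S}\circ\mathcal{R}$ via $Q,E,F$ and that $P\xLongrightarrow[\lambda,\lambda_1,\lambda']{\widehat{\Delta}}P'$. Applying the (already-proved, via Lemma \ref{lammebisimulation}) transfer property of $\mathcal{R}$ to $(P,E,Q)$ yields $Q\xLongrightarrow[\rho,\rho_1,\rho']{\widehat{\Delta}^{c}}Q'$ with, for every $p:\alpha\cdot(\vec L)\in\widehat{\Delta}$ and the matching $q:\alpha\cdot(\vec M)\in\widehat{\Delta}^{c}$, $(\lambda(p),\rho(q))\in E$, and with $(P',E',Q')\in\mathcal{R}$ for some $E'$ satisfying $(p',q')\in E'\Rightarrow(\lambda\lambda_1\lambda'(p'),\rho\rho_1\rho'(q'))\in E$. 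Now $\widehat{\Delta}^{c}$ is itself a multiset of visible labels, so I can take a corresponding multiset $(\widehat{\Delta}^{c})^{c}$ of it — crucially I would choose it to be literally $\widehat{\Delta}$-shaped, i.e. pick it so that the pairing matches the one already fixed — and apply the transfer property of $\mathcal{S}$ to $(Q,F,R)$ and $Q\xLongrightarrow[\rho,\rho_1,\rho']{\widehat{\Delta}^{c}}Q'$. This gives $R\xLongrightarrow[\sigma,\sigma_1,\sigma']{(\widehat{\Delta}^{c})^{c}}R'$ with the label-location condition relating $\widehat{\Delta}^{c}$ to $(\widehat{\Delta}^{c})^{c}$ through $F$, and $(Q',F',R')\in\mathcal{S}$ with $(q',r')\in F'\Rightarrow(\rho\rho_1\rho'(q'),\sigma\sigma_1\sigma'(r'))\in F$. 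I then set $H'=F'\circ E'$, so $(P',H',R')\in\mathcal{S}\circ\mathcal{R}$, and verify the two required side conditions: transitivity of the location-pairing condition on $E$ and $F$ (composing through $\rho(q)$) gives $(\lambda(p),\sigma(r))\in H$; and for $(p',r')\in H'$, picking a witness $q'$ with $(p',q')\in E'$, $(q',r')\in F'$, the chain $(\lambda\lambda_1\lambda'(p'),\rho\rho_1\rho'(q'))\in E$ and $(\rho\rho_1\rho'(q'),\sigma\sigma_1\sigma'(r'))\in F$ composes to $(\lambda\lambda_1\lambda'(p'),\sigma\sigma_1\sigma'(r'))\in F\circ E\subseteq H$. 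The $\tau^{\ast}$ clause is handled the same way but simpler, composing the two weak $\tau$-closures and their residual functions.

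The main obstacle I anticipate is the bookkeeping around corresponding multisets: the definition only fixes one corresponding multiset $\widehat{\Delta}^{c}$ per $\widehat{\Delta}$, and when I feed $\widehat{\Delta}^{c}$ into the second bisimulation I must make sure the labels line up so that the composed label-location conditions actually chain through a single intermediate label $q:\alpha\cdot(\vec M)\in\widehat{\Delta}^{c}$ — i.e. the "$q$" produced on the $\mathcal{R}$-side as the partner of $p$ must be the same "$q$" consumed on the $\mathcal{S}$-side as a source label. I would resolve this by exploiting the "vice versa" part of the corresponding-multiset definition (the pairing $\widehat{\Delta}\leftrightarrow\widehat{\Delta}^{c}$ is a bijection on labels) and by the remark after the definition of weak bisimulation that only one corresponding multiset need be fixed, so I am free to let $(\widehat{\Delta}^{c})^{c}$ be a copy of $\widehat{\Delta}$ with the induced pairing. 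The rest — associativity of relation composition, compatibility of residual-function composition with the $\xLongrightarrow{}$ notation (Lemma \ref{generals}), and the $F\circ E\subseteq H$ slack that makes the composite closed under the "for some $E'$" existentials — is routine. Once $\mathcal{S}\circ\mathcal{R}$ is shown to be a localized early weak bisimulation, transitivity of $\approx$ is immediate, and with Lemma \ref{reflexivity} and symmetry we conclude Proposition \ref{equivalence}.
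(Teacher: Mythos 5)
Your proof is correct and follows essentially the same route as the paper's: symmetry by transposition, then chaining the two transfer conditions of Lemma~\ref{lammebisimulation} through the intermediate process $Q$, composing the residual functions and the relations $E'$, $F'$, and using $F\circ E\subseteq H$ to discharge the label--location condition via the composed bijection between $\widehat{\Delta}$ and $(\widehat{\Delta}^{c})^{c}$. The only cosmetic difference is that you take $H'=F'\circ E'$ directly, whereas the paper takes the larger set $\{(p',r')\mid(\lambda\lambda_1\lambda'(p'),\sigma\sigma_1\sigma'(r'))\in H\}$ and then verifies $F'\circ E'\subseteq H'$; both choices satisfy the definition of $\mathcal{S}\circ\mathcal{R}$ and the residual condition.
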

\begin{proof}
Obviously, $\mathcal{S}\circ\mathcal{R}$ is symmetric.
Then the proof just follows the definition of the localized early weak bisimulation using the Lemma \ref{lammebisimulation}.

From the hypothesis, let $(P,E,Q)\in \mathcal{R}$, $(Q,F,R)\in \mathcal{S}$ and $(P,H,R)\in\mathcal{S}\circ\mathcal{R}$ with $F\circ E \subseteq H$.

(1) If $P\xLongrightarrow[\lambda,\lambda_1,\lambda^{\prime}]
{\widehat{\Delta}}P^{\prime}$, then $Q\xLongrightarrow[\rho,\rho_1,\rho^{\prime}]
{\widehat{\Delta}^c}Q^{\prime}$ and for any pair of labels $p:\alpha\cdot({\vec L})\in \widehat{\Delta}$ and $q:\alpha\cdot({\vec M})\in \widehat{\Delta}^c$ we have
$(\lambda(p),\rho(q))\in E$ and $(P^{\prime},E^{\prime},Q^{\prime})$ $\in \mathcal{R}$ with $E^{\prime}$ such that if $(p^{\prime}, q^{\prime})\in E^{\prime}$ then $(\lambda\lambda_1\lambda^{\prime}(p^{\prime}),\rho\rho_1
\rho^{\prime}(q^{\prime}))\in E$.
From $(Q,F,R)\in \mathcal{S}$ and $Q\xLongrightarrow[\rho,\rho_1,\rho^{\prime}]
{\widehat{\Delta}^c}Q^{\prime}$, we have $R\xLongrightarrow[\sigma,\sigma_1,\sigma^{\prime}]
{(\widehat{\Delta}^c)^c}R^{\prime}$ and for any pair of labels $q:\alpha\cdot({\vec M})\in \widehat{\Delta}^c$ and $r:\alpha\cdot({\vec N})\in(\widehat{\Delta}^c)^c$ with $(\rho(q),\sigma(r))\in F$ and $(Q^{\prime},F^{\prime},R^{\prime})\in \mathcal{S}$ with $F^{\prime}$ such that if $(q^{\prime},r^{\prime})\in F^{\prime}$ then $(\rho\rho_1\rho^{\prime}(q^{\prime}),\sigma\sigma_1\sigma^{\prime}(r^{\prime}))\in F$.

Therefore, for any pair of labels $p:\alpha\cdot({\vec L})\in \widehat{\Delta}$ and $q:\alpha\cdot({\vec M})\in \widehat{\Delta}^c$ and pair of labels $q:\alpha\cdot({\vec M})\in \widehat{\Delta}^c$ and $r:\alpha\cdot({\vec N})\in(\widehat{\Delta}^c)^c$, we have $(\lambda(p),\sigma(r))$ $\in F\circ E\subseteq H$.
Let $H^{\prime}=\{(p^{\prime},r^{\prime})\in |P^{\prime}|\times|R^{\prime}|\mid (\lambda\lambda_1\lambda^{\prime}(p^{\prime}),
 \sigma\sigma_1\sigma^{\prime}(r^{\prime}))\in H  \}$.
It is obvious that $(P^{\prime},H^{\prime},R^{\prime})$ satisfies the residual conditions from the definition of $H^{\prime}$.
Next, we have to prove $F^{\prime}\circ E^{\prime}\subseteq H^{\prime}$, then show that $(P^{\prime}, H^{\prime}, R^{\prime})\in \mathcal{S}\circ \mathcal{R}$.
If $(p^{\prime},r^{\prime})\in F^{\prime}\circ E^{\prime}$, then there exist $(p^{\prime},q^{\prime})\in E^{\prime}$ and $(q^{\prime},r^{\prime})\in F^{\prime}$. We have to prove $(p^{\prime},r^{\prime})\in H^{\prime}$.

Since $(\lambda\lambda_1\lambda^{\prime}(p^{\prime}),\rho\rho_1\rho^{\prime}
(q^{\prime}))\in E$ and $(\rho\rho_1\rho^{\prime}(q^{\prime}),
\sigma\sigma_1\sigma^{\prime}(r^{\prime}))\in F$, we can get
$(\lambda\lambda_1\lambda^{\prime}(p^{\prime}),
\sigma\sigma_1\sigma^{\prime}(r^{\prime}))\in F\circ E \subseteq H$.

(2) From $(P,E,Q)\in \mathcal{R}$, if $P\xrightarrow[\lambda]{\tau^{\ast}}P^{\prime}$, then we have $Q\xrightarrow[\rho]{\tau^{\ast}}Q^{\prime}$ and $(P^{\prime},E^{\prime},Q^{\prime})\in \mathcal{R}$ with some $E^{\prime}$ such that if $(p^{\prime},q^{\prime})\in E^{\prime}$ then $(\lambda(p^{\prime}),\rho(q^{\prime}))\in E$.
Since $(Q,F,R)\in \mathcal{S}$ and $Q\xrightarrow[\rho]{\tau^{\ast}}Q^{\prime}$, we have
$R\xrightarrow[\sigma]{\tau^{\ast}}R^{\prime}$ and $(Q^{\prime},F^{\prime},R^{\prime})\in \mathcal{S}$ with some $F^{\prime}$ such that if $(q^{\prime},r^{\prime})$ then $(\rho(q^{\prime}),\sigma(r^{\prime}))\in F$. We just get $(P^{\prime},F^{\prime}\circ E^{\prime}, R^{\prime})\in \mathcal{S}\circ \mathcal{R}$ as required. And it is obvious that $F^{\prime}\circ E^{\prime}$ satisfies the residual condition.
\qed\end{proof}
Proof for Proposition \ref{equivalence}.
\begin{proof}
Because $\approx$ is reflexive by Lemma \ref{reflexivity} , symmetric from the definition and transitive by Lemma \ref{transitivity}.
\qed\end{proof}
The proof for Proposition \ref{propsitionbisimulationbarb}.
\begin{proof}
Let $\mathcal{R}$ be a localized early weak bisimulation. Let $\mathcal{B}$ be a binary relation on processes defined by: $(P,Q)\in \mathcal{B}$ if there exists some $E\subseteq |P|\times|Q|$ such that $(P,E,Q)\in \mathcal{R}$. Then we have to prove that $\mathcal{B}$ is a weak bared bisimulation. First, we know that $\mathcal{B}$ is symmetric, because $\mathcal{R}$ is symmetric.\\
(1) Let $(P,Q)\in \mathcal{B}$. If $P\xrightarrow[]{}^{\ast}P^{\prime}$ which is $P\xrightarrow[\lambda]{\tau^{\ast}}P^{\prime}$ for some residual function $\lambda$. Because $\mathcal{R}$ is a localized early weak bisimulation, let $(P,E,Q)\in \mathcal{R}$ with some $E\subseteq |P|\times|Q|$.
We have $Q\xrightarrow[\rho]{\tau^{\ast}}Q^{\prime}$ (i.e. $Q\xrightarrow[]{}^{\ast}Q^{\prime}$) and $(P^{\prime},E^{\prime},Q^{\prime})\in \mathcal{R}$ from Lemma \ref{lammebisimulation}.
Meanwhile, for any $(p^{\prime},q^{\prime})\in E^{\prime}$, the residual function $\rho$ satisfies $(\lambda(p^{\prime}),\rho(q^{\prime}))\in E$ . So we have $(P^{\prime}, Q^{\prime})\in \mathcal{B}$.\\
(2) Let $(P,Q)\in \mathcal{B}$. If $P\xrightarrow[]{}^{\ast}P^{\prime}$ and $P^{\prime}\downarrow_{B}$,
then there exists a transition $P^{\prime}\xrightarrow[\lambda_1^{\prime}]{\widehat{\Delta}}P_1$, where $\widehat{\Delta}$ is a pairwise unrelated multiset of labels of the form $p^{\prime}:fv\cdot({\vec L})$ for each $f\in B$. Since $\mathcal{R}$ is a localized early weak bisimulation, let $(P,E,Q)\in \mathcal{R}$ for some $E\subseteq |P|\times|Q|$.
Then we have $Q\xrightarrow[\rho]{\tau^{\ast}}Q^{\prime}$ (i.e. $Q\xrightarrow[]{}^{\ast}Q^{\prime}$) for some residual function $\rho$, and $E^{\prime}\subseteq |P^{\prime}|\times|Q^{\prime}|$ such that $(P^{\prime},E^{\prime},Q^{\prime})\in \mathcal{R}$.
Because $\mathcal{R}$ is a localized early weak bisimulation and $P^{\prime}\xrightarrow[\lambda_1^{\prime}]{\widehat{\Delta}}P_1$, we have $Q^{\prime}\xLongrightarrow[\rho^{\prime},\rho_1,\rho_2]
{\widehat{\Delta}^c}Q_1$ which means $Q^{\prime}\xrightarrow[\rho^{\prime}]{\tau^{\ast}}Q_1^{\prime}$ (i.e. $Q^{\prime}\xrightarrow[]{}^{\ast}Q_1^{\prime}$) with $Q^{\prime}_1\downarrow_{B}$.
From $P^{\prime}\downarrow_B$ and the transition $P^{\prime}\xrightarrow[\lambda_1^{\prime}]{\widehat{\Delta}}P_1$ satisfying the constraints between $B$ and $\widehat{\Delta}$, we get that
$Q\rightarrow^{\ast}Q_1^{\prime}$ with $Q^{\prime}_1\downarrow_{B}$ as required.
\qed\end{proof}
\subsection{Localized Early Weak Bisimulation Is a Congruence}\label{app2}
%
We use $S\oplus_C P$ to specify the parallel composition of $S$ and $P$ with some $C\subseteq|S|\times|P|$. Similarly, we say that $S\oplus_D Q$ with some relation $D\subseteq |S|\times|Q|$ is a parallel composition of $S$ and $Q$. The relations $C$ and $D$ should satisfy some constraints.
\begin{definition}[\bf Adapted Triple of Relations \cite{Ehrhard2013ccts}]
We say that a triple of relations $(D,D^{\prime},E)$ with $D\subseteq A\times B$, $D^{\prime} \subseteq A\times B^{\prime}$ and $E\subseteq B\times B^{\prime}$ is {\it adapted}, if for any $(a,b,b^{\prime})\in A\times B\times B^{\prime}$ with $(b,b^{\prime})\in E$, $(a,b)\in D$  iff $(a,b^{\prime})\in D^{\prime}$.
\end{definition}

Let $\mathcal{R}$ be a localized relation on processes. We define a new localized relation on processes $\mathcal{R}^{\prime}$, by ensuring that $(U,F,V)\in \mathcal{R}^{\prime}$ and the following conditions are satisfied:
\begin{itemize}
  \item there exist a process $S$, a triple $(P,E,Q)\in \mathcal{R}$, $C\subseteq |S|\times|P|$ and $D\subseteq|S|\times|Q|$ such that $U=S\oplus_C P$ and $V= S\oplus_D Q$,
  \item $(C,D,E)$ is adapted,
  \item $F$ is the relation $(\mbox{Id}_{|S|}\cup E) \subseteq |U|\times|V|$.
\end{itemize}
We call that the relation $\mathcal{R}^{\prime}$ is a {\it parallel extension} of $\mathcal{R}$.
\begin{lemma}[\bf \cite{Ehrhard2013ccts}]\label{lammeextendRsymmtric}
If $R$ is symmetric, then its parallel extension $\mathcal{R}^{\prime}$ is also symmetric.
\end{lemma}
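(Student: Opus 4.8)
The plan is to verify directly that $\mathcal{R}^{\prime}$ is closed under the transposition map $(U,F,V)\mapsto(V,{}^t\!F,U)$, the symmetry of $\mathcal{R}$ being the only nontrivial ingredient. I would take an arbitrary $(U,F,V)\in\mathcal{R}^{\prime}$ and unpack its witnesses according to the definition of parallel extension: a process $S$, a triple $(P,E,Q)\in\mathcal{R}$, relations $C\subseteq|S|\times|P|$ and $D\subseteq|S|\times|Q|$ with $(C,D,E)$ adapted, together with $U=S\oplus_C P$, $V=S\oplus_D Q$ and $F=\mathrm{Id}_{|S|}\cup E$. The candidate witnesses for $(V,{}^t\!F,U)\in\mathcal{R}^{\prime}$ are then the same process $S$, the triple $(Q,{}^t\!E,P)$, and the relations $D$ (now playing the role of the first connection relation, between $S$ and the second process $Q$) and $C$ (playing the role of the second connection relation, between $S$ and $P$).

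I would then check the three defining conditions in turn. First, $(Q,{}^t\!E,P)\in\mathcal{R}$ because $\mathcal{R}$ is symmetric; this is the only point where the hypothesis enters. Second, $V=S\oplus_D Q$ and $U=S\oplus_C P$ hold by the very choice of witnesses, so the shape requirement is met. Third, I would show that $(D,C,{}^t\!E)$ is adapted: given $(a,b,b^{\prime})\in|S|\times|Q|\times|P|$ with $(b,b^{\prime})\in{}^t\!E$, i.e. $(b^{\prime},b)\in E$, applying adaptedness of $(C,D,E)$ to the triple $(a,b^{\prime},b)$ yields $(a,b^{\prime})\in C$ if and only if $(a,b)\in D$, which is exactly what is required of $(D,C,{}^t\!E)$. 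Finally, since transposition distributes over union and fixes the identity relation, ${}^t\!F={}^t(\mathrm{Id}_{|S|}\cup E)=\mathrm{Id}_{|S|}\cup{}^t\!E$, precisely the relation component demanded for $(V,{}^t\!F,U)$. Hence $(V,{}^t\!F,U)\in\mathcal{R}^{\prime}$, and $\mathcal{R}^{\prime}$ is symmetric.

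I do not expect any genuine obstacle: each clause reduces to an elementary set-theoretic identity (transposition commutes with $\cup$, and the identity relation is self-transpose) together with the symmetry of $\mathcal{R}$ and the manifest invariance of the notion of adapted triple under simultaneous transposition of its three components. The only place that warrants a moment's care is the bookkeeping of the order of $C$ and $D$: the process $S$ always sits on the left of $\oplus$, so passing from $(U,F,V)$ to $(V,{}^t\!F,U)$ forces one to swap the two connection relations consistently while keeping $S$ fixed, and to make sure that the domains $|S|\times|P|$ and $|S|\times|Q|$ are matched correctly against the transposed triple in $\mathcal{R}$. Once this is arranged the verification is immediate.
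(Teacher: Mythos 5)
Your proof is correct and is exactly the direct verification one expects: swap the roles of $C$ and $D$, invoke the symmetry of $\mathcal{R}$ to get $(Q,{}^t\!E,P)\in\mathcal{R}$, observe that adaptedness of $(C,D,E)$ is equivalent to adaptedness of $(D,C,{}^t\!E)$, and note ${}^t(\mathrm{Id}_{|S|}\cup E)=\mathrm{Id}_{|S|}\cup{}^t\!E$. The paper does not spell out a proof (it defers to the cited CCTS reference), but your argument is complete and is the intended one.
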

\begin{proposition}\label{propositionparallel}
If $\mathcal{R}$ is a localized early weak bisimulation, then its parallel extension $\mathcal{R}^{\prime}$ is also a localized early weak bisimulation.
\end{proposition}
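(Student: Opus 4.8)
The plan is to verify the two clauses of Lemma~\ref{lammebisimulation} for the localized relation $\mathcal{R}^{\prime}$; its symmetry is already supplied by Lemma~\ref{lammeextendRsymmtric}. So I would fix $(U,F,V)\in\mathcal{R}^{\prime}$, unfolded as $U=S\oplus_C P$, $V=S\oplus_D Q$ with $(P,E,Q)\in\mathcal{R}$, $(C,D,E)$ adapted and $F=\mathrm{Id}_{|S|}\cup E$, and answer a move of $U$ by a move of $V$ (the converse direction being symmetric by Lemma~\ref{lammeextendRsymmtric}).

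The first ingredient is a \emph{decomposition lemma} for parallel composition, which I would state and prove as a preliminary: every transition $S\oplus_C P\xrightarrow[\mu]{\Delta}U^{\prime}$ arises, via rule (Com2), from transitions $S\xrightarrow[\mu_1]{\Delta_S}S^{\prime}$ and $P\xrightarrow[\mu_2]{\Delta_P}P^{\prime}$ with ${\sf PUnrel}(\Delta_S)$, ${\sf PUnrel}(\Delta_P)$, ${\sf PUnrel}(\Delta_S\uplus\Delta_P)$, with $U^{\prime}=S^{\prime}\oplus_{C^{\prime}}P^{\prime}$ for the $C^{\prime}$ dictated by (Com2), and with $\Delta=((\Delta_S\uplus\Delta_P)\backslash\!\!\backslash\Delta_0)\uplus\Delta_m^{\tau}$ where $\Delta_0$ collects the dual label pairs of $\Delta_S$ and $\Delta_P$ sitting at $C$-related locations. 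Combined with the diamond property (Lemma~\ref{diomand}) to peel off the leading and trailing $\tau$'s of a weak move, this lets me reduce both clauses of Lemma~\ref{lammebisimulation} to the following situation: $S\xrightarrow{\Delta_S}S^{\prime}$ and $P\xLongrightarrow{\widehat{\Delta_P}}P^{\prime}$ are given, and I must build a matching weak move of $V$ (in the $\tau^{\ast}$ clause this is the special case in which every visible label of $\Delta_S\uplus\Delta_P$ is consumed in a communication).

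Feeding $P\xLongrightarrow{\widehat{\Delta_P}}P^{\prime}$ into $\mathcal{R}$, Lemma~\ref{lammebisimulation} provides $Q\xLongrightarrow[\rho,\rho_1,\rho^{\prime}]{\widehat{\Delta_P}^{c}}Q^{\prime}$ so that every $p{:}\alpha\cdot(\vec L)\in\widehat{\Delta_P}$ has a partner $q{:}\alpha\cdot(\vec M)\in\widehat{\Delta_P}^{c}$ with $(p,\rho(q))\in E$, together with $(P^{\prime},E^{\prime},Q^{\prime})\in\mathcal{R}$ for an $E^{\prime}$ compatible with the residual functions. I then recompose, on the $V$-side, the \emph{same} $S$-move $S\xrightarrow{\Delta_S}S^{\prime}$ with $Q\xLongrightarrow{\widehat{\Delta_P}^{c}}Q^{\prime}$ through (Com2) over $D$; the $\tau$-steps in $Q$'s weak move stay $\tau$-steps in $S\oplus_D Q$, so the result is again a weak move. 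The three pairwise-unrelatedness side conditions of (Com2) hold because a corresponding multiset has the same underlying symbols, so relatedness is unchanged when $\widehat{\Delta_P}$ is replaced by $\widehat{\Delta_P}^{c}$. Crucially, each $S$-$P$ communication of the original $U$-move is replayed: if $s{:}\beta\cdot(\vec N)\in\Delta_S$ communicated with $p{:}\overline{\beta}\cdot(\vec L)\in\Delta_P$ because $(s,p)\in C$, then the $\mathcal{R}$-partner $q{:}\overline{\beta}\cdot(\vec M)\in\widehat{\Delta_P}^{c}$ satisfies $(p,\rho(q))\in E$, so adaptedness of $(C,D,E)$ gives $(s,\rho(q))\in D$ and the communication fires in $V$ as well; the same argument in the other direction shows no spurious communications appear. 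Hence the externally visible effect on both sides is the same multiset $\widehat{\Delta}$ up to a corresponding multiset, and the required location condition ``$(p,\rho(q))\in F$'' holds since $F$ is $\mathrm{Id}_{|S|}$ on $S$-locations and $E$ elsewhere.

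It remains to exhibit a triple $(U^{\prime},F^{\prime},V^{\prime})\in\mathcal{R}^{\prime}$: take $V^{\prime}=S^{\prime}\oplus_{D^{\prime}}Q^{\prime}$ with $D^{\prime}$ the edge relation generated by (Com2) on the $V$-side, $E^{\prime}$ from the $\mathcal{R}$-matching above, and $F^{\prime}=\mathrm{Id}_{|S^{\prime}|}\cup E^{\prime}$; the residual conditions on $F^{\prime}$ are immediate from those on $E^{\prime}$ and from the definition of the residual function $\mu$ in (Com2). \textbf{The main obstacle is checking that $(C^{\prime},D^{\prime},E^{\prime})$ is again adapted.} This is a case analysis mirroring the two clauses of the edge relation in (Com2): for a $C^{\prime}$-edge (resp. $D^{\prime}$-edge) born from a fresh $S$-$P$ (resp. $S$-$Q$) communication one uses that the two communications form the matched pair identified above and that $E^{\prime}$ relates precisely the residuals of $P$- and $Q$-locations descending from $E$-related locations; for an inherited edge one uses adaptedness of $(C,D,E)$ together with the fact that the residual functions of (Com2) commute with $\lambda$, $\rho$, $\rho_1$, $\rho^{\prime}$. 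Keeping this bookkeeping of locations coherent through a non-sequential, multiset-labelled transition and through the interleaved $\tau$-steps of the weak move on the $Q$-side is the delicate part; the rest is a routine unwinding of the transition rules of Section~\ref{Semantics}.
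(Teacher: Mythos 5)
Your proposal is correct and follows essentially the same route as the paper's proof: decompose the move of $S\oplus_C P$ into an $S$-part and a $P$-part, answer the $P$-part through $\mathcal{R}$, replay the cross-communications on the $Q$-side using adaptedness of $(C,D,E)$, and verify that the new triple $(C^{\prime},D^{\prime},E^{\prime})$ is adapted by exactly the two-pronged case analysis you describe (fresh communication edges via the matched label pair and $E^{\prime}$; inherited edges via adaptedness of $(C,D,E)$ and commutation with the residual functions). The only difference is organizational --- the paper runs the $\tau$-case as three explicit subcases (communication inside $S$, inside $P$, or across $C$) and then treats the visible-multiset case separately with no communications, while you fold all of this into one decomposition lemma for (Com2) --- which is a cosmetic repackaging rather than a different argument.
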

\begin{proof}
We can get that $\mathcal{R}^{\prime}$ is symmetric from Lemma \ref{lammeextendRsymmtric}.

Let $(U,F,V)\in \mathcal{R}^{\prime}$ with $(P,E,Q)\in \mathcal{R}$, $U= S\oplus_C P$, $V= S\oplus_D Q$, $(C,D,E)$ is adapted and $F=\mbox{Id}_{|S|}\cup E$.\\
\textit{\textbf{Case of a $\tau$-transition.}} Given $U\xrightarrow[\lambda]{\tau}U^{\prime}$, we have to show $V\xrightarrow[\rho]{\tau^{\ast}}V^{\prime}$ with $(U^{\prime},F^{\prime},V^{\prime})\in \mathcal{R}^{\prime}$ such that for any $(u^{\prime},v^{\prime})\in F^{\prime}$ implies $(\lambda(u^{\prime}),\rho(v^{\prime}))\in F$. There are three cases for a $\tau${\it -transition} for $U=S\oplus_C P$. Meanwhile, we only focus on canonical processes. For the canonical guarded sum ${\sf cs}(P)$, its prefixed form ${\sf cs}(P)$ is of the form $pre\cdot(Q_1,\ldots,Q_n)+T$,
where $pre$ is a prefix, $T$ is a canonical guarded sum and $Q_1,\ldots,Q_n$ are canonical processes.\\
~\\
(1) The two locations are in $S$, and $S\oplus_C P\xrightarrow[\lambda]{\tau}S^{\prime}\oplus_{C^{\prime}}P$. If $s,t\in |S|$ with $s\frown_{S}t$ such that
${\sf cs}(S(s))= f(x)\cdot{\vec S} + \widetilde{S}$ and ${\sf cs}(S(t))=\overline{f}(e)\cdot{\vec T} + \widetilde{T}$ with ${\sf eval}(e)=v$, where $\widetilde{S}$ and $\widetilde{T}$ are canonical guarded sums. So we have $S\xrightarrow[\mu]{\tau}S^{\prime}$ with
\begin{itemize}
  \item $|S^{\prime}|=(|S|\setminus\{s,t\})\cup \bigcup_{i=1}^n|S_i\{v/x\}|\cup \bigcup_{i=1}^n|T_i|$
  \item and $\frown_{S^{\prime}}$ is the least symmetric relation on $|S^{\prime}|$ such that $s^{\prime}\frown_{S^{\prime}}t^{\prime}$ if $s^{\prime}\frown_{S_i\{v/x\}}t^{\prime}$, or $s^{\prime}\frown_{T_i}t^{\prime}$, or $(s^{\prime},t^{\prime})\in (\bigcup_{i=1}^n |S_i\{v/x\}|)\times(\bigcup_{i=1}^n|T_i|)$, or $\{s^{\prime},t^{\prime}\}\nsubseteq \bigcup_{i=1}^n|S_i\{v/x\}|\cup\bigcup_{i=1}^n|T_i|$ and $\mu(s^{\prime})\frown_S\mu(t^{\prime})$
\end{itemize}
where, $n$ is the arity of $f$, ${\vec S} = (S_1,\ldots,S_n)$ and ${\vec T} = (T_1,\ldots,T_n)$.
Note that $\mu$ is a residual function which is defined as:
$\mu(s^{\prime})= s$ if $s^{\prime}\in \bigcup_{i=1}^n|S_i\{v/x\}|$, $\mu(s^{\prime})= t$ if $s^{\prime}\in \bigcup_{i=1}^n|T_i|$, and $\mu(s^{\prime}) = s^{\prime}$ otherwise.
Then we have $U^{\prime}=S^{\prime}\oplus_{C^{\prime}} P$, where $C^{\prime}=\{(s^{\prime},p)\in |S^{\prime}|\times|P|\mid (\mu(s^{\prime}),p)\in C\}$ and $\lambda = \mu \cup \mbox{Id}_{|P|}$.

Similarly, for $V=S\oplus_D Q$ we have $V\xrightarrow[\rho]{\tau}V^{\prime}=S^{\prime}\oplus_{D^{\prime}}Q$ with $\rho = \mu \cup \mbox{Id}_{|Q|}$, and $D^{\prime}=\{(s^{\prime},q)\in |S^{\prime}|\times|Q|\mid (\mu(s^{\prime}),q)\in D\}$.

Then we have to show that the triple $(C^{\prime},D^{\prime},E)$ is adapted.
Let $s^{\prime}\in |S^{\prime}|$, $p\in |P|$ and $q\in |Q|$ such that $(p,q)\in E$. If $(s^{\prime},p)\in C^{\prime}$ then $(\mu(s^{\prime}),p)\in C$. Since $(C,D,E)$ is adapted, we have $(\mu(s^{\prime}),q)\in D$. So $(s^{\prime},q)\in D^{\prime}$. The converse is similar and we omit it here.

So we have $(U^{\prime},F^{\prime},V^{\prime})\in \mathcal{R}^{\prime}$ where $F^{\prime} = \mbox{Id}_{|S^{\prime}|}\cup E$. Then we check the residual condition. Given $(u^{\prime},v^{\prime})\in F^{\prime}$, either if $u^{\prime}=v^{\prime}\in |S^{\prime}|$ then $\lambda(u^{\prime})=\rho(v^{\prime})\in |S|$, or if $(u^{\prime},v^{\prime})\in E$ then $(\lambda(u^{\prime}),\rho(v^{\prime}))\in E$. So, in both cases we have $(\lambda(u^{\prime}),\rho(v^{\prime}))\in F$.

The symmetric case is similar, where we have $s,t\in |S|$ with $s\frown_{S}t$ such that ${\sf cs}(S(s))= \overline{f}(e)\cdot{\vec S} + \widetilde{S}$ with ${\sf eval}(e)=v$ and ${\sf cs}(S(t))=f(x)\cdot{\vec T} + \widetilde{T}$, where $\widetilde{S}$ and $\widetilde{T}$ are canonical guarded sums. \\
~\\
(2) The two locations are in $P$ and $S\oplus_C P\xrightarrow[\lambda]
{\tau} S\oplus_{C^{\prime}}P^{\prime}$. Let $p,r\in |P|$ with $p\frown_{P}r$ such that ${\sf cs}(P(p))=f(x)\cdot{\vec P}+ \widetilde{P}$ and
${\sf cs}(P(r))= \overline{f}(e)\cdot{\vec R} + \widetilde{R}$ with ${\sf eval}(e) = v$, where $\widetilde{P}$ and $\widetilde{R}$ are canonical guarded sums. So we have $P\xrightarrow[\mu]{\tau}P^{\prime}$ with
\begin{itemize}
  \item $|P^{\prime}|=(|P|\setminus\{p,r\})\cup \bigcup_{i=1}^n|P_i\{v/x\}|\cup \bigcup_{i=1}^n|R_i|$
  \item and $\frown_{P^{\prime}}$ is the least symmetric relation on $|P^{\prime}|$ such that $p^{\prime}\frown_{P^{\prime}}r^{\prime}$ if $p^{\prime}\frown_{P_i\{v/x\}}r^{\prime}$, or $p^{\prime}\frown_{R_i}r^{\prime}$, or $(p^{\prime},r^{\prime})\in (\bigcup_{i=1}^n |P_i\{v/x\}|)\times(\bigcup_{i=1}^n|R_i|)$, or $\{p^{\prime},r^{\prime}\}\nsubseteq \bigcup_{i=1}^n|P_i\{v/x\}|\cup\bigcup_{i=1}^n|R_i|$ and $\mu(p^{\prime})\frown_P\mu(r^{\prime})$
\end{itemize}
where, $n$ is the arity of $f$, ${\vec P} = (P_1,\ldots, P_2)$ and ${\vec R} = (R_1,\ldots,R_n)$.
$\mu$ is a residual function defined as:
$\mu(p^{\prime}) = p$ if $p^{\prime}\in \bigcup_{i=1}^n|P_i\{v/x\}|$, $\mu(p^{\prime}) = r$ if $p^{\prime}\in \bigcup_{i=1}^n|R_i|$, and $\mu(p^{\prime}) = p^{\prime}$ otherwise.
So we have $U^{\prime} = S\oplus_{C^{\prime}} P^{\prime}$ where $C^{\prime}= \{(s,p^{\prime})\in |S|\times|P^{\prime}|\mid (s,\mu(p^{\prime}))\in C\}$ and the residual function $\lambda = \mbox{Id}_{|S|}\cup \mu$.

Since $(P,E,Q)\in \mathcal{R}$, from $P\xrightarrow[\mu]{\tau}P^{\prime}$, we have $Q\xrightarrow[\nu]{\tau^{\ast}}Q^{\prime}$ with $(P^{\prime}, E^{\prime},Q^{\prime})\in \mathcal{R}$ where $E^{\prime}\subseteq|P^{\prime}|\times|Q^{\prime}|$ such that $(p^{\prime},q^{\prime})\in E^{\prime}$ implies $(\mu(p^{\prime}),\nu(q^{\prime}))\in E$. Let $D^{\prime}= \{(s,q^{\prime})\in |S|\times|Q^{\prime}|\mid (s,\nu(q^{\prime}))\in D\}$. From $V^{\prime} = S\oplus_{D^{\prime}}Q^{\prime}$, we have $V\xrightarrow[\rho]{\tau^{\ast}}V^{\prime}$ with $\rho= \mbox{Id}_{|S|}\cup \nu$.

Then we show that the triple $(C^{\prime},D^{\prime},E^{\prime})$ is adapted.
Let $(p^{\prime},q^{\prime})\in E^{\prime}$ and $s\in |S|$. If $(s,p^{\prime})\in C^{\prime}$, then we have $(s,\mu(p^{\prime}))\in C$. Since $(\mu(p^{\prime}),\nu(q^{\prime}))\in E$ and $(C,D,E)$ is adapted, we have $(s,\nu(q^{\prime}))\in D$. So we can get $ (s,q^{\prime})\in D^{\prime}$ from the definition of $D^{\prime}$. The other direction is similar.

So we have $(U^{\prime},F^{\prime}, V^{\prime})\in \mathcal{R}^{\prime}$ where $F^{\prime}=\mbox{Id}_{|S|}\cup E^{\prime}\subseteq |U^{\prime}|\times|V^{\prime}|$. Then we have to check the residual condition. Given $(u^{\prime},v^{\prime})\in F^{\prime}$, either $u^{\prime}=v^{\prime}\in |S|$ and then $\lambda(u^{\prime})=\rho(v^{\prime}) = u^{\prime}$, or $u^{\prime}\in |P^{\prime}|$, $v^{\prime}\in |Q^{\prime}|$ and $(u^{\prime},v^{\prime})\in E^{\prime}$ and then $(\lambda(u^{\prime}),\rho(v^{\prime}))=(\mu(u^{\prime}),\nu(v^{\prime}))\in E$. So we get $(\lambda(u^{\prime}),\rho(v^{\prime}))\in F$.

The symmetric case is similar, where $p,r\in |P|$ with $p\frown_{P}r$ such that ${\sf cs}(P(p))=\overline{f}(e)\cdot{\vec P}+ \widetilde{P}$  with ${\sf eval}(e) = v$ and ${\sf cs}(P(r))= f(x)\cdot{\vec R} + \widetilde{R}$, where $\widetilde{P}$ and $\widetilde{R}$ are canonical guarded sums.\\
~\\
(3) One of the locations from $S$ and the other from $P$, i.e. $S\oplus_C P \xrightarrow[\lambda]{\tau}S^{\prime}\oplus_{C^{\prime}}P^{\prime}$. Let $p\in |P|$ and $s\in |S|$ with $(s,p)\in C$. And we have ${\sf cs}(P(p))= f(x)\cdot{\vec P}+ \widetilde{P}$ and ${\sf cs}(S(s))= \overline{f}(e)\cdot{\vec S} + \widetilde{S}$ with ${\sf eval}(e) = v$, where $\widetilde{P}$ and $\widetilde{S}$ are canonical guarded sums.
Then we have $U^{\prime}= S^{\prime}\oplus_{C^{\prime}}P^{\prime}$ with $S^{\prime} = S[\oplus{\vec S}/s]$ and $P^{\prime} = P[\oplus{\vec P}\{v/x\}/p]$, where $n$ is the arity of $f$, ${\vec S} = (S_1,\ldots,S_n)$ and ${\vec P}\{v/x\} = (P_1\{v/x\},\ldots,P_n\{v/x\})$.

Let $C^{\prime}\subseteq|S^{\prime}|\times|P^{\prime}|$, and $(s^{\prime},p^{\prime})\in C^{\prime}$ if
$(\lambda(s^{\prime}),\lambda(p^{\prime}))\in C$
where, residual function $\lambda:|U^{\prime}|=|S^{\prime}|\cup|P^{\prime}|\rightarrow |U|= |S|\cup |P|$, and it is defined as follows: $\lambda(s^{\prime})= s$ if $s^{\prime}\in\bigcup_{i=1}^n |S_i|$, $\lambda(p^{\prime}) = p$ if $p^{\prime}\in \bigcup_{i=1}^n |P_i\{v/x\}|$ and $\lambda(u^{\prime})= u^{\prime}$ if $u^{\prime} \in (|S^{\prime}|\setminus \bigcup_{i=1}^n |S_i|) \cup (|P^{\prime}|\setminus \bigcup_{i=1}^n|P_i\{v/x\}|)$.

We have $P\xrightarrow[\lambda]{p:\alpha\cdot({\vec L})}P^{\prime}$ with $\alpha = fv$  such that $v$ is just the value received from $S$, and $L_i = |P_i\{v/x\}|$ for $i\in\{1,\ldots,n\}$. By $(P,E,Q)\in \mathcal{R}$, we have $Q\xLongrightarrow[\rho,\rho_1,\rho^{\prime}]{q:\alpha\cdot({\vec M})}Q^{\prime}$ with $(p,\rho(q))\in E$ and $(P^{\prime},E^{\prime},Q^{\prime})\in \mathcal{R}$ with $E^{\prime}$ such that $(p^{\prime},q^{\prime})\in E^{\prime}$ implies $(\lambda(p^{\prime}),\rho\rho_1\rho^{\prime}(q^{\prime}))\in E$.

We can decompose $Q\xLongrightarrow[\rho,\rho_1,\rho^{\prime}]{q:\alpha\cdot({\vec M})}Q^{\prime}$ as
$$ Q\xrightarrow[\rho]{\tau^{\ast}}Q_1\xrightarrow[\rho_1]
{q:\alpha\cdot({\vec M})}Q_1^{\prime}\xrightarrow[\rho^{\prime}]
{\tau^{\ast}}Q^{\prime}.$$
We have $V\xrightarrow[\mu]{\tau^{\ast}}V_1$ with $V_1=S\oplus_{D_1}Q_1$, $D_1=\{(s,q_1)\in |S|\times |Q_1|\mid (s,\rho(q_1))\in D\}$ and $\mu= \mbox{Id}_{|S|}\cup \rho$.

Since $(p,\rho(q))\in E$, $(s,p)\in C$ and $(C,D,E)$ is adapted, we have $(s,\rho(q))\in D$. So $(s,q)\in D_1$ from the definition of $D_1$.
We have $q\in Q_1$ with ${\sf cs}(Q_1(q))= f(x)\cdot{\vec R}+ \widetilde{R}$ and ${\sf cs}(S(s))= \overline{f}(e)\cdot{\vec S}+ \widetilde{S}$ with ${\sf eval}(e)=v$ where $v$ is the same value as the part of derivation for $S(s)$ in $U\xrightarrow[\lambda]{\tau}U^{\prime}$. Then we have $M_i = |R_i\{v/x\}|$ for $i\in \{1,\dots,n\}$.
We can get  $V_1\xrightarrow[\theta]{\tau}V_1^{\prime}=S^{\prime}\oplus_{D_1^{\prime}}
Q_1^{\prime}$ where $D_1^{\prime}\subseteq|S^{\prime}|\times|Q_1^{\prime}|$ which is defined as follows: given $(s^{\prime},q_1^{\prime})\in |S^{\prime}|\times|Q_1^{\prime}|$, we have $(s^{\prime},q_1^{\prime})\in D_1^{\prime}$ if $(\theta(s^{\prime}),\theta(q_1^{\prime}))\in D_1$,
and the residual function $\theta$ is defined by $\theta(v_1^{\prime})=v_1^{\prime}$ if $v_1^{\prime}\in (|S|\setminus\bigcup_{i=1}^n|S_i|)\cup
(|Q_1|\setminus\bigcup_{i=1}^n|R_i\{v/x\}|)$, $\theta(s^{\prime})=s$ if $s^{\prime}\in \bigcup_{i=1}^n|S_i|$ and $\theta(q_1^{\prime})= q_1$ if $q_1^{\prime}\in \bigcup_{i=1}^n|R_i\{v/x\}|$.

We also have $\theta(q_1^{\prime})=\rho_1(q_1^{\prime})$ for any $q_1^{\prime}\in |Q_1^{\prime}|$.

From $Q_1^{\prime}\xrightarrow[\rho^{\prime}]{\tau^{\ast}}Q^{\prime}$, we have $V_1^{\prime}=S^{\prime}\oplus_{D_1^{\prime}}Q_1^{\prime}
\xrightarrow[\mu^{\prime}]{\tau^{\ast}}V^{\prime}= S^{\prime}\oplus_{D^{\prime}}Q^{\prime}$ where $\mu^{\prime}=\mbox{Id}_{|S^{\prime}|}\cup \rho^{\prime}$ and
$D^{\prime}=\{(s^{\prime},q^{\prime})\in |S^{\prime}|\times |Q^{\prime}|\mid (s^{\prime},\rho^{\prime}(q^{\prime}))\in D_1^{\prime}\}$. So, we have $V\xrightarrow[\mu\theta\mu^{\prime}]{\tau^{\ast}}V^{\prime}$.
Let $F^{\prime}\subseteq |U^{\prime}|\times|V^{\prime}|$ be defined by $F^{\prime}= \mbox{Id}_{|S^{\prime}|}\cup E^{\prime}$. It is clear that
$(u^{\prime},v^{\prime})\in F^{\prime}$ implies $(\lambda(u^{\prime}),\mu\theta\mu^{\prime}(v^{\prime}))\in F$, since $(p^{\prime},q^{\prime})\in E^{\prime}$ implies $(\lambda(p^{\prime}),\rho\rho_1\rho^{\prime}(q^{\prime}))\in E$ and $\theta$ and $\rho_1$ coincide on $|Q_1^{\prime}|$.

Then we have to prove $(U^{\prime},F^{\prime},V^{\prime})\in \mathcal{R}^{\prime}$. To prove it, we can just show that the triple $(C^{\prime},D^{\prime},E^{\prime})$ is adapted. Let $s^{\prime}\in |S^{\prime}|$, $p^{\prime}\in |P^{\prime}|$ and $q^{\prime}\in |Q^{\prime}|$ with $(p^{\prime},q^{\prime})\in E^{\prime}$ (i.e. particularly $(\lambda(p^{\prime}),\rho\theta\rho^{\prime}(q^{\prime}))\in E$).

If $(s^{\prime},p^{\prime})\in C^{\prime}$, then we have to show that $(s^{\prime},q^{\prime})\in D^{\prime}$ which is $(s^{\prime},\rho^{\prime}(q^{\prime}))\in D_1^{\prime}$. Referring to the definition of $C^{\prime}$, we analyse it in three cases:
\begin{itemize}
  \item First case: $(s^{\prime},p^{\prime})\in (\bigcup_{i=1}^n|S_i|)\times(\bigcup_{i=1}^n|P_i\{v/x\}|)$.
       If $\rho^{\prime}(q^{\prime})\in \bigcup_{i=1}^n M_i = \bigcup_{i=1}^n |R_i\{v/x\}|$, then we have $(s^{\prime},\rho^{\prime}(q^{\prime}))\in D_1^{\prime}$ as required.
       If $\rho^{\prime}(q^{\prime})\notin \bigcup_{i=1}^n M_i$, then, for definition of $D_1^{\prime}$, we need to prove $(\theta(s^{\prime}),\rho\theta\rho^{\prime}(q^{\prime}))
      =(s,\rho\rho^{\prime}(q^{\prime}))\in D$. Since $(p^{\prime},q^{\prime})\in E^{\prime}$, we have $(\lambda(p^{\prime}),\rho\theta\rho^{\prime}(q^{\prime}))= (p,\rho\rho^{\prime}(q^{\prime}))\in E$. We also have $(s,p)\in C$, and hence $(s,\rho\rho^{\prime}(q^{\prime}))\in D$ as required for $(C,D,E)$ is adapted.
  \item Second case: $s^{\prime}\notin \bigcup_{i=1}^n|S_i|$. In order to prove $(s^{\prime},q^{\prime})\in D^{\prime}$, it suffices to prove that $(\theta(s^{\prime}),\rho\theta\rho^{\prime}(q^{\prime})) = (s^{\prime},\rho\theta\rho^{\prime}(q^{\prime}))\in D$. And we have $(s^{\prime},p^{\prime})\in C^{\prime}$ and $s^{\prime}\notin \bigcup_{i=1}^n |S_i|$, hence $(\lambda(s^{\prime}),\lambda(p^{\prime}))=
      (s^{\prime},\lambda(p^{\prime}))\in C$. Since $(p^{\prime},q^{\prime})\in E^{\prime}$, we have $(\lambda(p^{\prime}),\rho\theta\rho^{\prime}(q^{\prime}))\in E$. Thus we have $(s^{\prime},\rho\theta\rho^{\prime}(q^{\prime}))\in D$ since $(C,D,E)$ is adapted.
  \item Third case: $s^{\prime}\in \bigcup_{i=1}^n|S_i|$ and $p^{\prime}\notin \bigcup_{i=1}^n|P_i\{v/x\}|$, so we have $(\lambda(s^{\prime}),\lambda(p^{\prime}))=(s,p^{\prime})\in C$ (by definition of $C^{\prime}$ and $(s^{\prime},p^{\prime})\in C^{\prime}$). Since $(p^{\prime},q^{\prime})\in E^{\prime}$, if $\rho^{\prime}(q^{\prime})\notin \bigcup_{i=1}^n M_i$. To prove $(s^{\prime},\rho^{\prime}(q^{\prime}))\in D_1^{\prime}$, it suffices to check that $(\theta(s^{\prime}),\rho\theta\rho^{\prime}(q^{\prime})) = (s,\rho\rho^{\prime}(q^{\prime}))\in D$. It holds since $(C,D,E)$ is adapted, $(s,p^{\prime})\in C$ and $(p^{\prime},\rho\rho^{\prime}(q^{\prime}))\in E$ for $(p^{\prime},q^{\prime})\in E^{\prime}$.
      If $\rho^{\prime}(q^{\prime})\in \bigcup_{i=1}^n M_i$, then we have $(s^{\prime},\rho^{\prime}(q^{\prime}))\in (\bigcup_{i=1}^n|S_i|)\times (\bigcup_{i=1}^n M_i)$, so $(s^{\prime},\rho^{\prime}(q^{\prime}))\in D_1^{\prime}$.
\end{itemize}

Now we prove the converse. If $(s^{\prime},q^{\prime})\in D^{\prime}$, i.e. $(s^{\prime},\rho^{\prime}(q^{\prime}))\in D_1^{\prime}$, we have to show $(s^{\prime},p^{\prime})\in C^{\prime}$. We also consider three cases.
\begin{itemize}
  \item First case: $s^{\prime}\in (\bigcup_{i=1}^n|S_i|)$ and $\rho^{\prime}(q^{\prime})\in (\bigcup_{i=1}^n M_i)=(\bigcup_{i=1}^n |R_i\{v/x\}|)$. If $p^{\prime}\in (\bigcup_{i=1}^n L_i) = (\bigcup_{i=1}^n |P_i\{v/x\}|)$, then $(s^{\prime},p^{\prime})\in C^{\prime}$ as required.
      If $p^{\prime}\notin (\bigcup_{i=1}^n L_i)$, then $p^{\prime}\notin (\bigcup_{i=1}^n |P_i\{v/x\}|)$.
      Since $(p^{\prime},q^{\prime})\in E^{\prime}$, we have $(\lambda(p^{\prime}),\rho\theta\rho^{\prime}(q^{\prime}))\in E$, i.e. $(p^{\prime},\rho(q))\in E$. Since we have $(s^{\prime},q^{\prime})\in D^{\prime}$, we have $(\theta(s^{\prime}),\rho\theta\rho^{\prime}(q^{\prime}))\in D$, i.e.
      $(s,\rho(q))\in D$. So we have $(s,p^{\prime})\in C$ as $(C,D,E)$ is adapted. Since $(\lambda(s^{\prime}),\lambda(p^{\prime})) = (s,p^{\prime})\in C$ and $p^{\prime}\notin (\bigcup_{i=1}^n L_i)$, we have $(s^{\prime},p^{\prime})\in C^{\prime}$.

  \item Second case: $s^{\prime}\notin \bigcup_{i=1}^n|S_i|$. For the definition of $C^{\prime}$, it suffices to prove $(\lambda(s^{\prime}),\lambda(p^{\prime}))=
      (s^{\prime},\lambda(p^{\prime}))\in C$. Since $(s^{\prime},q^{\prime})\in D^{\prime}$ and $s^{\prime}\notin \bigcup_{i=1}^n|S_i|$, we have $(\theta(s^{\prime}),\rho\theta\rho^{\prime}(q^{\prime}))=
      (s^{\prime},\rho\theta\rho^{\prime}(q^{\prime}))\in D$. Since $(p^{\prime},q^{\prime})\in E^{\prime}$ we have $(\lambda(p^{\prime}),\rho\theta\rho^{\prime}(q^{\prime}))\in E$, and hence $(s^{\prime},\lambda(p^{\prime}))\in C$ for $(C,D,E)$ is adapted.
  \item Third case: $s^{\prime}\in \bigcup_{i=1}^n |S_i|$  and $\rho^{\prime}(q^{\prime})\notin\bigcup_{i=1}^n M_i$. If $p^{\prime}\notin\bigcup_{i=1}^n L_i$, then, to check $(s^{\prime},p^{\prime})\in C^{\prime}$, it suffices to prove
      $(\lambda(s^{\prime}),\lambda(p^{\prime}))=(s,p^{\prime})\in C$.
      We have $(s^{\prime},q^{\prime})\in D^{\prime}$ and hence $(\theta(s^{\prime}),\rho\theta\rho^{\prime}(q^{\prime})) =
      (s,\rho\rho^{\prime}(q^{\prime}))\in D$. Since $(p^{\prime},q^{\prime})\in E^{\prime}$, we have $(\lambda(p^{\prime}),\rho\theta\rho^{\prime}(q^{\prime}))=
      (p^{\prime},\rho\rho^{\prime}(q^{\prime}))\in E$ and hence $(s,p^{\prime})\in C$ for $(C,D,E)$ is adapted. If $p^{\prime}\in \bigcup_{i=1}^n L_i$, we have $(s^{\prime},p^{\prime})\in C^{\prime}$
      since $(s^{\prime},p^{\prime})\in (\bigcup_{i=1}^n|S_i|)\times(\bigcup_{i=1}^n|P_i\{v/x\}|)$.
\end{itemize}

The other case is similar, where $p\in |P|$ and $s\in |S|$ such that $(s,p)\in C$, ${\sf cs}(P(p))= \overline{f}(e)\cdot{\vec P}+ \widetilde{P}$ with ${\sf eval}(e) = v$ and ${\sf cs}(S(s))= f(x)\cdot{\vec S}+ \widetilde{S}$, and $\widetilde{P}$ and $\widetilde{S}$ are canonical guarded sums.\\
~\\
\textit{\textbf{Case of $\widehat{\Delta}$ transition.}} Since $(U,F,V)\in \mathcal{R}^{\prime}$ and $(P,E,Q)\in \mathcal{R}$, we have to prove that if $U\xrightarrow[\lambda]{\widehat{\Delta}}U^{\prime}$, then $V\xLongrightarrow[\rho,\rho_1,\rho^{\prime}]
{\widehat{\Delta}^c}V^{\prime}$ and $(U^{\prime},F^{\prime},V^{\prime})\in \mathcal{R}^{\prime}$.

Now, we assume that $S\oplus_C P\xrightarrow[\lambda]{\widehat{\Delta}}
S^{\prime}\oplus_{C^{\prime}}P^{\prime}$. Because we have considered the communications between $S$ and $P$ in the first part above. Here, we only consider the observable transitions from $S$ and $P$ without any communication. So, we have the following two transitions
$S\xrightarrow[\lambda_1]{\widehat{\Delta}_1}S^{\prime}$ and $P\xrightarrow[\lambda_2]{\widehat{\Delta}_2}P^{\prime}$ for $S$ and $P$, respectively, where $\widehat{\Delta}_1\uplus\widehat{\Delta}_2=\widehat{\Delta}$. For the residual functions, we have
$\lambda:|S^{\prime}|\cup |P^{\prime}|\rightarrow |S|\cup |P|$, $\lambda_1:|S^{\prime}|\rightarrow|S|$ and  $\lambda_2:|P^{\prime}|\rightarrow|P|$ with  $\lambda(s^{\prime})=\lambda_1(s^{\prime})$ for any $s^{\prime}\in S^{\prime}$ and $\lambda(p^{\prime})=\lambda_2(p^{\prime})$ for any $p^{\prime}\in P^{\prime}$.

Since $(P,E,Q)\in \mathcal{R}$  and $P\xrightarrow[\lambda_2]{\widehat{\Delta}_2}P^{\prime}$
we have $Q\xLongrightarrow[\rho,\rho_2,\rho^{\prime}]
{\widehat{\Delta}_2^c}Q^{\prime}$
for any $p:\alpha\cdot({\vec L})\in \widehat{\Delta}_2$ there exists $q:\alpha\cdot({\vec M})\in \widehat{\Delta}_2^{c}$ such that $(p,\rho(q))\in E$ and $(P^{\prime},E^{\prime},Q^{\prime})\in \mathcal{R}$ for some $E^{\prime}\subseteq|P^{\prime}|\times|Q^{\prime}|$
such that if $(p^{\prime},q^{\prime})\in E^{\prime}$ then $(\lambda_2(p^{\prime}),\rho\rho_2\rho^{\prime}(q^{\prime}))\in E$.

Therefore, we have $V\xLongrightarrow[\nu,\nu_1,\nu^{\prime}]
{\widehat{\Delta}^c}V^{\prime}$
where $\widehat{\Delta}^c = \widehat{\Delta}_1 \uplus \widehat{\Delta}_2^c$, $V^{\prime}=S^{\prime}\oplus_{D^{\prime}}Q^{\prime}$ with $D^{\prime}=\{(s^{\prime},q^{\prime})\in |S^{\prime}|\times|Q^{\prime}|\mid (\nu_1(s^{\prime}),\rho\nu_1\rho^{\prime}(q^{\prime}))\in D\}$.
$V\xLongrightarrow[\nu,\nu_1,\nu^{\prime}]
{\widehat{\Delta}^c}V^{\prime}$ can be decomposed as
$$S\oplus_D Q \xrightarrow[\nu]{\tau^{\ast}}S\oplus_{D_1}Q_1 \xrightarrow[\nu_1]{\widehat{\Delta}^c}
S^{\prime}\oplus_{D_1^{\prime}}Q_1^{\prime}
\xrightarrow[\nu^{\prime}]{\tau^{\ast}}
S^{\prime}\oplus_{D^{\prime}}Q^{\prime}$$
with $\nu=\mbox{Id}_{|S|}\cup \rho$ and $\nu^{\prime}=\mbox{Id}_{|S^{\prime}|}\cup \rho^{\prime}$. $\nu_1:|S^{\prime}|\cup |Q_1^{\prime}|\rightarrow |S|\cup |Q_1|$ with $\nu_1(s^{\prime})=\lambda_1(s^{\prime})$ for any $s^{\prime}\in |S^{\prime}|$ and $\nu_1(q_1^{\prime})= \rho_2(q_1^{\prime})$ for any $q_1^{\prime}\in |Q_1^{\prime}|$.

Let $F^{\prime}\subseteq |U^{\prime}|\times |V^{\prime}|$ be defined as $F^{\prime}= \mbox{Id}_{|S^{\prime}|}\cup E^{\prime}$.
For  $(u^{\prime},v^{\prime})\in F^{\prime}$, if $u^{\prime}\in |S^{\prime}|$ or $v^{\prime}\in |S^{\prime}|$, we must have $u^{\prime}=v^{\prime}$. If $u^{\prime}\notin |S^{\prime}|$ and $v^{\prime}\notin |S^{\prime}|$ then we have $(u^{\prime},v^{\prime})\in E^{\prime}$. Hence $(\lambda(u^{\prime}),\nu\nu_1\nu^{\prime}(v^{\prime}))=
(\lambda_2(u^{\prime}),\rho\rho_2\rho^{\prime}(v^{\prime}))\in E$.

Moreover, the triple $(C^{\prime},D^{\prime},E^{\prime})$ is adapted: let $(p^{\prime},q^{\prime})\in E^{\prime}$ and $s^{\prime}\in |S^{\prime}|$. We have $(\lambda_2(p^{\prime}),\rho\rho_2\rho^{\prime}(q^{\prime}))\in E$. We have $(s^{\prime},p^{\prime})\in C^{\prime}$ iff $(\lambda(s^{\prime}),\lambda(p^{\prime}))\in C$ iff $(\lambda_1(s^{\prime}),\lambda_2(p^{\prime}))\in C$ iff $(\lambda_1(s^{\prime}),\rho\rho_2\rho^{\prime}(q^{\prime}))\in D$ iff
$(\nu\nu_1\nu^{\prime}(s^{\prime}),\nu\nu_1\nu^{\prime}(q^{\prime}))\in D $ iff $(s^{\prime},q^{\prime})\in D^{\prime}$.
\qed\end{proof}
The proof for Theorem \ref{bisimilationCongruence}.
\begin{proof}
Let $\mathcal{R}$ be a localized early weak bisimulation. Let $R$ be a $Y${ -context}. We define a new localized relation denoted by $R[\mathcal{R}/Y]$:
\begin{itemize}
  \item if $Y = R$ then $R[\mathcal{R}/Y] = \mathcal{R}$
  \item if $Y\neq R$ then we make $(P^{\prime},E^{\prime},Q^{\prime})\in R[\mathcal{R}/Y]$ if there exist $(P,E,Q)\in \mathcal{R}$, $E^{\prime}= \mbox{Id}_{|R|}$, $P^{\prime}=R[P/Y]$ and $Q^{\prime}= R[Q/Y]$. Since $R\neq Y$, it is obvious that $|P^{\prime}|=|Q^{\prime}|= |R|$.
\end{itemize}
We define a localized relation $\mathcal{R}^+$ as the union of $\mathcal{I}$ (the set of all triples $(U,E,U)$ where $U\in {\sf Proc}$ and $E = \mbox{Id}_{|U|}$), the parallel extension $\mathcal{R}^{\prime}$ of $\mathcal{R}$ and all the relations of the shape $R[\mathcal{R}/Y]$ for all $Y$-context $R$. Then what we have to do is to prove that $\mathcal{R}^+$ is a localized early weak bisimulation. It is easy to check that $\mathcal{R}^+$ is symmetric.

Let $(U,F,V)\in \mathcal{R}^+$ and we have to analyse the two following situations:
\begin{itemize}
  \item[(1)] $U\xrightarrow[\mu]{\tau}U^{\prime}$
  \item[(2)] or $U\xrightarrow[\mu]{\widehat{\Delta}}U^{\prime}$
\end{itemize}
In each case, we analyse all the possible transitions from the challenger, and then we show that there are corresponding transitions of the defender to respond to the challenger. We consider all the possible relations from $\mathcal{R}^+$. We analyse the two cases in details.
\begin{itemize}
  \item For case (1) we must show that $V\xrightarrow[\nu]{\tau^{\ast}}V^{\prime}$ with $(U^{\prime},F^{\prime}, V^{\prime})\in \mathcal{R}^+$ for some $F^{\prime}\subseteq |U^{\prime}|\times|V^{\prime}|$ such that for any $(u^{\prime},v^{\prime})\in F^{\prime}$, we have $(\mu(u^{\prime}),\nu(v^{\prime}))\in F$.
  \item For case (2) we must show that $V\xLongrightarrow[\nu,\nu_1,\nu^{\prime}]
      {\widehat{\Delta}^c}V^{\prime}$ with $(U^{\prime},F^{\prime},V^{\prime})\in \mathcal{R}^+$ and for any pair of labels $p:\alpha\cdot({\vec L})\in \widehat{\Delta}$ and $q:\alpha\cdot({\vec M})\in \widehat{\Delta}^c$, $(p,\nu(q))\in F$.
      And for some $F^{\prime}\subseteq|U^{\prime}|\times|V^{\prime}|$ such that for any $(u^{\prime},v^{\prime})\in F^{\prime}$, we have $(\mu(u^{\prime}),\nu\nu_1\nu^{\prime}(v^{\prime}))\in F$.
\end{itemize}
Now, we analyse the possible relations in $\mathcal{R}^+$.

The case where $(U,F,V)\in \mathcal{I}$ is trivial.

If $(U,F,V)\in \mathcal{R}^{\prime}$, we can directly apply  Proposition \ref{propositionparallel} to both the cases.

Assume that $(U,F,V)\in R[\mathcal{R}/Y]$ for some $Y$-context $R$, so that $U=R[P/Y]$ and $V= R[Q/Y]$ with $(P,E,Q)\in \mathcal{R}$ such that $F=E$ if $R = Y$ and $F=\mbox{Id}_{|R|}$ otherwise. If $ R = Y$, we can directly use the fact that $\mathcal{R}$ is a localized weak bisimulation to show that $V^{\prime}$ and $F^{\prime}$ satisfy the required conditions.

At last we consider $R\neq Y$, so we have $F = \mbox{Id}_{|R|}$.
In this paper, we only focus on canonical processes.
For the canonical guarded sum ${\sf cs}(P)$, its prefixed form ${\sf cs}(P)$ is of the form $pre\cdot(Q_1,\ldots,Q_n)+T$,
where $pre$ is a prefix, $T$ is a canonical guarded sum and $Q_1,\ldots,Q_n$ are canonical processes.

By the definition of the $Y$-context, there is exactly one $r\in |R|$ such that $Y$ occurs free in $R(r)$. And ${\sf cs}(R(r))= f(x)\cdot{\vec R} + \widetilde{R}$ and $Y$ does not occur free in $\widetilde{R}$ and occurs exactly in one of the processes $R_1,\ldots,R_n$. Without loss of generality we assume that $R_1$ is a $Y$-context and $Y$ does not occur free in $R_2,\ldots,R_n$.

We assume that $R_1\neq Y$. In both cases (1) and (2), we have $U^{\prime}= R^{\prime}[P/Y]$ with $R\xrightarrow[\mu]{\tau}R^{\prime}$ (case (1)) or $R\xrightarrow[\mu]{\widehat{\Delta}}R^{\prime}$ (case (2)). Let $V^{\prime}=R^{\prime}[Q/Y]$.
In case (1), we have $V\xrightarrow[\mu]{\tau}V^{\prime}$ and in case (2) we have $V\xrightarrow[\mu]{\widehat{\Delta}^c}V^{\prime}$. Since $Y\neq R^{\prime}$, we have $(U^{\prime},\mbox{Id}_{|R^{\prime}|}, V^{\prime})\in \mathcal{R}^{+}$ for $(P,E,Q)\in \mathcal{R}$. The residual condition is obviously satisfied in both cases.

At last we assume that $R_1=Y$.

\textit{\textbf{For case (1)}}. There are two cases to consider the locations $s,t\in |U|$ involved in the transition $U\xrightarrow[\mu]{\tau}U^{\prime}$. The case $s\neq r$ and $t\neq r$ is similar to the case above where $R_1 \neq Y$. The other two cases are  the case $s=r$ (hence $t\neq r$) and the symmetric case $t=r$ (hence $s\neq r$). We just consider the case $s=r$.

So $U(t)=R(t)=\overline{f}(e)\cdot{\vec T}+\widetilde{T}$ with ${\sf eval}(e)=v$ and the guarded sum $R(r)$ has an unique summand involved in the transition $U\xrightarrow[\mu]{\tau}U^{\prime}$ and this summand is of the form $f(x)\cdot{\vec S}$ (called active summand in the text that follows).

If the active summand is $f(x)\cdot{\vec R}$ then we have $U(r)= f(x)\cdot(P,R_2,\ldots,R_n) + \widetilde{S}$. $U^{\prime}$ can be written as $U^{\prime}=R^{\prime}\oplus_C P\{v/x\}$ for some process $R^{\prime}$ which can be defined using only $R$ and $C\subseteq |R^{\prime}|\times |P\{v/x\}|$. $R^{\prime}$ is defined as follows:
\begin{itemize}
  \item $|R^{\prime}|= (|R|\setminus\{t,r\})\cup \bigcup_{i=2}^n |R_i\{v/x\}|\cup \bigcup_{i=1}^n |T_i|$
  \item and $\frown_{R^{\prime}}$ is the least symmetric relation on $|R^{\prime}|$ such that $r^{\prime}\frown_{R^{\prime}} t^{\prime}$
      if $r^{\prime}\frown_{R_i\{v/x\}}t^{\prime}$ for some $i\in \{2,\ldots,n\}$,
      or $r^{\prime}\frown_{T_i}t^{\prime}$ for some $i\in\{1,\ldots,n\}$,
      or $(r^{\prime},t^{\prime})\in (\bigcup_{i=2}^n |R_i\{v/x\}|)\times (\bigcup_{i=1}^n |T_i|)$,
      or $\mu(r^{\prime})\frown_R\mu(t^{\prime})$ with $r^{\prime}\notin \bigcup_{i=2}^n |R_i\{v/x\}|$ or $t^{\prime}\notin \bigcup_{i=1}^n|T_i|$. 

\end{itemize}
where the residual function $\mu:|U^{\prime}|\rightarrow |U|$ is given by $\mu(r^{\prime})= r$ if $r^{\prime}\in |P\{v/x\}|\cup\bigcup_{i=2}^n|R_i\{v/x\}|$,
$\mu(r^{\prime})= t$ if $r^{\prime}\in \bigcup_{i=1}^n|T_i|$, and $\mu(r^{\prime})=r^{\prime}$ otherwise.

The relation $C$ is defined as follows: given $(r^{\prime},p)\in |R^{\prime}|\times |P\{v/x\}|$,
one has $(r^{\prime},p)\in C$ if $r^{\prime}\in |T_1|$, or $r^{\prime}\notin \bigcup_{i=2}^n|R_i\{v/x\}|\cup \bigcup_{i=1}^n|T_i|$ and $r^{\prime}\frown_R r$.

Let $V^{\prime} = R^{\prime}\oplus_D Q\{v/x\}$, where $ D\subseteq |R^{\prime}|\times |Q\{v/x\}|$ is defined similarly in the way for $C$ by replacing $P\{v/x\}$ by $Q\{v/x\}$. From $(p,q)\in E$ and the definitions of $C$ and $D$, we have $(r^{\prime},p)\in C$ iff $(r^{\prime},q)\in D$. So $(C,D,E)$ is adapted. We can make the reduction on $V$, such that $V\xrightarrow[\nu]{\tau}V^{\prime}$
for the residual function $\nu$ which is defined like $\mu$ by replacing $P\{v/x\}$ by $Q\{v/x\}$. We have $(U^{\prime},F^{\prime},V^{\prime})\in \mathcal{R}^{\prime}\subseteq \mathcal{R}^{+}$ where $F^{\prime}= \mbox{Id}_{|R^{\prime}|}\cup E$. If $(u^{\prime},v^{\prime})\in F^{\prime}$, then we have $\mu(u^{\prime})=\nu(v^{\prime})$, that is $(\mu(u^{\prime}),\mu(v^{\prime}))\in F$ so that the condition on residuals holds.

If the active summand is not $f(x)\cdot{\vec R}$, then we have $V\xrightarrow[\mu]{\tau}U^{\prime}$ (both $P$ and $Q$ are discarded in the corresponding reductions, respectively). We just finish the proof because of $(U^{\prime},\mbox{Id}_{|U^{\prime}|},U^{\prime})\in \mathcal{I}\subseteq\mathcal{R}^{\prime}$.

\textit{\textbf{For case (2)}}. In the transition $U\xrightarrow[\mu]{\widehat{\Delta}}U^{\prime}$, if $r$ is not mentioned in $\widehat{\Delta}$, then we have $R[P/Y]=U\xrightarrow[\mu]{\widehat{\Delta}}
U^{\prime}=R^{\prime}[P/Y]$. We also have $R[Q/Y]=V\xrightarrow[\mu]{\widehat{\Delta}}V^{\prime}=R^{\prime}[Q/Y]$ so $(U^{\prime},\mbox{Id}_{|R^{\prime}|},V^{\prime})\in R^{\prime}[\mathcal{R}/Y]\subseteq \mathcal{R}^{+}$ and the residual condition is satisfied.

If $r:\alpha\cdot(\vec{L})$ is mentioned in $\widehat{\Delta}$, then there exists exactly one of the summands of the guarded sum $R(r)$ being the prefixed process preforming action $\alpha$ in $U\xrightarrow[\mu]{\widehat{\Delta}}U^{\prime}$.

The case where the active summand is not $f(x)\cdot(P,R_2,\ldots,R_n)$ is similar to the previous case, because $P$ is discarded in the transition.

For $R[P/Y]$, we can rewrite it as $R_1\oplus_{C_1} R(r)$, where $R_1(s)=R(s)$ for $s\in (|R|\setminus\{r\})$, and $(s,r)\in C_1$ if $s\frown_R r$.

If the active summand is $f(x)\cdot(P,R_2,\ldots, R_n)$, then
$U=R[P/Y]= R_1\oplus_{C_1} R(r) \xrightarrow[\mu]{\widehat{\Delta}}U^{\prime}=R_1^{\prime}\oplus_{C_1^{\prime}}
(P\{v/x\}\oplus R_2\{v/x\}\cdots\oplus R_n\{v/x\})$ for $v\in {\bf Val}$.

We rewrite $U^{\prime}$ as $R^{\prime}\oplus_C P\{v/x\}$ for $v\in {\bf Val}$, where $R^{\prime}$ is defined by
\begin{itemize}
  \item $|R^{\prime}|= |R_1^{\prime}|\cup \bigcup_{i=2}^n|R_i\{v/x\}|$
       and $\frown_{R^{\prime}}$ is the least symmetric relation on $|R^{\prime}|$ such that $r^{\prime}\frown_{R^{\prime}}t^{\prime}$ if $r^{\prime}\frown_{R_i\{v/x\}}t^{\prime}$ for some $i\in \{2,\ldots, n\}$ or $\mu(r^{\prime})\frown_R \mu(t^{\prime})$.
\end{itemize}

The relation $C\subseteq |R^{\prime}|\times |P\{v/x\}|$ is defined by $(r^{\prime},q)\in C$ if $r^{\prime}\notin \bigcup_{i=2}^n|R_i|$ and $\mu(r^{\prime})\frown_R r$.

Then we have
$V=R[Q/Y]= R_1\oplus_{C_1} R(r) \xrightarrow[\mu]{\widehat{\Delta}^c}V^{\prime}=R_1^{\prime}\oplus_{C_1^{\prime\prime}}
(Q\{v/x\}\oplus R_2\{v/x\}\ldots\oplus R_n\{v/x\})$ for $v\in {\bf Val}$.

We rewrite $V^{\prime}$ as $R^{\prime}\oplus_D Q\{v/x\}$ where $R^{\prime}$ is defined as above and  $D$ is defined like $C$ by replacing $P\{v/x\}$ by $Q\{v/x\}$.
Then we have $(U^{\prime},F^{\prime},V^{\prime})\in \mathcal{R}^{\prime}\subseteq\mathcal{R}^{+}$ where $F^{\prime}=\mbox{Id}_{|R^{\prime}|}\cup E$
since $(C,D,E)$ is adapted. Moreover the condition on residuals is obviously satisfied.

The symmetric case that ${\sf cs}(R(r))= \overline{f}(e)\cdot{\vec R} + \widetilde{R}$ with ${\sf eval}(e)=v$ and $Y$ does not occur free in $\widetilde{R}$ and occurs exactly in one of the processes $R_1,\ldots,R_n$, is similar. So, we show the fact that $\mathcal{R}^{+}$ is a localized early weak bisimulation.

We can now prove that $\approx$ is a congruence. Assume that $P\approx Q$ and let $R$ be a $Y$-context. Let $E\subseteq|P|\times|Q|$ and let $\mathcal{R}$ be a localized early weak bisimulation such that $(P,E,Q)\in \mathcal{R}$.
Then we have $(R[P/Y],\mbox{Id}_{|R|}, R[Q/Y])\in R[\mathcal{R}/Y]\subseteq \mathcal{R}^{+}$ and hence $R[P/Y]\approx R[Q/Y]$ since $\mathcal{R}^{+}$ is a localized early weak bisimulation.
\qed\end{proof}

\subsection{Proof for Completeness}


Given a process $P\in{\bf Pr}$, we say that ${\sf Sort}(P)\subseteq \Sigma$ is the sort of $P$. ${\sf Sort}$ is the least function, extracting symbols from processes, such that: ${\sf Sort}(X)= {\sf Sort}(\ast)= {\sf Sort}({\bf 0})=\emptyset$, 
${\sf Sort}(P\backslash I)={\sf Sort}(P)\setminus I $, ${\sf Sort}(G\langle\Phi\rangle)=\bigcup_{p\in |G|} {\sf Sort}(\Phi(p))$, ${\sf Sort}({\bf if}~b~{\bf then}~P~{\bf else}~Q)={\sf Sort}(P)\cup {\sf Sort}(Q)$, ${\sf Sort}(P+Q)={\sf Sort}(P)\cup {\sf Sort}(Q)$, ${\sf Sort}(f(x)\cdot(P_1,\ldots,P_n))= {\sf Sort}(\overline{f}(e)\cdot(P_1,\ldots,P_n)) = \{f\}\cup\bigcup_{i=1}^n{\sf Sort}(P_i)$, and ${\sf Sort}(P)\subseteq {\sf Sort}(A({\vec v}))$ with $A({\vec x})\stackrel{\rm def}{=}P$.

In the proof below, we write $\sum_{i\in I} P_i$ to mean the sum of all $P_i$, for $i\in I$. In a statement, we say that a co-symbol $f$ is fresh if $\overline{f}$ does not occur in the sort of the processes in the statement, and similarly for fresh symbols.

\begin{lemma}\label{bisim_barb_star}
For any process $P$, $P\oplus_C \ast \approx P$ and $P\oplus_C \ast \stackrel{\bullet}{\approx} P $ for any $C\subseteq |P|\times |\ast|$.
\end{lemma}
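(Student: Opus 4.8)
The plan is to exhibit one symmetric localized relation that contains the triple $(P\oplus_C\ast,\,{\rm Id}_{|P|},\,P)$ for every $P$ and every $C$, to show it is a weak bisimulation, and then to read off the weak barbed part from Proposition~\ref{propsitionbisimulationbarb}.

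Write $p_\ast$ for the unique vertex of the idle process $\ast$, so that $|P\oplus_C\ast|=|P|\cup\{p_\ast\}$, $(P\oplus_C\ast)(p_\ast)=\ast$, and ${\sf cs}(\ast)=\ast$. Put
$$\mathcal{S}\ =\ \{\,(P\oplus_C\ast,\ {\rm Id}_{|P|},\ P)\,\}\ \cup\ \{\,(P,\ {\rm Id}_{|P|},\ P\oplus_C\ast)\,\},$$
the union being taken over all $P\in{\sf Proc}$ and all $C\subseteq|P|\times\{p_\ast\}$; this relation is symmetric by construction. The heart of the argument is the following transition-correspondence: since $\ast$ is of neither input nor output form, the component located at $p_\ast$ can never perform an (Input) or (Output) step and can never participate in a (Com1)/(Com2) communication (nor in an (R-React) reduction); consequently every derivation of $P\oplus_C\ast$ is, vertex for vertex, a derivation of $P$ with the single inert vertex $p_\ast$ carried along, the residual functions agree off $p_\ast$ and fix $p_\ast$, and every state reachable from $P\oplus_C\ast$ has the shape $P'\oplus_{C'}\ast$, where $P'$ is the matching state reachable from $P$ and $C'\subseteq|P'|\times\{p_\ast\}$ records the (possibly enlarged, via the inheritance clause $(c)$) set of dead edges to $p_\ast$.

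Granting this, the verification is mechanical. Fix once and for all the identity choice of corresponding multiset, $\widehat{\Delta}^c:=\widehat{\Delta}$; this is legitimate since all visible labels emitted by $P\oplus_C\ast$ sit at vertices of $|P|$. For a challenge $P\oplus_C\ast\xrightarrow[\lambda]{\tau}U'$ we have $U'=P'\oplus_{C'}\ast$ and a matching $P\xrightarrow[\lambda']{\tau}P'$ with $\lambda$ and $\lambda'$ equal on $|P'|$; answering with this single $\tau$-step and $E'={\rm Id}_{|P'|}$ gives $(U',E',P')\in\mathcal{S}$ and the residual condition. For a challenge $P\oplus_C\ast\xrightarrow[\lambda]{\widehat{\Delta}}U'$ we answer with $P\xLongrightarrow[{\rm Id},\lambda',{\rm Id}]{\widehat{\Delta}}P'$ and $E'={\rm Id}_{|P'|}$: the label condition holds because each $p\!:\!\alpha\cdot(\vec L)\in\widehat{\Delta}$ is matched by itself in $\widehat{\Delta}^c$ and $p\in|P|$, and the residual condition again reduces to $\lambda'(p')\in|P|$. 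The converse clauses, where $P$ moves, are handled by the other half of the correspondence (lifting a step of $P$ to $P\oplus_C\ast$ by letting $p_\ast$ idle) together with the transpose triples in $\mathcal{S}$. Hence $P\oplus_C\ast\approx P$. For the barbed statement one may simply invoke Proposition~\ref{propsitionbisimulationbarb}; alternatively, note directly that $(P\oplus_C\ast)\!\downarrow_B$ iff $P\!\downarrow_B$ (the extra vertex $p_\ast$ carries no barb and the restriction set $I$ is unchanged) and that reductions correspond as above, so that $\mathcal{S}$ read as a plain binary relation is a weak barbed bisimulation; either way $P\oplus_C\ast\stackrel{\bullet}{\approx}P$.

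The main obstacle is making the transition-correspondence rigorous: one has to go through the rules of Fig.~\ref{LTS1}, Definition~\ref{Com2} and Fig.~\ref{Internal_Reduction} and check that $p_\ast$ is never essentially involved, and, more delicately, that the bookkeeping of the dead edges (the passage from $C$ to $C'$ forced by the inheritance clauses of (R-React) and of the LTS rules) always leaves the reduct in the form $P'\oplus_{C'}\ast$ with residual functions matching those of the corresponding move of $P$. This is precisely the degenerate instance of the case analysis already performed in the proof of Proposition~\ref{propositionparallel}, with the ``context'' process there taken to be $\ast$, so no genuinely new reasoning is required.
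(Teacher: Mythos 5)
Your proof is correct and follows essentially the same route as the paper: the paper's own proof constructs exactly the relation $\mathcal{R}=\{(P\oplus_C\ast,\mathrm{Id}_{|P|},P),(P,\mathrm{Id}_{|P|},P\oplus_C\ast)\mid C\subseteq|P|\times|\ast|\}$ and asserts that it is easily checked to be a localized early weak bisimulation, with the barbed case handled "similarly". You have merely spelled out the inertness of the $\ast$-vertex and the residual bookkeeping that the paper leaves implicit.
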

\begin{proof}
Let $|P\oplus_C \ast| = |P|\cup\{l\}$ for some $l\notin |P|$. And we can build the localized relation $\mathcal{R}=\{(P\oplus_C \ast, \mathrm{Id}_{|P|}, P), (P,\mathrm{Id}_{|P|},P\oplus_C \ast)\mid C\subseteq |P|\times |\ast|\}$. It is easy to show $\mathcal{R}$ is a localized early weak bisimulation. It is similar to show $P\oplus_C \ast \stackrel{\bullet}{\approx} P$.
\qed\end{proof}
Proof for Lemma \ref{two_bisim}.
\begin{proof}
The inclusion $\approx\subseteq \approx_{\omega}$ is easy. One proves that $\approx\subseteq \approx_n$ for all $n$, using the fact that $\approx$ is a weak bisimulation.

For the converse, we show that the set
$$\mathcal{R}\stackrel{\rm def}{=}\{(P,E,Q)\mid (P,E,Q)\in \approx_{\omega}, E\subseteq|P|\times|Q|\}$$
is a weak bisimulation. Take $(P,E,Q)\in \mathcal{R}$ and $E\subseteq|P|\times|Q|$, and suppose $P\xrightarrow[\lambda]{\tau}P^{\prime}$. We need a matching transition from $Q$. For all $n$, as
$(P,E,Q)\in\approx_{n+1}$, there is $Q_n$ such that $Q\xrightarrow[\rho]{\tau^{\ast}}Q_n$, $(P^{\prime},E^{\prime},Q_n)\in\approx_n$ and $E^{\prime}\subseteq|P^{\prime}|\times|Q_n|$. From the definitions of $\approx$ and $\approx_{\omega}$, it is easy to check that the residual conditions are satisfied. Because the LLTS is finitely-branching, the set $\{Q_i\mid Q\xrightarrow[\rho]{\tau^{\ast}}Q_i\}$ is finite. Thus, there is at least a $Q_i$ such that $(P^{\prime},E^{\prime},Q_i)\in\approx_n$ with $E^{\prime}\subseteq |P^{\prime}|\times|Q_i|$ holds for infinitely many $n$. As the relations $\{\approx_n\}_n$ are decreasing by definition, $(P^{\prime},E^{\prime},Q_i)\in\approx_n$ holds for all $n$. Hence $(P^{\prime},E^{\prime},Q_i)\in\approx_{\omega}$ and $(P^{\prime},E,Q_i)\in\mathcal{R}$.

The case $P\xlongrightarrow[\lambda]{\widehat{\Delta}}P^{\prime}$ is similar.
\qed\end{proof}
Proof for Theorem \ref{completeness}.
\begin{proof}
{\it We have to pay more attentions to the $n$-ary symbols and non-sequential semantics.}
We need to consider two cases $P\xrightarrow[\lambda]{\tau}P^{\prime}$ and $P\xrightarrow[\lambda]{\Delta}P^{\prime}$ by induction on $n$. The cases for $Q$ are symmetric.

{\bf (Single input)} We first consider the case when the single-labelled transition is an input.
When $n=0$ there is nothing to prove. Suppose $n>0$.
Then there exist $p:f_i i\cdot({\vec L})$, $\lambda$ and $P^{\prime}$ such that $P\xrightarrow[\lambda]{p:f_i i\cdot({\vec L})}P^{\prime}$, but $(P^{\prime},|P^{\prime}|\times|Q^{\prime}|,Q^{\prime})\notin \approx_{n-1}$ for all $Q^{\prime}$ such that $Q\xLongrightarrow[\sigma,\sigma_1,\sigma^{\prime}]{q:f_i i\cdot({\vec H})}Q^{\prime}$. Since the LLTS is image-finite, $\{Q^{\prime}\mid Q\xLongrightarrow[\sigma,\sigma_1,\sigma^{\prime}]{q:f_i i\cdot({\vec H})}Q^{\prime}\}=\{Q_j\mid j\in J\}$ for some finite set $J$.
Since we use $(P^{\prime},|P^{\prime}|\times|Q^{\prime}|,Q^{\prime})$, the residual conditions are obviously satisfied.
Appealing to the induction hypothesis, for each $j\in J$,
$$P\mid (M + \overline{g}(0)\cdot(\ast))\mid D  ~ /\!\!\!\!\!\!\stackrel{\bullet}{\approx} Q^{\prime} \mid (M + \overline{g}(0)\cdot(\ast))\mid D. \eqno{\rm (IH1)}$$
(Here we let $R$ be of the form $(M + \overline{g}(0)\cdot(\ast))\mid D$.)

Let $\overline{d}$, $\overline{c^{\prime}}$ and $\overline{c_j}$ ($j\in J $) be fresh co-symbols, and set
$$D\stackrel{\rm def}{=}\overline{d}(0)\cdot(D),$$
$$M\stackrel{\rm def}{=}\overline{f_i}(i)\cdot(N,\ast,\ldots,\ast), \hbox{ and }$$
$$N\stackrel{\rm def}{=}\overline{c^{\prime}}(0)\cdot(\ast)+\sum_{j\in J}d(x)\cdot(M_j + \overline{c_j}(0)\cdot(\ast)).$$
Because we focus on canonical processes, we use $D$ to interact with $M + \overline{g}(0)\cdot(\ast)$ to generate internal reductions.
We show that $(M + \overline{g}(0)\cdot(\ast))\mid D$ is as required by $R$.
So suppose that $g$ is fresh.
Let $Q^{\prime}$ be any process such that $Q\xrightarrow[\rho_0]{\tau^{\ast}}Q^{\prime}$ (i.e. $Q\xrightarrow[]{}^{\ast} Q^{\prime}$).
Let $A\stackrel{\rm def}{=}P\mid (M+\overline{g}(0)\cdot(\ast))\mid D$ and $B\stackrel{\rm def}{=} Q^{\prime}\mid (M+\overline{g}(0)\cdot(\ast))\mid D$, and suppose, for a contradiction, such that $A\stackrel{\bullet}{\approx}B$. We have
$$A\xrightarrow[]{}A^{\prime}\stackrel{\rm def}{=}P^{\prime}\mid (N\oplus \ast\oplus\cdots\oplus\ast)\mid D.$$

Since $A\stackrel{\bullet}{\approx}B$, there is $B^{\prime}$ such that $B\xrightarrow[]{}^{\ast}B^{\prime}\stackrel{\bullet}{\approx} A^{\prime}$. Since $A^{\prime}\downarrow_{\{\overline{g}\}}$ does not hold, $B^{\prime}\downarrow_{\{\overline{g}\}}$ should not hold either. The only way this is possible is if $J\neq \emptyset$ and
$$B^{\prime}\stackrel{\rm def}{=}Q_j\mid (N\oplus \ast\oplus\cdots\oplus\ast)\mid D$$
for some $j\in J$ and $Q^{\prime}\xrightarrow[\rho_1]{q:f_i i\cdot({\vec H})}Q_j$. We now exploit the inductive hypothesis on $P^{\prime}$, $Q_j$ and $M_j$. We have
$$A^{\prime}\xrightarrow[]{}A^{\prime\prime}_j\stackrel{\rm def}{=}P^{\prime}\mid ((M_j + \overline{c_j}(0)\cdot(\ast))\oplus \ast \cdots\oplus \ast)\mid D$$
through internal reduction between $N$ and $D$.

Since $B^{\prime}\stackrel{\bullet}{\approx} A^{\prime}$, there is $B_j^{\prime\prime}$ such that $B^{\prime}\xrightarrow[]{}^{\ast}B_j^{\prime\prime}\stackrel{\bullet}{\approx}A^{\prime\prime}_j$. Without loss of generality, since $A^{\prime\prime}_j\downarrow_{\{\overline{c_j}\}}$ we must have $B^{\prime\prime}_j\downarrow_{\{\overline{c_j}\}}$. The only possibility is
$$B^{\prime\prime}_j\stackrel{\rm def}{=}Q_j^{\prime}\mid ((M_j + \overline{c_j}(0)\cdot(\ast))\oplus\ast\cdots\oplus\ast)\mid D$$
for some $Q_j^{\prime}$ such that $Q_j\xrightarrow[\rho_2]{\tau^{\ast}}Q^{\prime}_j$. Thus we have $Q\xLongrightarrow[\rho_0,\rho_1,\rho_2]{q:f_i i\cdot({\vec H})}Q^{\prime}_j$.

From Lemma \ref{bisim_barb_star}, we have
$$A^{\prime\prime}_j\stackrel{\bullet}{\approx} P^{\prime}\mid (M_j + \overline{c_j}(0)\cdot(\ast))\mid D$$
and
$$B^{\prime\prime}_j\stackrel{\bullet}{\approx} Q_j^{\prime}\mid (M_j + \overline{c_j}(0)\cdot(\ast))\mid D.$$
Since $\stackrel{\bullet}{\approx}$ is an equivalence relation, we have
$$P^{\prime}\mid (M_j + \overline{c_j}(0)\cdot(\ast))\mid D \stackrel{\bullet}{\approx}  Q_j^{\prime}\mid (M_j + \overline{c_j}(0)\cdot(\ast))\mid D.$$
By induction hypothesis ${\rm (IH1)}$, it is a contradiction. Hence $B~/\!\!\!\!\!\!\stackrel{\bullet}{\approx}A$ as required.

{\bf (Single output)} The case for single-labelled output transition $P\xrightarrow[\lambda]{p:\overline{f_i} i\cdot({\vec L})}P^{\prime}$ is similar.

{\bf (Single $\tau$-transition)} For the case $\tau$-transition $P\xrightarrow[\lambda]{\tau}P^{\prime}$,
let $\overline{d}$, $\overline{g}$ and $\overline{c_j}$ ($j\in J $) be fresh co-symbols, and
we set
$$D\stackrel{\rm def}{=}\overline{d}(0)\cdot(D),$$
$$M\stackrel{\rm def}{=}\sum_{j\in J}d(x)\cdot(M_j + \overline{c_j}(0)\cdot(\ast)),{\rm and}$$
$$R\stackrel{\rm def}{=} (M + \overline{g}(0)\cdot(\ast))\mid D.$$
Then the proof is similar.

{\bf (Multi-labelled transition)}
For $n=0$ there is nothing to prove. Suppose $n>0$.
Then there exist $\lambda$, $\widehat{\Delta}$ and $P^{\prime}$ such that $P\xrightarrow[\lambda]{\widehat{\Delta}}P^{\prime}$, but  $(P^{\prime},|P^{\prime}|\times|Q^{\prime}|,Q^{\prime})\notin\approx_{n-1}$ for all $Q^{\prime}$ such that $Q\xLongrightarrow[\sigma,\sigma_1,\sigma^{\prime}]{\widehat {\Delta}^c}Q^{\prime}$. Since the LLTS is image-finite,
$\{Q^{\prime}\mid Q\xLongrightarrow[\sigma,\sigma_1,\sigma^{\prime}]{\widehat {\Delta}^c}Q^{\prime}\}=\{Q_j\mid j\in J\}$.
Since we use $(P^{\prime},|P^{\prime}|\times|Q^{\prime}|,Q^{\prime})$, the residual conditions are obviously satisfied.
Appealing to the induction hypothesis, for each $j\in J$,
$$P\mid R   ~ /\!\!\!\!\!\!\stackrel{\bullet}{\approx} Q^{\prime} \mid R . \eqno{\rm (IH2)}$$
(Here we let $R$ be of the form $((M_1+\overline{g_1}(0)\cdot(\ast))\oplus\cdots\oplus (M_k+\overline{g_k}(0)\cdot(\ast)))\mid D$.)

Let $k={\sf size}(\widehat{\Delta})$.
Let $\overline{d}$, $\overline{c_i^{\prime}}$ and $\overline{c_{ij}}$ ($j\in J $ and $i\in\{1,\ldots,k\}$) be fresh co-symbols, and set
$$D\stackrel{\rm def}{=}\overline{d}(0)\cdot(D),$$
$$M_i\stackrel{\rm def}{=}\overline{f_i}(i)\cdot(N_i,\ast,\ldots,\ast), \hbox{ and }$$
$$N_i\stackrel{\rm def}{=}\overline{c_i^{\prime}}(0)\cdot(\ast)+\sum_{j\in J}d(x)\cdot(M_{ij} + \overline{c_{ij}}(0)\cdot(\ast)).$$
Set $R\stackrel{\rm def}{=}((M_1+\overline{g_1}(0)\cdot(\ast))\oplus\cdots\oplus (M_k+\overline{g_k}(0)\cdot(\ast)))\mid D$, and $g_i$ is fresh for $i\in\{1,\ldots,k\}$. Because we focus on canonical processes, we use $D$ to interact processes $(M_i+\overline{g_i}(0)\cdot(\ast))$, $i\in\{1,\ldots,k\}$, to generate internal reductions.

Let $Q^{\prime}$ be any process such that $Q\xrightarrow[\rho_0]{\tau^{\ast}}Q^{\prime}$ (i.e. $Q\xrightarrow[]{}^{\ast} Q^{\prime}$).
Let $A\stackrel{\rm def}{=}P\mid R\mid D$ and $B\stackrel{\rm def}{=} Q^{\prime}\mid R \mid D$, and suppose, for a contradiction, such that $A\stackrel{\bullet}{\approx}B$.
From $P\xrightarrow[\lambda]{{\widehat \Delta}}P^{\prime}$ and ${\widehat \Delta}$ is pairwise unrelated, through Diamond Property (Lemma \ref{diomand}), we have
$$A\xrightarrow[]{}^{\ast}A^{\prime}\stackrel{\rm def}{=}P^{\prime}\mid ((N_1\oplus \ast\oplus\cdots\oplus\ast)\oplus\cdots\oplus(N_k\oplus \ast\oplus\cdots\oplus\ast))\mid D$$
and $A^{\prime}\downarrow_{W_1}$ with $W_1=\{\overline{c_1^{\prime}},\ldots, \overline{c_k^{\prime}}\}$, but not $A^{\prime}\downarrow_{\{\overline{g_i}\}}$ ($i\in\{1,\ldots,k\}$). Since $A\stackrel{\bullet}{\approx}B$, there is $B^{\prime}$ such that $B\xrightarrow[]{}^{\ast}B^{\prime}$ and $A^{\prime}\stackrel{\bullet}{\approx}B^{\prime}$. It must be that $B^{\prime}\downarrow_{W_1}$ but not $B^{\prime}\downarrow_{\{\overline{g_i}\}}$ ($i\in\{1,\ldots,k\}$). The only way this is possible if $J\neq \emptyset$ and
$$B^{\prime}\stackrel{\rm def}{=}Q_j\mid ((N_1\oplus \ast\cdots\oplus\ast)\oplus\cdots\oplus(N_k\oplus \ast\cdots\oplus\ast))\mid D$$
for some $j\in J$ and $Q^{\prime}\xrightarrow[\rho_1]{\widehat{\Delta}^c}Q_j$ through Diamond Property.
We have $A^{\prime}\xrightarrow[]{}^{\ast}A_j^{\prime\prime}$ through internal reductions between $N_i$ ($i\in\{1,\ldots,k\}$) and $D$, and
$$A_j^{\prime\prime}\stackrel{\rm def}{=}P^{\prime}\mid (((M_{1j} + \overline{c_{1j}}(0)\cdot(\ast))\oplus \ast\cdots\oplus\ast)\oplus\cdots\oplus(((M_{kj} + \overline{c_{kj}}(0)\cdot(\ast)))\oplus \ast\cdots\oplus\ast))\mid D$$
and $A_j^{\prime\prime}\downarrow_{W_2}$ with $W_2=\{\overline{c_{1j}},\ldots, \overline{c_{kj}}\}$, but not $A_j^{\prime\prime}\downarrow_{\{\overline{c_i^{\prime}}\}}$ ($i\in\{1,\ldots,k\}$).

Since $B^{\prime}\stackrel{\bullet}{\approx} A^{\prime}$, there is $B_j^{\prime\prime}$ such that $B^{\prime}\xrightarrow[]{}^{\ast}B_j^{\prime\prime}\stackrel{\bullet}{\approx}A^{\prime\prime}_j$. Without loss of generality, since $A^{\prime\prime}_j\downarrow_{W_2}$ we must have $B^{\prime\prime}_j\downarrow_{W_2}$. The only possibility is
$$B^{\prime\prime}_j\stackrel{\rm def}{=}Q_j^{\prime}\mid (((M_{1j} + \overline{c_{1j}}(0)\cdot(\ast))\oplus \ast\cdots\oplus\ast)\oplus\cdots\oplus(((M_{kj} + \overline{c_{kj}}(0)\cdot(\ast)))\oplus \ast\cdots\oplus\ast)) \mid D$$
for some $Q_j^{\prime}$ such that $Q_j\xrightarrow[\rho_2]{\tau^{\ast}}Q^{\prime}_j$. Thus we have $Q\xLongrightarrow[\rho_0,\rho_1,\rho_2]{\widehat{\Delta}^c}Q^{\prime}_j$.

From Lemma \ref{bisim_barb_star}, we have
$$A^{\prime\prime}_j\stackrel{\bullet}{\approx} P^{\prime}\mid ((M_{1j} + \overline{c_{1j}}(0)\cdot(\ast))\oplus\cdots\oplus((M_{kj} + \overline{c_{kj}}(0)\cdot(\ast))))\mid D$$
and
$$B^{\prime\prime}_j\stackrel{\bullet}{\approx} Q_j^{\prime}\mid ((M_{1j} + \overline{c_{1j}}(0)\cdot(\ast))\oplus\cdots\oplus((M_{kj} + \overline{c_{kj}}(0)\cdot(\ast))))\mid D.$$
Since $\stackrel{\bullet}{\approx}$ is an equivalence relation, we have
$$P^{\prime}\mid ((M_{1j} + \overline{c_{1j}}(0)\cdot(\ast))\oplus\cdots\oplus((M_{kj} + \overline{c_{kj}}(0)\cdot(\ast))))\mid D\stackrel{\bullet}{\approx} $$
$$ Q_j^{\prime}\mid ((M_{1j} + \overline{c_{1j}}(0)\cdot(\ast))\oplus\cdots\oplus((M_{kj} + \overline{c_{kj}}(0)\cdot(\ast))))\mid D.$$
By induction hypothesis ${\rm (IH2)}$, it is a contradiction. Hence $B~/\!\!\!\!\!\!\stackrel{\bullet}{\approx}A$ as required.
\qed\end{proof}

\subsection{Proofs in Section \ref{Discussion}}
Proof for Proposition \ref{tree_automata}.
\begin{proof}
If $t=f(x)\cdot(t_1,\ldots,t_n)$ and $(Q,f(x),(Q_1,\ldots,Q_n))\in \mathcal{T}$, then one has $\langle\mathcal{A}\rangle_Q \mid {\sf proc}(t)\xrightarrow[]{} (\langle\mathcal{A}\rangle_{Q_1}\oplus \cdots\oplus \langle\mathcal{A}\rangle_{Q_n})\mid ({\sf proc}(t_1)\oplus\cdots\oplus {\sf proc}(t_n))$. Therefore, we can make a choice for the communications happening in the next step by $(\langle\mathcal{A}\rangle_{Q_1}\mid {\sf proc}(t_1))\oplus\cdots\oplus(\langle\mathcal{A}\rangle_{Q_n}\mid {\sf proc}(t_n))$. Since $t$ is recognized by $\mathcal{A}$ at state $Q$, this kind of choice is always possible at each step. And we can inductively check that $\langle\mathcal{A}\rangle_{Q_i}\mid {\sf proc}(t_i)$ can reduce to an idle process.
\qed\end{proof}

\subsection{Proofs in Section \ref{Exmaples}}
Proof for Proposition \ref{ABP_prop}.
\begin{proof}
Since $Q$ does not affect the reductions of $A\mid P_1$ and $P_2$, we only consider the reductions of $A\mid P_1$ and $P_2$. There are two cases for the reduction
\begin{itemize}
  \item either $(A\mid P_1(t,b))\mid P_2([],b)\xrightarrow[]{}^{\ast} (A\mid P_1(t_1,b))\mid P_2(t_2,b)$ with the concatenation of $t_1$ and $t_2$ equals to $t$, and this is a some stage of the reduction,
  \item or $(A\mid P_1(t,b))\mid P_2([],b)\xrightarrow[]{}^{\ast}(A\mid {\bf 0})\mid Succ(t)$, and this is the final successful stage of the reduction.
\end{itemize}

Induction on the length of $t_1$, we only show some cases and other cases are similar:
\begin{itemize}
  \item If ${\sf null}(t_1)$ is satisfied, then a possible reduction sequence is: sending the $End$ message, receiving acknowledge from $P_2$, passing the conditional evaluation in the sender and in the receiver respectively, then reducing to the final successful stage.
  \item If ${\sf null}(t_1)$ is not satisfied, then a possible reduction sequence is: sending the head of $t_1$, passing the conditional evaluation of the receiver, receiving acknowledge from the receiver, then reducing to the next stage of the reduction.
\end{itemize}
\qed\end{proof}
\end{document}